\DeclareRobustCommand*\cal{\@fontswitch\relax\mathcal}
\definecolor{codehighlight}{RGB}{220, 255, 220}
\definecolor{operator}{RGB}{0, 102, 187}
\newcommand{\T}{\mathcal{T}}
\newcommand{\A}{\mathcal{A}}
\newcommand{\sem}[1]{\llbracket#1\rrbracket}
\newcommand{\ignore}[1]{}
\newcommand{\angl}[1]{\langle#1\rangle}
\newcommand{\abs}[1]{\lvert#1\rvert}
\newcommand{\returnVar}{\textsf{ret}}
\newcommand{\jLj}{\textsf{jl\_joins}}
\newcommand{\lJluW}{\textsf{jl\_unlock\_after\_write}}
\newcommand{\luW}{\textsf{unlock\_after\_write}}
\newcommand{\lTlW}{\textsf{last\_tl\_write}}
\newcommand{\lTlL}{\textsf{last\_tl\_lock}}
\newcommand{\lTlU}{\textsf{last\_tl\_unlock}}
\newcommand{\lW}{\textsf{last\_write}}
\newcommand{\lV}{\textsf{last\_value}}
\newcommand\restr[2]{{\left.\kern-\nulldelimiterspace#1\vphantom{|}\right|_{#2}}}
\newcommand*{\declarecommand}{%
  \@star@or@long\declare@command
}
\newcommand*{\declare@command}[1]{%
  \provide@command{#1}{}%
  \renew@command{#1}%
}
\newcommand{\single}{\textsf{single}}
\newcommand{\start}{u_0}
\newcommand{\act}{\textsf{act}}
\newcommand{\init}{\textsf{init}}
\newcommand{\Let}{\textbf{let}}
\newcommand{\Iif}{\textbf{if}}
\newcommand{\Then}{\textbf{then}}
\newcommand{\Eelse}{\textbf{else}}
\newcommand{\In}{\textbf{in}}
\newcommand{\Get}{\eta}
\newcommand{\return}{\textsf{return}}
\newcommand{\create}{\textsf{create}}
\newcommand{\new}{\textsf{new}}
\newcommand{\join}{\textsf{join}}
\newcommand{\lock}{\textsf{lock}}
\newcommand{\unlock}{\textsf{unlock}}
\newcommand{\self}{\textsf{self}}
\newcommand{\sink}{\textsf{sink}}
\newcommand{\last}{\textsf{last}}
\newcommand{\loc}{\textsf{loc}}
\newcommand{\id}{\textsf{id}}
\newcommand{\V}{{\cal V}}
\newcommand{\ValD}{{{\cal V}^\sharp}}
\newcommand{\N}{{\cal N}}
\newcommand{\E}{{\cal E}}
\newcommand{\X}{{\cal X}}
\newcommand{\R}{{\cal R}}
\declarecommand{\G}{{\cal G}}
\newcommand{\D}{{\cal D}}
\newcommand{\Vars}{{{\cal V}\!\mathit{ars}}}
\declarecommand{\C}{{\cal C}}
\newcommand{\I}{{\cal I}}
\newcommand{\TIDs}{\I}
\renewcommand{\SS}{{\cal S}}
\renewcommand{\C}{{\cal C}}
\newcommand{\M}{\textsf{M}}
\newcommand{\MM}{\mathcal{M}}
\newcommand{\GM}{\mathcal{G}}
\newcommand{\lift}{\textsf{lift}}
\newcommand{\unlift}{\textsf{unlift}}
\newcommand{\kClusters}{{\cal S}_k}
\newcommand{\True}{{\mathsf{True}}}}%
\newcommand{\True}{{\textsf{True}}}}%
\newcommand{\False}{{\mathsf{False}}}}%
\newcommand{\False}{{\textsf{False}}}}%
\newcommand{\aSemR}[1]{\llbracket#1\rrbracket^\sharp_\R}
\newcommand{\aSem}[1]{\llbracket#1\rrbracket^\sharp}
\newcommand{\unique}{\textsf{unique}}
\newcommand{\lcommondefinite}{\textsf{lcu\_anc}}
\newcommand{\maycreate}{\textsf{may\_create}}
\newcommand{\Actions}{\mathcal{A}\mathit{ct}}
\newcommand{\semA}[1]{\aSem{#1}_\A}
\newcommand{\newA}{\new^\sharp_\A}
\newcommand{\accounted}{\textsf{acc}}
\newcommand{\initA}{\init^\sharp_\A}
\newcommand{\Clusters}{\mathcal{Q}}
\newcommand{\Cluster}{Q}
\begin{document}

\title{Clustered Relational Thread-Modular\\Abstract Interpretation with Local Traces}                     
\titlerunning{Clustered Relational Thread-Modular Abstract Interpretation}

 \author{
    Michael Schwarz\inst{1}\and
    Simmo Saan\inst{2}\and
    Helmut Seidl\inst{1}\and\\
    Julian Erhard\inst{1}\and
    Vesal Vojdani\inst{2}
}
\authorrunning{M. Schwarz et al.}
\institute{
     Technische Universit\"at M\"unchen, Garching, Germany\\
     \email{\{m.schwarz, helmut.seidl, julian.erhard\}@tum.de}\and
     University of Tartu, Tartu, Estonia\\
     \email{\{simmo.saan, vesal.vojdani\}@ut.ee}
}

\maketitle              
\begin{abstract}
    We construct novel thread-modular analyses that track relational information for potentially overlapping clusters of global variables 
    -- given that they are protected by common mutexes.
    We provide a framework to systematically increase the precision of clustered relational analyses by
    splitting control locations based on abstractions of \emph{local traces}.
    As one instance, we obtain an analysis of dynamic thread creation and joining. 
    Interestingly, tracking \emph{less relational} information for globals may result in higher precision.
    We consider the class of 2-decomposable domains that encompasses many weakly relational domains (e.g., \emph{Octagons}).
    For these domains, we prove that maximal precision is attained already for
    clusters of globals of sizes at most 2.
   \keywords{thread-modular relational abstract interpretation,
        collecting local trace semantics, clusters, dynamic thread creation, concurrency}
\end{abstract}

\section{Introduction}\label{s:intro}\label{s:relDomains}
    Tracking relationships between program variables is indispensable for proving properties of programs or
    verifying the absence of certain programming errors \cite{Cousot78,Cousot09,Logozzo08}. 
    %
    %
%
%
%
Inferring relational properties is particularly challenging for multi-threaded programs
as all interferences by other threads that may happen in parallel, must be taken into account.
    In such an environment, only relational properties between globals protected by common mutexes are likely to persist throughout program execution.
%
Generally, relations on clusters consisting of fewer variables are less brittle than those on larger clusters.
Moreover, monolithic relational analyses employing, e.g., the polyhedral abstract domain are known to be notoriously expensive~\cite{Mine01,Singh2017}.
Tracking \emph{smaller} clusters may even be more precise than tracking larger clusters
\cite{Farzan12}.
%

\begin{example}\label{e:cluster}
    Consider the following program. All accesses to globals $g$, $h$, and $i$ are protected by the mutex $a$.
    \begin{center}
        \begin{minipage}[t]{7cm}
    \begin{minted}{c}
    main:
     x = |\op{create}|(t1); y = |\op{create}|(t2);
     |\op{lock}|(a);
     g = ?; h = ?; i = ?;
     |\op{unlock}|(a); r = |\op{join}|(y); |\op{lock}|(a);
     z = ?; g = z; h = z; i = z;
     |\op{unlock}|(a); |\op{lock}|(a);
     // ASSERT(h==i); (1) ASSERT(g==h); (2)
     |\op{unlock}|(a);
    \end{minted}
        \end{minipage}
        \begin{minipage}[t]{2.4cm}
    \begin{minted}{c}
    t1:
     |\op{lock}|(a);
     x = h;
     i = x;
     |\op{unlock}|(a);
     |\op{return}| 1;
    \end{minted}
        \end{minipage}
        \begin{minipage}[t]{2.4cm}
            \begin{minted}{c}
            t2:
             |\op{lock}|(a);
             g = ?; h = ?;
             |\op{unlock}|(a);
             |\op{return}| 0;
            \end{minted}
        \end{minipage}
    \end{center}
    In this program, the main thread creates two new threads, starting at $t_1$ and $t_2$, respectively.
    Then it locks the mutex $a$
    to set all globals non-deterministically to some value and unlocks $a$ again. After having joined the thread $t_2$,
    it locks $a$ again and sets all globals to the \emph{same} unknown value and unlocks $a$ again.
    Thread $t_1$ sets $i$ to the value of $h$. Thread $t_2$ sets $g$ and $h$ to (potentially different) unknown values.
    Assume we are interested in equalities between globals.
    In order to succeed in showing assertion $(1)$, it is necessary to detect that the main thread is unique and thus cannot read its past writes since these have been overwritten.
    Additionally, the analysis needs to certify that thread $t_2$ also is unique, has been joined before the assertion, and that its writes must also have been overwritten.

    For an analysis to prove assertion $(2)$, propagating a joint abstraction of the values of all globals protected by $a$ does not suffice:
    At the unlock of $a$ in $t_1$, $g{=}h$ need not hold.
    If this monolithic relation 
    is propagated to the last lock of $a$ in main,  $(2)$ cannot be shown --- despite $t_1$ modifying neither $g$ nor $h$.\qed
\end{example}

Here we show, that the loss of precision indicated in the example can be remedied by
replacing the monolithic abstraction of all globals protected by a mutex with suitably chosen subclusters.
In the example, we propose to instead consider the subclusters $\{g,h\}$ and $\{h,i\}$ separately.
As $t_1$ does not write any values to the cluster $\{g,h\}$, the imprecise relation $\top$ thus is not propagated to the main thread and assertion $(2)$ can be shown.

To fine-tune the analysis, we rely on \emph{weakly} relational domains.
A variety of weakly relational domains have been proposed in the literature
such as \emph{Two Variables Per Inequality} \cite{Simon02}, \emph{Octagons} \cite{Mine01,Mine06Oct}, or simplifications thereof \cite{Mine01Zones,Logozzo08}.
%
The technical property of interest which all these domains have in common is that each abstract relation can be reconstructed from its projections onto subclusters of variables of size at most 2.
We call such domains 2-\emph{decomposable}.
Beyond the numerical $2$-decomposable domains, also non-numerical $2$-decomposable domains can be constructed
such as a domain relating string names and function pointers.

Based on $2$-decomposable domains,
we design thread-modular relational analyses of globals which may attain additional precision
by taking \emph{local} knowledge of threads into account.  
%
Therefore, we do not rely on a \emph{global} trace semantics, but on a
\emph{local} trace semantics which formalizes for each thread that part of the computational past it can observe~\cite{traces1}.
Abstract values for program points describe the set of all reaching local traces.
Likewise, values recorded for \emph{observable} actions
are abstractions of all local traces ending in the corresponding action.
Such observable actions are, e.g., unlock operations for mutexes.
The abstract values are then refined by taking finite
abstractions of local traces into account.
To this end, we propose a generic framework that re-uses the
components of any base analysis as black boxes.
%
%
%
Our contributions can be summarized as follows:
\begin{itemize}
	%
    \item We provide a new relational analysis of globals as an abstraction of the
	collecting local trace semantics based on overlapping clusters of variables (\cref{s:traces,s:base,s:clustering}).
    \item Our analysis deals with dynamically created and joined threads, whose thread \emph{id}s may, e.g., be communicated to other
     threads via variables and which may synchronize via mutexes (\cref{s:traces}).
    \item We provide a generic scheme to incorporate history-based arguments into the analysis by taking
	 finite abstractions of local traces into account (\cref{s:refinement}).
    \item We give an analysis of dynamically created thread \emph{id}s as an instance of our generic scheme.
    We apply this to exclude \emph{self-influences} or reads from threads that cannot possibly run in parallel (\cref{s:unique,s:self-excluded}).
    \item We prove that some loss of precision of relational analyses can be avoided by tracking \emph{all} subclusters of variables.
    For the class of 2-decomposable relational domains, 
	we prove that
	tracking variable clusters of size greater than $2$ can be abandoned without precision loss (\cref{s:clustering}).
\end{itemize}
    The analyses in this paper have all been implemented, a report of a practical evaluation
    is included in \cref{s:experiments}, whereas \cref{s:related} details related work.

\section{Relational Domains}\label{s:relational}
First, we define the notion of relational domain employed in the description of our analysis.
%
%
%
%
Let $\Vars$ be a set of variables, potentially of different types.
We assume all configurations and assignments to be well-typed,
i.e., the type of the (abstract) value matches the one specified for a variable.
For each type $\tau$ of values, we assume a complete lattice $\V_\tau^\sharp$ of abstract values
abstracting the respective concrete values from $\V_\tau$.
Let $\ValD$ denote the collection 
of these lattices, and $\Vars\to_\bot\ValD$ denote the set of all type-consistent assignments
$\sigma$ from variables to non-$\bot$ abstract values, extended with a dedicated least element (also denoted by $\bot$), and
equipped with the induced ordering.
%
A \emph{relational domain} $\R$ then is a complete lattice which provides the following operations\\
\noindent\begin{minipage}[t]{.65\linewidth}\vspace{-1.2em}\[
\begin{array}{rcl}
\aSemR{x\leftarrow e}&:&\R\to\R\text{ (assignment for expression $e$)}	\\
\restr{r}{Y} &:&\R\to\R\text{ (restriction to $Y \subseteq \Vars$)} \\
\aSemR{?e}&:&\R\to\R\text{ (guard for condition $e$)}
\end{array}
\]
  \end{minipage}%
  \begin{minipage}[t]{.35\linewidth}\vspace{-1.2em}\[
      \begin{array}{rcr}
      \lift&:&(\Vars\to_\bot\ValD)\to\R 	 \\
      \unlift	&:&\R\to(\Vars\to_\bot\ValD)
      \end{array}
      \]
\end{minipage}
\medskip

\noindent The operations to the left provide the abstract state transformers for the basic operation of programs (with non-deterministic assignments expressed as restrictions),
while $\lift$ and $\unlift$ allow casting from abstract variable assignments to the relational domain
as well as extracting single-variable information. We assume that
$\lift\,\bot = \bot$ and $\unlift\,\bot = \bot$, and require that
$\unlift\circ\lift \sqsupseteq \id$
where $\sqsupseteq$ refers to the ordering of $(\Vars\to_\bot\ValD)$.
Moreover, we require that the \emph{meet} operations $\sqcap$ of $\ValD$ and $\R$ safely approximate the intersection of the concretizations
of the respective arguments.
%
Restricting a relation $r$ to a subset $Y$ of variables amounts to \emph{forgetting} all information about variables not in $Y$.
Thus, we demand $\restr{r}{\Vars}=r$, $\restr{r}{\emptyset}=\top$, $\restr{r}{Y_1} \sqsupseteq \restr{r}{Y_2}$ when $Y_1 \subseteq Y_2$,
$\restr{(\restr{r}{Y_1})}{Y_2} = \restr{r}{Y_1 \cap Y_2}$, and
\begin{equation}
	\begin{array}{lr}
		\unlift\,(\restr{r}{Y})\,x = \top \quad (x \not\in Y) \qquad \qquad \qquad &
    \unlift\,(\restr{r}{Y})\,x = (\unlift\,r)\,x \quad (x \in Y)
	\end{array}
\end{equation}
Restriction thus is \emph{idempotent}.
For convenience, we also define a shorthand for assignment of abstract values\footnote{
We use $\sigma\oplus\{x_i\mapsto v_i\mid i=1,\ldots,m\}$ to denote the variable assignment
obtained from $\sigma$ by replacing the values for $x_i$ with $v_i$ ($i=1,\ldots,m$).}:
	$\aSemR{x \leftarrow^\sharp v}\,r = \left(\restr{r}{\Vars \setminus \{x\}}\right)\sqcap \left(\lift \left(\top\oplus\{ x \mapsto v \}\right)\right)$.
%
In order to construct an abstract interpretation,
we further require monotonic concretization functions
$\gamma_\ValD:\ValD\to 2^\V$ and $\gamma_\R:\R\to 2^{\Vars\to\V}$
satisfying the requirements presented in \cref{def:sound}.
\begin{figure}[b]
  \begin{mdframed}
\begin{equation*}
\begin{array}{l@{\;}l@{\;}l}
\forall a,b: a \sqsubseteq b \implies \gamma_\ValD\,a \subseteq \gamma_\ValD\, b
\qquad\;\; \gamma_\R\,\bot	= \emptyset \qquad\;  \forall r,s: r \sqsubseteq s \implies \gamma_\R\,r \subseteq \gamma_\R\, s \span\span \\
\gamma_\R\,(\aSemR{x \leftarrow e}\,r) &\supseteq& \{\sigma\oplus\{x\mapsto\sem{e}\sigma\}\mid \sigma\in\gamma_\R r\}	\\
\gamma_\R(\restr{r}{Y}) \supseteq
\{\sigma\oplus\{x_1\mapsto v_1,\ldots,x_m\mapsto v_m\}\mid
		v_i\in\V,
		x_i\in\Vars \setminus Y,
		\sigma\in\gamma_\R r\} \span\span	\\
\gamma_\R\,(\lift\,\sigma^\sharp)	\supseteq
	\{\sigma\mid\forall x
		:\,\sigma\,x\in\gamma_\ValD\,(\sigma^\sharp\,x)\}	\qquad\qquad\;\;
\gamma_\ValD\,(\unlift\,r)\,x	\supseteq
	\{\sigma\,x\mid\sigma\in\gamma_\R\,r\} \span\span
\end{array}
\end{equation*}
\end{mdframed}
\caption{Required properties for $\gamma_\ValD:\ValD\to 2^\V$ and $\gamma_\R:\R\to 2^{\Vars\to\V}$.}\label{def:sound}
\end{figure}
%
\begin{example}\label{e:rel}
As a value domain $\V^\sharp_\tau$, consider the flat lattice over the sets of values of appropriate type $\tau$.
A relational domain $\R_1$ is obtained by collecting satisfiable conjunctions of equalities between variables or variables and constants
where the ordering is logical implication, extended with $\False$ as least element.
The greatest element in this complete lattice is given by $\True$.
The operations $\lift$ and $\unlift$ for non-$\bot$ arguments then can be defined as
\[
\begin{array}{lllllll}
\lift\,\sigma	= \bigwedge\{x=\sigma\,x\mid x\in\Vars, \sigma\,x\neq\top\}	 \qquad\,\,
\unlift\,r\,x =  \begin{cases}
		c	& \text{ if } r\implies(x=c)	\\
		\top	& \text{otherwise}
		 \end{cases}
\end{array}
\]
The restriction of $r$ to a subset $Y$ of variables is given by the conjunction
of all equalities implied by $r$ which only contain variables from $Y$ or constants.
\qed
\end{example}

\noindent
In line of \cref{e:rel}, relational domains for non-numerical values may also be constructed.


A variable clustering ${\cal S}\subseteq 2^{\Vars}$ is a set of subsets (\emph{clusters}) of variables.
For any cluster $Y\subseteq\Vars$, let $\R^{Y} = \{r \mid r\in\R, \restr{r}{Y} = r \}$;
this set collects all abstract values from $\R$
containing information on variables in $Y$ only.
Given an arbitrary clustering ${\cal S}\subseteq 2^{\Vars}$,
\emph{any} relation $r\in\R$ can be approximated by a meet of
relations from $\R^Y$ ($Y\in\cal S$) since for every $r\in\R$,
$r\sqsubseteq \mbox{$\bigsqcap$}\{\restr{r}{Y}\mid Y\in{\cal S}\}$
holds.

Some relational domains, however, can be fully recovered from their restrictions
to specific subsets of clusters. We consider for $k\geq 1$,
the set $\kClusters$ of all non-empty subsets $Y\subseteq\Vars$ of cardinality at most $k$.
We call a relational domain $\R$ \emph{$k$-decomposable} if
each abstract value from $\R$ can be precisely expressed as the meet of its restrictions to clusters of
$\kClusters$ and when all least upper bounds can be recovered by computing with clusters of $\kClusters$ only; that is,
\begin{eqnarray}
  \begin{array}{lll}
  r =& \mbox{$\bigsqcap$} \left\{ \restr{r}{\Cluster} \mid Q\in\kClusters \right\}	\qquad\quad
  \restr{\left(\mbox{$\bigsqcup$} R\right)}{Q} =&\mbox{$\bigsqcup$}\left\{\restr{r}{Q}\mid r\in R\right\}\quad(Q\in \kClusters)
	\label{def:decomp}
  \end{array}
\end{eqnarray}
holds for each abstract relation $r\in\R$ and each set of abstract relations $R\subseteq\R$.
%

\begin{example}\label{e:domains}

%
%
%

The domain $\R_1$ from the previous example is $2$-decomposable.
This also holds for the
\emph{octagon} domain \cite{Mine01} and many other weakly relational numeric domains
(pentagons~\cite{Logozzo08},
weighted hexagons~\cite{Fulara10},
logahedra~\cite{HoweK09},
TVPI~\cite{Simon02},
dDBM~\cite{Peron07}, and
AVO~\cite{Chen14}).
The \emph{affine equalities} or \emph{affine inequalities} domains \cite{Karr1976,Cousot78}, however,
are not.
The relational string domains recently proposed by \citet{Arceri22} in Sec.\ 5.1 through Sec.\ 5.3, are examples of non-numeric $2$-decomposable domains.
%
\end{example}

%

\section{A Local Trace Semantics}\label{s:traces}

We build upon the semantic framework for \emph{local traces}, introduced by~\citet{traces1}.
A local trace records all past events that have affected the present configuration of a specific thread,
referred to as the \emph{ego} thread.
In~\cite{traces1}, the local trace semantics is proven equivalent to the global trace semantics
which itself is equivalent to a global interleaving semantics.
In particular, any analysis that is sound w.r.t.\ the local trace semantics also is w.r.t.\ the interleaving semantics.

While the framework of \citet{traces1} allows for different formalizations of traces,
thread synchronization happens only via locking/unlocking and thread creation.
Generalizing their semantics, we identify certain actions as
\emph{observable} by other threads when executing corresponding \emph{observing} actions (see \cref{fig:paradigms}).
When the \emph{ego} thread executes an \emph{observing} action, a local trace ending in the corresponding \emph{observable} action is incorporated.
%
%
Here, we consider as observable/observing actions
locking/unlocking mutexes and creating/joining threads.

\begin{table}[b]
    \vspace{-1.5em}
    \caption{Observable and observing actions and which concurrency primitive they relate to.
    The primitives targeted by this paper are in bold font.}\label{fig:paradigms}
    \centering
    \begin{tabular}{ccc}
        \toprule
        Observable Action & Observing Action & Programming Concept   \\
        \midrule
        $\unlock(a)$ & $\lock(a)$ & \textbf{Mutex}, Monitor, ... \\
        $\return\,x$ & $x' {=} \join(x'')$ & \textbf{Thread Returning / Joining} \\
        $g=x$ & $x=g$ & \textbf{Writing/Reading a global variable}\\
        $\texttt{signal(c)}$ & $\texttt{wait(c)}$ & Condition Variables \\
        $\texttt{send(chan,v)}$ & $\texttt{x = receive(chan)}$ & Channel-Based Concurrency, Sockets, ...\\
        $\texttt{set\_value}$ & $\texttt{get}$ & Futures / Promises \\
        \bottomrule
    \end{tabular}
\end{table}

Consider, e.g., the program in \cref{f:ex0} and a corresponding local trace (\cref{f:trace0}).
This trace consists of one \emph{swim lane} for each thread
representing the sequence of steps it executed where each node in the graph
represents a configuration attained by it.
Additionally, the trace records the \emph{create} and \emph{join} orders as well as
for each mutex $a$, the \emph{locking} order for $a$
($\to_c,\to_j$, and $\to_a$, respectively).
These orders introduce extra relationships between thread configurations.
%
The unique start node of each local trace is an initial configuration of the \emph{main}
thread.
%

We distinguish between the sets $\X$ and $\G$ of \emph{local} and \emph{global} variables.
We assume that $\X$ contains a special variable $\self$ within which the thread \emph{id} of the current thread,
drawn from the set $\I$, is maintained.
A (local) \emph{thread configuration} is a pair $(u,\sigma)$ where $u$ is a program point and the type-consistent map
$\sigma: \X \to \V$ provides values for the local variables.
The values of globals are \emph{not} explicitly represented in a thread configuration,
but can be recovered by consulting the (unique) last write to this global within the
local trace. To model weak memory effects, weaker notions of last writes are conceivable.
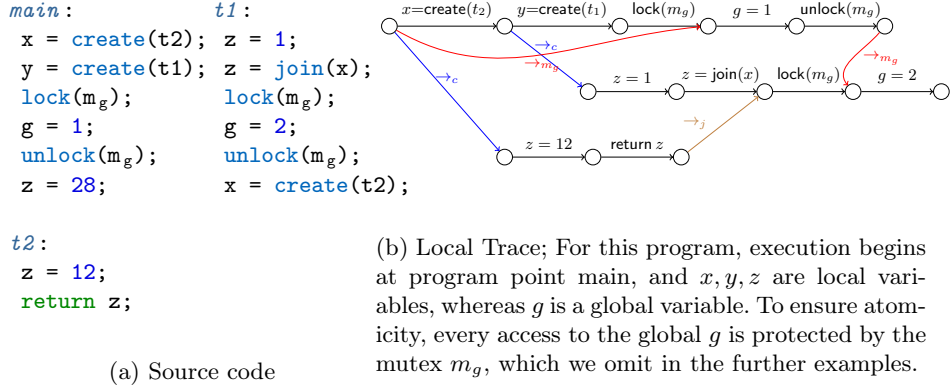
\begin{figure}[t]
    \begin{subfigure}{0.4\textwidth}
\begin{center}
    \begin{minipage}[t]{2.7cm}
    \begin{minted}{c}
    main:
     x = |\op{create}|(t2);
     y = |\op{create}|(t1);
     |\op{lock}|(m|$_{\,\texttt{g}}$|);
     g = 1;
     |\op{unlock}|(m|$_{\,\texttt{g}}$|);
     z = 28;

    t2:
     z = 12;
     return z;
    \end{minted}
    \end{minipage}\begin{minipage}[t]{2.7cm}
    \begin{minted}{c}
    t1:
     z = 1;
     z = |\op{join}|(x);
     |\op{lock}|(m|$_{\,\texttt{g}}$|);
     g = 2;
     |\op{unlock}|(m|$_{\,\texttt{g}}$|);
     x = |\op{create}|(t2);
    \end{minted}
    \end{minipage}
\end{center}
\subcaption{Source code}\label{f:ex0}
\end{subfigure}\begin{subfigure}{0.6\textwidth}
        \usetikzlibrary{calc}
        \usetikzlibrary{fit,shapes.geometric}
        \tikzset{
            every node/.style={node distance=30pt and 40pt},
            programpoint/.style={circle,draw,font=\small},
            edgelabel/.style={midway,font=\small,above=0.5mm}}
        \scalebox{0.68}{
        \begin{tikzpicture}
            \node[programpoint,node distance=40pt and 55pt](ppm1){};
            \node[programpoint,node distance=40pt and 55pt,right=of ppm1](pphalb){};
            \node[programpoint,node distance=40pt and 55pt,right=of pphalb](pp1){};
            \node[programpoint,right=of pp1](pp2){};
            \node[programpoint,right=of pp2](pp3){};
            \node[programpoint,right=of pp3](pp4){};

            \node[programpoint,below right=of pphalb](ppa){};
            \node[programpoint,right=of ppa](ppb){};
            \node[programpoint,right=of ppb](ppc){};
            \node[programpoint,right=of ppc](ppd){};
            \node[programpoint,right=of ppd](ppe){};

            \node[programpoint,below left=of ppa](ppo1){};
            \node[programpoint,right=of ppo1](ppo2){};
            \node[programpoint,right=of ppo2](ppo4){};

            \draw[->](pphalb)--(pp1)node[edgelabel]{$y{=}\create(t_1)$};
            \draw[->](ppm1)--(pphalb)node[edgelabel]{$x{=}\create(t_2)$};
            \draw[->](pp1)--(pp2)node[edgelabel]{$\lock(m_g)$};
            \draw[->](pp2)--(pp3)node[edgelabel]{$g=1$};
            \draw[->](pp3)--(pp4)node[edgelabel]{$\unlock(m_g)$};

            \draw[->](ppo1)--(ppo2)node[edgelabel]{$z=12$};
            \draw[->](ppo2)--(ppo4)node[edgelabel]{$\return\,z$};

            \draw[->,blue](pphalb)--(ppa)node[edgelabel]{$\to_c$};
            \draw[->,blue](ppm1)--(ppo1)node[edgelabel]{$\to_c$};
            \draw[->,brown](ppo4) to node[edgelabel,left=2mm]{$\to_j$} (ppc);

            \draw[->](ppa)--(ppb)node[edgelabel]{$z=1$};
            \draw[->](ppb)--(ppc)node[edgelabel]{$z = \join(x)$};
            \draw[->](ppc)--(ppd)node[edgelabel]{$\lock(m_g)$};
            \draw[->](ppd)--(ppe)node[edgelabel]{$g=2$};

            \draw[->,red, out=-30,in=185](ppm1) to node[edgelabel, below=0mm]{$\to_{m_g}$} (pp2);
            \draw[->,red, out=230,in=130](pp4) to node[edgelabel, right=1mm]{$\to_{m_g}$} (ppd);

        \end{tikzpicture}}
        \vspace{1em}
        \subcaption{Local Trace; For this program, execution begins at program point main, and
        $x,y,z$ are local variables, whereas $g$ is a global variable. To ensure atomicity, every access to the
        global $g$ is protected by the mutex $m_g$, which we omit in the further examples.}\label{f:trace0}
    \end{subfigure}

\caption{An example program and a corresponding local trace.}
\end{figure}
As in \cite{traces1}, we consider a set of actions $\Actions$ that consists of locking
and unlocking a (non-reentrant) mutex from a set $\M$, copying values of globals into locals and vice-versa,
creating a new thread, as well as assignments with and branching on local variables.
We extend $\Actions$ with actions for \emph{returning} from and \emph{joining} with threads.
We assume that writes to and reads from globals are \emph{atomic} (or more precisely,
we assume copying values of \emph{integral} type 
to be atomic).
This is enforced for each
global $g$ by a dedicated mutex $m_g$ acquired just before accessing $g$ and released immediately after.
For simplicity, we associate traces corresponding to a write of $g$ to this dedicated mutex $m_g$, and
thus do not need to consider writing and reading of globals as observable/observing actions.
In examples, we omit explicitly locking and unlocking these mutexes.
By convention, at program start all globals have value $0$, while local
variables may initially have any value.

Each thread is represented by a control-flow graph with edges $e\in\E$ of the form $e=(u,\act,u')$ for some action $\act\in\Actions$ and program points
$u$ and $u'$ where the start point of the \emph{main} thread is $\start$.
Let $\T$ denote the set of all local traces of a given program.
A formalism for local traces must, for each edge $e$ of the control-flow graph, provide a transformation
$\sem{e}:\T^k\to 2^\T$ so that $\sem{e}(t_0,\ldots,t_{k-1})$ extends the local trace $t_0$,
possibly incorporating other local traces.
For the operations $\lock(a),a\in\M$, or $x{=}\join(x'), x,x'\in\X$, the arity of $\sem{e}$ is two,
another local trace, namely, with last operation $\unlock(a)$ or $\return\,x''$, respectively, is incorporated.
The remaining edge transformations have arity one.
In all cases, the set of resulting local traces may be empty when the operation is not
applicable to its argument(s).
We write $\sem{e}(T_0,\ldots,T_{k-1})$ for the set
$\bigcup_{t_0\in T_0,\ldots,t_{k-1}\in T_{k-1}}\sem{e}(t_0,\ldots,t_{k-1})$.

\begin{figure}[b]
    \begin{mdframed}
    \begin{minipage}[t]{.435\linewidth}\[
        \begin{array}{lll}
            \sem{u,\lock(a)}\,\Get = (\emptyset,\sem{e}(\Get\,[u],\Get\,[a]))\\[1ex]
            \sem{u,\unlock(a)}\,\Get
                =\\
            \;\;\Let\;T = \sem{e}(\Get\,[u])\;\In\;\\
            \;\;(\{[a]\mapsto T\}, T)	\\[1ex]
            \sem{u,x=g}\,\Get	= (\emptyset,\sem{e}(\Get\,[u]))	\\[1ex]
            \sem{u,g=x}\,\Get	= (\emptyset,\sem{e}(\Get\,[u]))
        \end{array}
    \]
    \end{minipage}%
    \begin{minipage}[t]{.58\linewidth}\[
        \begin{array}{lll}
        \sem{u,x {=} \create(u_1)}\,\Get = \Let\;T = \sem{e}(\Get\,[u])\;\In	\\
        \;\;		(\{[u_1]\mapsto\new\,u\,u_1\,(\Get\,[u])\},T)	\\[1.2ex]

        \sem{u,x {=}\join(x')}\;\Get = \Let\;T = \Get\,[u]\;\In\\
        \;\; (\emptyset,\sem{e}(\Get\,[u],\bigcup\{
                        \Get\,[t\,(x')]\mid t\in\Get\,[u]\}))
            \\[1.2ex]
        %
        %
        \sem{u,\return\,x}\;\Get
            =
        \Let\;T = \Get\,[u]\;\In\\
        \;\; (\{[i]\mapsto\sem{e}(\{t\in T\mid t(\self) = i\})\mid i\in\I\},\sem{e}T)
        \end{array}
        \]
    \end{minipage}
    \end{mdframed}
    \caption{Right-hand sides for side-effecting formulation of concrete semantics; $t(y)$ extracts the value of local variable $y$
    from the terminal configuration of trace $t$.
    }\label{f:concreteSide}
\end{figure}

Given definitions of $\sem{e}$, the set $\T$ can be inductively defined starting from a set $\init$ of
\emph{initial local traces} consisting of initial configurations of the \emph{main} thread.
To develop efficient thread-modular abstractions,
we are interested in subsets $\T[u], \T[a],\T[i]$ of local traces ending at some program point $u$, ending with an \emph{unlock} operation for mutexes $a$ (or from $\init$), or ending with a \emph{return} statement of thread $i$, respectively.
%
\citet{traces1} showed that 
such subsets can be described as the least solution of a 
\emph{side-effecting constraint system}~\cite{apinis2012side}.
There, each
right-hand side may, besides its contribution to the unknown on the left, also provide contributions to other unknowns (the \emph{side-effects}).
This allows expressing analyses that accumulate flow-insensitive information about globals during a flow-sensitive analysis of local states with dynamic control flow~\cite{td3lang}.
Here, in the presence of dynamic thread creation,
we use side-effects to express that an observable action, unlock or return,
should \emph{also} contribute to the sets $\T[a]$ or $\T[i]$, such that they can be incorporated
at the corresponding observing action.
The side-effecting formulation of our concrete semantics takes the form:
    \begin{equation}
	\begin{array}{lllllll}
        (\eta,\eta\,[u_0])	\sqsupseteq	(\{ [a] \mapsto \init \mid a {\in}\M \}, \init)\quad
        (\eta,\eta\,[u'])	\sqsupseteq	\sem{u,\act}\eta \;\; (u,\act,u'){\in}\E
	\end{array}
	\label{c:concrete}
    \end{equation}
where the ordering $\sqsupseteq$ is induced by the superset ordering and right-hand sides are defined in~\cref{f:concreteSide}.
A right-hand side takes an assignment $\eta$ of the unknowns of the system
and returns a pair $(\eta',T)$ where
$T$ is the contribution to the unknown occurring on the left (as in ordinary constraint systems).
The first component collects the side-effects as the assignment $\eta'$.
If the right-hand sides are monotonic, \cref{c:concrete} has a unique least solution. 

%
We only detail the right-hand sides for the \emph{creation} of threads as well as
the new actions \emph{join} and \emph{return};
the rest remain the same as defined by~\citet{traces1}.
%
For \textbf{thread creation}, they provide the action $x {=} \create(u_1)$.
Here, $u_1$ is the program point at which the created thread should start.
We assume that all locals from the creator are passed to the created thread, except for the variable $\self$.
The variables $\self$ in the created thread and $x$ in the creating thread receive a fresh thread \emph{id}.
Here,
$\new\,u\,u_1\,t$ computes the local trace at the start point $u_1$ from the local trace $t$ of the creating thread.
To handle \textbf{returning} and \textbf{joining} of threads we introduce the following two actions:
\begin{itemize}
	\item	\return\ $x$; -- terminating a thread and returning the value of the local variable $x$
		to a thread waiting for the given thread to terminate.
	\item $x {=} \join(x');$ where $x'$ is a local variable holding a thread \emph{id}
		-- blocks the ego thread, until the thread with
		the given thread \emph{id} has terminated. As in \emph{pthreads}, at most one
		thread may call join for a given thread \emph{id}. The value provided to \return\ by the
		joined thread is assigned to the local variable $x$.
\end{itemize}
For returning results and realization of \textsf{join},
we employ the unknown $[i]$ for the thread \emph{id} $i$ of the returning thread, as shown in \cref{f:concreteSide}.

\section{Relational Analyses as Abstractions of Local Traces}\label{s:base}\label{S:BASE}
Subsequently, we give relational analyses of the values of globals which we base on the local trace semantics.
They are generic in the relational domain $\R$, with $2$-decomposable domains being particularly well-suited,
as the concept of \emph{clusters} is central to the analyses.
We focus on relations between globals that are jointly \emph{write}-protected by some mutex.
%
We assume we are given for each global $g$, a set $\MM[g]$  of (write) \emph{protecting} mutexes, i.e.,
mutexes that are \emph{always} held when $g$ is written.
%
Let $\GM[a]=\{g\in\G\mid a\in\MM[g]\}$ denote the set of globals protected by a mutex $a$.
Let $\emptyset \neq \Clusters_a \subseteq 2^{\GM[a]}$ the set of clusters of these globals we associate with $a$.
For technical reasons, we require at least one cluster per mutex $a$, which may be the empty cluster $\emptyset$,
thus not associating any information with $a$.

Our basic idea is to store at the unknown $[a,\Cluster]$ (for each mutex $a$ and cluster $\Cluster\in\Clusters_a$)
an abstraction of the relations \emph{only} between globals in $\Cluster$. By construction, all globals in $\Cluster$ are protected by $a$.
Whenever it is locked, the relational information stored at all $[a,\Cluster]$ is incorporated into the local state
by the lattice operation \emph{meet}, i.e., the local state now maintains relations between locals \emph{as well as} globals which no
other thread can access at this program point.
Whenever $a$ is unlocked, the new relation between globals in all corresponding clusters
$\Cluster\in\Clusters_a$ is side-effected to the respective unknowns $[a,\Cluster]$.
Simultaneously, all information on globals no longer protected, is \emph{forgotten} to obtain the new local state.
In this way, the analysis is fully relational in the local state, while only keeping relations within clusters of globals jointly protected
by some mutex.

For clarity of presentation,
we perform \emph{control-point splitting} on the set of held mutexes when reaching program points. Apart from this, the constraint system
and right-hand sides for the analysis closely follow those of the concrete semantics (\cref{s:traces}) ---
with the
exception that unknowns now take values from $\R$ and that unknowns $[a]$ are
replaced with unknowns $[a,\Cluster]$ for $\Cluster\in\Clusters_a$.
\begin{figure}[t]
	\begin{mdframed}
    \begin{minipage}[t]{.5\linewidth}\[
        \begin{array}{lll}
			\init^\sharp\,\Get =\\
			\;\;\Let\;r(\Cluster) = \aSemR{\{ g \leftarrow 0 \mid g\in\Cluster\}} \top\;\In\\
			\;\;\Let\;\rho = \left\{ [a,\Cluster] \mapsto r (\Cluster) \mid a \in \M, \Cluster\in\Clusters_a  \right\}\;\\
			\;\;\In\;(\rho,\aSemR{\self \leftarrow^\sharp i_{0}}\top)\\[1ex]

			\sem{[u,S],x {=} \create(u_1)}^\sharp\Get	=\\
				\;\;\Let\;r = \Get\,[u,S]\;\In\\
				\;\;\Let\; i = \nu^\sharp\,u\,u_1\,r\;\In\\
				\;\;\Let\; r' = \restr{\left\{\aSemR{\self \leftarrow^\sharp i}\,r\right\}}{\X}\;\In\\
				\;\; \Let\; \rho = \{ [u_1,\emptyset] \mapsto r' \}\;\In\\
				\;\;	(\rho,\aSemR{x \leftarrow^\sharp i}r)	\\[1ex]

			\sem{[u,S],g=x}^\sharp\Get =\\
				\;\;(\emptyset, \aSemR{g \leftarrow x}\,(\Get\,[u, S]))\\[1ex]

			\sem{[u,S],x=g}^\sharp\Get =\\
				\;\; (\emptyset, \aSemR{x \leftarrow g}\,(\Get\,[u, S]))

        \end{array}
    \]
    \end{minipage}%
    \begin{minipage}[t]{.55\linewidth}\[
        \begin{array}{lll}
			\sem{[u,S],\lock(a)}^\sharp\Get	=\\
			\;\; \left(\emptyset,\Get\,[u, S]\sqcap\left(\bigsqcap_{\Cluster\in\Clusters_a}\Get\,[a,\Cluster]\right)\right)\\[1ex]

			\sem{[u,S],\unlock(a)}^\sharp\Get =\\
			\;\; \Let\;r = \Get\,[u, S]\;\In	\\
			\;\;	\Let\;\rho = \{ [a,\Cluster] \mapsto \restr{r}{\Cluster} \mid \Cluster\in\Clusters_a\}\;\In \\
			\;\; \left(\rho,\restr{r}{\X \cup \bigcup \{\GM[a'] \mid a' \in (S\setminus{a})\}}\right)\\[1ex]

			\sem{[u,S],\return\,x}^\sharp\Get =\\
			\;\; \Let\;r = \Get\,[u,S]\;\In\\
			\;\; \Let\;i^\sharp = \unlift\,r\,\self\;\In\\
			\;\; \left(\left\{ [i^\sharp] \mapsto \restr{\left(\aSemR{\returnVar \leftarrow x}\,r\right)}{\{\returnVar\}}\right\},r\right)\\[1ex]

			\sem{[u,S],x' {=} \join(x)}^\sharp\Get =\\
			\;\; \Let\;v = \bigsqcup_{\unlift\,r\,x'\sqcap i'  \neq \bot} \unlift\,(\Get[i'])\,\returnVar  \;\In\\
			\;\;\left(\emptyset,\aSemR{x' \leftarrow^\sharp v}(\Get\,[u,S])\right)
        \end{array}
        \]
    \end{minipage}
	\end{mdframed}
    \caption{Right-hand sides for the basic analysis.
	All functions are strict in $\bot$
	(describing the empty set of local traces), we only display definitions for non-$\bot$ abstract values here.
	$\aSemR{\{ g \leftarrow 0 \mid g\in\Cluster\}}$ is shorthand for the abstract transformer corresponding to the assignment of $0$
	to all variables in $\Cluster$ one-by-one.
	}\label{f:basicAbs}
\end{figure}

All right-hand sides are given in detail in \cref{f:basicAbs}. 
For the \textbf{start point} of the program and the empty lockset, the right-hand side $\init^\sharp$
returns the $\top$ relation updated such that the variable $\self$ holds
the abstract thread \emph{id} $i_{0}$ of the \emph{main} thread.
Additionally, $\init^\sharp$
produces a side-effect for each mutex $a$ and cluster $\Cluster$
that initializes all globals from the cluster with the value $0$.

For a \textbf{thread creating} edge starting in program point $u$ with lockset $S$, the right-hand side $\sem{[u,S],x {=} \create(u_1)}^\sharp$ first generates a new abstract thread \emph{id},
which we assume can be computed using the function $\nu^\sharp$.
The new \emph{id} is assigned to the variable $x$ in the local state of the current thread.
Additionally, the start state $r'$ for the newly created thread is constructed
and side-effected to the thread's start point with empty lockset $[u_1,\emptyset]$.
Since threads start with empty lockset, the state $r'$ is obtained by removing all information
about globals from the local state of the creator
and assigning the new abstract thread \emph{id} to the variable $\self$.
%

When \textbf{locking} a mutex $a$, the states stored at unknowns $[a,\Cluster]$ with $\Cluster\in\Clusters_a$ are combined with the local state by \emph{meet}.
This is sound because the value stored at any $[a,\Cluster]$ only maintains relationships between variables write-protected by $a$, and these values soundly account for the program state at every $\unlock(a)$ and at program start.
When \textbf{unlocking} $a$, on the other hand,
the local state restricted to the appropriate clusters $\Cluster\in\Clusters_a$
is side-effected to the respective unknowns $[a,\Cluster]$, so that the changes made to variables in the cluster become visible to other threads.
Also, the local state is restricted to the local variables and only those globals for which at least one
protecting mutex is still held.

As special mutexes $m_g$ immediately surrounding accesses to $g$ are used to ensure atomicity, and information about $g$ is associated with them,
all \textbf{reads} and \textbf{writes} refer to the local copy of $g$. \textbf{Guards} and \textbf{assignments} (which may only involve local variables) are defined analogously.

For a \textbf{return} edge,
the abstract value to be returned is looked up in the local state and then side-effected to the abstract thread \emph{id} of the current thread
(as the value of the dedicated variable $\returnVar$).
For a \textbf{join} edge, the least upper bound of all \emph{return} values of all possibly joined threads
is assigned to the left-hand side of the \emph{join} statement in the local state.
\begin{example}
	Consider the program\footnote{In all examples, 
	$g$, $h$, and $i$ are globals, whereas $x$, $y$, and $z$ are locals, and the clusters at special mutexes $m_g$
	contain exactly the global $g$:
	$\Clusters_{m_g} = \{\{g\}\}$. Unless explicitly stated otherwise, the domain $\R_1$ from \cref{e:rel}, enhanced with inequalities between variables is used.}
	where $\MM[g] = \{a,b,m_g\}$, $\MM[h] = \{a,b,m_h\}$, $\Clusters_a = \{\{g,h\}\}$,
	$\Clusters_b = \{\{g,h\}\}$.
	\begin{center}
		\begin{minipage}[t]{4cm}
\begin{minted}{c}
main:
 x = |\op{create}|(t1); y = ?;
 |\op{lock}|(a); |\op{lock}|(b);
 g = y; h = y+9;
 |\op{unlock}|(b); |\op{lock}|(b);
 h = y;
 // ASSERT(g==y); (1)
 // ASSERT(h==y); (2)
 |\op{unlock}|(b); |\op{unlock}|(a);
 x = |\op{create}|(t2);
\end{minted}
		\end{minipage}
		\begin{minipage}[t]{3.8cm}
\begin{minted}{c}
t1:
 |\op{lock}|(b);
 |\op{unlock}|(b);
 |\op{lock}|(a);
 |\op{lock}|(b);
 // ASSERT(g==h); (3)
 y = ?; g = y; h = y;
 |\op{unlock}|(b);
 |\op{unlock}|(a);
\end{minted}
\end{minipage}
\begin{minipage}[t]{4cm}
\begin{minted}{c}
t2:
 |\op{lock}|(b);
 |\op{lock}|(a);
 // ASSERT(g==h); (4)
 |\op{unlock}|(a);
 |\op{unlock}|(b);

\end{minted}
\end{minipage}
\end{center}
%
Our analysis succeeds in proving all assertions here. Thread $t_2$ is of particular interest:
When locking $b$ only $g\leq h$ is known to hold, and locking the additional mutex $a$ means that the better
information $g=h$ becomes available.
The analysis by \citet{Mukherjee2017} on the other hand only succeeds in proving assertion (2) --- even when all
globals are put in the same region.
It cannot establish (1) because all
correlations between locals and globals are forgotten when the \emph{mix} operation is applied at the second $\lock(b)$
in the main thread. (3) cannot be established because, at $\lock(b)$ in $t_1$, the mix operation also incorporates
the state after the first $\unlock(b)$ in the main thread, where $g=h$ does not hold. Similarly, for (4).
The same applies for assertion (3) and the analysis using \emph{lock invariants} proposed by \citet{Mine2014}.
This analysis also falls short of showing (1), as at the $\lock(b)$ in the main thread, the lock
invariant associated with $b$ is joined into the local state. (4) is similarly out of reach.
The same reasoning also applies to
\cite{Mukherjee2017,Mine2014,traces1}
after equipping the analyses
with thread \emph{id}s.\qed
%
%
\end{example}

\begin{theorem}\label{t:sound}
Any solution of the constraint system is sound w.r.t.\ the local trace semantics.
\end{theorem}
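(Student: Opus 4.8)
The plan is to establish a standard soundness-by-fixpoint argument relating the abstract side-effecting constraint system of \cref{f:basicAbs} to the concrete side-effecting system~\eqref{c:concrete}. First I would fix a concrete solution $(\eta, T)$ of~\eqref{c:concrete} and an abstract solution $(\eta^\sharp, R)$ of the abstract system, and define a family of concretization relations: for each control point $u$ and lockset $S$, the abstract value $\eta^\sharp\,[u,S]$ concretizes (via $\gamma_\R$, suitably lifted to account for the values of globals recovered from last writes in the local trace) to a set of local traces that must contain every $t \in \eta\,[u]$ whose held-mutex set equals $S$; for each mutex $a$ and cluster $\Cluster \in \Clusters_a$, the value $\eta^\sharp\,[a,\Cluster]$ concretizes to a superset of the projection onto $\Cluster$ of (the last-write states of) all traces in $\eta\,[a]$; and similarly for the return unknowns $[i^\sharp]$. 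The claim is then that these inclusions hold simultaneously, which I would prove by verifying, for each edge $(u,\act,u') \in \E$ (and the initialization constraint), that the abstract right-hand side applied to $\eta^\sharp$ soundly over-approximates the concrete right-hand side applied to $\eta$, both in its contribution to the left-hand unknown and in its side-effects.

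The key steps, carried out edge by edge using the soundness requirements of \cref{def:sound} and the properties of restriction and $\lift/\unlift$ stated in \cref{s:relational}: for $\lock(a)$, soundness of $\sqcap$ (approximating intersection of concretizations) together with the invariant that each $\eta^\sharp\,[a,\Cluster]$ over-approximates the $\Cluster$-projection of every trace reaching an $\unlock(a)$ or coming from $\init$ — exactly the traces that can be incorporated — gives that the meet is sound; for $\unlock(a)$, the side-effect $\restr{r}{\Cluster}$ is sound by the concretization property of restriction, and the new local state $\restr{r}{\X \cup \bigcup\{\GM[a'] \mid a' \in S\setminus\{a\}\}}$ is sound because it only forgets information about globals that are genuinely no longer protected (and whose values in the concrete are henceforth read only through their local copies / dedicated mutexes $m_g$); for $\create$, soundness follows from the assignment and restriction properties, using that $\nu^\sharp$ abstracts $\new$; for $\return$, the side-effect to $[i^\sharp]$ is sound since $\unlift\,r\,\self$ over-approximates the ego thread's id and $\restr{(\aSemR{\returnVar \leftarrow x}r)}{\{\returnVar\}}$ abstracts the returned value; for $\join$, the join over all $i'$ compatible with $\unlift\,r\,x'$ over-approximates the set of threads whose terminal configuration may be incorporated, and the abstract assignment $\aSemR{x' \leftarrow^\sharp v}$ is sound by construction. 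Reads, writes, guards and local assignments reduce directly to soundness of $\aSemR{\cdot}$. The control-point-splitting on locksets is sound because the lockset held at $u$ is a deterministic function of the trace prefix, so partitioning $\eta\,[u]$ by it loses nothing.

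Having shown that the concretization-respecting property is preserved by every right-hand side, I would conclude: since the abstract system's right-hand sides are monotonic and the abstract solution $(\eta^\sharp,R)$ satisfies all constraints, and since the concrete least solution is the limit of the Kleene iteration starting from $\bot$, an induction on the iteration stages shows that every concrete approximant is contained in the concretization of $(\eta^\sharp,R)$; passing to the limit yields that $(\eta,T) \subseteq \gamma(\eta^\sharp,R)$. Combined with the equivalence of the local trace semantics with the interleaving semantics established in~\cite{traces1}, this gives soundness w.r.t.\ the interleaving semantics as well. The main obstacle I anticipate is getting the concretization relation right for the unknowns $[a,\Cluster]$: one must phrase it so that it is strong enough to make the $\lock(a)$ case go through (it must cover the state at \emph{every} reachable $\unlock(a)$ and at program start, for \emph{every} cluster independently) yet weak enough to be re-established at each $\unlock(a)$ given only the per-cluster restrictions that are actually side-effected — in particular, the argument must not secretly rely on reconstructing a monolithic relation from the clusters, since the whole point of the clustered analysis is that such reconstruction is lossy. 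Threading the invariant about which globals are write-protected by which mutexes through the $\unlock$ case (to justify forgetting precisely the unprotected globals) is the other delicate bookkeeping point.
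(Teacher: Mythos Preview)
Your proposal is correct and matches the paper's approach: fixpoint induction relating the concrete and abstract solutions via per-unknown concretization relations, verified edge by edge, with exactly the invariants you flag (restriction of $\eta^\sharp[u,S]$ to locals plus still-protected globals, and of $\eta^\sharp[a,\Cluster]$ to $\Cluster$) made explicit as an auxiliary proposition. The only organizational difference is that the paper interposes an intermediate \emph{concrete} constraint system $\C'$ that already uses the split unknowns $[u,S]$ and $[a,\Cluster]$, proves it equivalent to $\C$ separately, and then shows the concretized abstract solution solves $\C'$; this decouples the control-point splitting from the relational abstraction but is otherwise the same strategy you outline.
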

\begin{proof}
	The proof is by fixpoint induction, the details are given in \cref{s:baseSound}.
\end{proof}
\noindent
We remark that, instead of relying on $\MM[g]$ being pre-com\-puted, an analysis can also infer this information on the fly~\cite{Vojdani2016}.
\textsc{Goblint}, e.g.,
is capable of dealing with the resulting non-monotonicity in the side-effects.

The analysis described here however still has some deficiencies.
All writes to a global are accumulated regardless of the writing thread. As a consequence,
a thread does, e.g., not only read its latest local writes but also all earlier local writes,
even if those are \emph{definitely} overwritten.
Excluding some threads' writes is an instance of a more general idea for excluding
writes that cannot be last writes.
Instead of discussing ad hoc ways to address this specific shortcoming, we propose a general mechanism to improve the precision of any thread-modular analysis in the next section,
and later instantiate it to the issue highlighted here.

\section{Refinement via Finite Abstractions of Local Traces}\label{s:refinement}\label{S:REFINEMENT}
To improve precision of thread-modular analyses
we take additional abstractions of local traces into account. Our approach is generic in that it builds
on the right-hand sides of a base analysis and uses them as black boxes.
We will later instantiate this framework to exclude writes based on thread \emph{id}s from the analysis in \cref{s:base}.
Other instantiations are conceivable as well.
To make it widely applicable, the framework allows base analyses that already perform some splitting
of unknowns at program points (e.g., locksets as seen in \cref{s:base}). We denote by $[\hat{u}]$ such (possibly)
extended unknowns for a program point $u$.

A (base) analysis is defined by its right-hand sides, and a collection of different domains:
(1) $\D_S$ for abstract values stored at unknowns for program points;
(2) $\D_\act$ for abstract values stored at observable actions $\act$ (e.g., in \cref{s:base}, $\D_M$ for unlocks and $\D_T$ for thread returns).

Let $\A$ be some set of finite information that can be extracted from a local
trace by a function $\alpha_\A: \T \to \A$. We call $\alpha_\A\,t\in\A$ the \emph{digest} of some local trace $t$.
Let $\semA{u,\act}: \A^k\to2^\A$ be the effect on the digest when performing a
$k$-ary action $\act\in\Actions$
for a control flow edge originating at $u$.
We require that for $e=(u,\act,v)\in\E$,
\begin{equation}
    \begin{array}{lll}
    \forall A_0,\ldots,A_{k-1} \in \A&:&\; \abs{\semA{u,\act}(A_0,\ldots,A_{k-1})}\leq 1	\\
    \forall t_0,\ldots,t_{k-1} \in \T&:&
    \alpha_\A(\sem{e}(t_0,\ldots,t_{k-1})) \subseteq \semA{u,\act}(\alpha_\A\,t_0,\ldots,\alpha_\A\,t_{k-1})
    \end{array}
	\label{def:Asound}
\end{equation}
where $\alpha_\A$ is lifted element-wise to sets.
While the first restriction ensures determinism,
the second intuitively ensures that $\aSem{u,\act}_\A$ soundly abstracts $\sem{e}$.

For thread creation, we additionally require a helper function $\newA: \N\to\N\to\A\to\A$ that returns for a thread created
at an edge originating from $u$ and starting execution at program point $u_1$ the new digest.
The same requirements are imposed for edges $(u,x {=}\create(u_1),v)\in\E$,
\begin{equation}
    \begin{array}{llllll}
        \forall A_0 {\in} \A&:& \abs{\newA\,u\,u_1\,A_0}\leq 1	\quad\;
        \forall t_0 {\in} \T&:&
        \alpha_\A(\new\,u\,u_1\,t) \subseteq \newA\,u\,u_1\,(\alpha_\A\,t_0)
    \end{array}
    \label{def:AnewSound}
\end{equation}
\noindent Also, we define for the initial digest at the start of the program
\begin{equation}
    \begin{array}{lll}
        \initA &=& \{\alpha_A\,t \mid t \in\init\}
    \end{array}
	\label{def:AinitSound}
\end{equation}
Under these assumptions, we can perform
\emph{control-point splitting} according to $\A$.
%
%
This means that unknowns $[\hat{u}]$ for program points $u$ are replaced with new unknowns
$[\hat{u},A]$, $A\in\A$.
Analogously, unknowns for observable actions $[\act]$ are replaced with unknowns $[\act,A]$ for $A\in\A$.
%
%
Consider a single constraint from an abstract constraint system of the last section, which soundly abstracts the collecting local
trace semantics of a program.
\[
\begin{array}{llllllrrr}
(\Get,\Get\,[\hat{v}])	&\sqsupseteq& \aSem{[\hat{u}],\act}\,\Get	\quad\,\,
\end{array}
\]
The corresponding constraints of the refined system with control-point splitting
differ based on whether
the action $\act$ is observing, observable, or neither.
\begin{itemize}
    \item When $\act$ is \emph{observing}, the new right-hand side additionally
    gets the digest $A_1$ associated with the local traces that are to be incorporated:
        \[
        \begin{array}{lll}
            (\Get, \Get\,\left[\hat{v}, A'\right]) &\sqsupseteq&
                    \aSem{[\hat{u},A_0],\act,A_1}\,\Get
            \qquad \text{for } A_0,A_1\in\A, A'\in\semA{u,\act}\,(A_0,A_1)
        \end{array}
        \]
    \item When $\act$ is $\emph{observable}$, the digest $A'$
    of the resulting local trace is passed to the new right-hand side, such that the side-effect can be redirected to the appropriate unknown:
    \[
        \begin{array}{lll}
            (\Get, \Get\,\left[\hat{v}, A'\right]) &\sqsupseteq&
                    \aSem{[\hat{u},A_0],\act,A'}\,\Get
            \qquad \text{for } A_0\in\A, A'\in\semA{u,\act}\,(A_0)
        \end{array}
    \]
    \item When $\act$ is neither, 
        no additional digest is passed:
    \[
        \begin{array}{lll}
            (\Get, \Get\,\left[\hat{v}, A'\right]) &\sqsupseteq&
                    \aSem{[\hat{u},A_0],\act}\,\Get
            \qquad\;\;\;\;\; \text{for } A_0\in\A, A'\in\semA{u,\act}\,(A_0)
        \end{array}
    \]
\end{itemize}
\begin{figure}[t]
\begin{mdframed}
    \begin{minipage}[t]{.4\linewidth}\[
        \begin{array}{lll}
            \aSem{[\hat{u},A_0],\act,A_1}\,\Get =\\
            \;\;\Let\,\Get'\,[x] = \Iif\;[x] = [\hat{u}]\;\Then\\
            \quad\;\;\;\Get\,[\hat{u},A_0]\\
            \quad\Eelse\,\Get\,[x,A_1]\;\\
            \;\;\In\\
            \;\;\aSem{[\hat{u}],\act}\,\Get'\\[1ex]

            \aSem{[\hat{u},A_0],\act''}\,\Get =\\
            \;\;\Let\,\Get'\,[x] = \Get\,[x,A_0]\;\In \\
            \;\;\aSem{[\hat{u}],\act''}\,\Get'
        \end{array}
    \]
    \end{minipage}%
    \begin{minipage}[t]{.52\linewidth}\[
        \begin{array}{lll}
            \aSem{[\hat{u},A_0],\act',A'}\,\Get =\\
            \;\;\Let\,\Get'\,[x] = \Get\,[x,A_0]\;\In \\
            \;\;\Let\, (\rho,v) = \aSem{[\hat{u}],\act'}\,\Get'\;\In\\
            \;\;\Let\, \rho' = \left\{ [x,A'] \mapsto v' \mid ([x] \mapsto v') \in\rho \right\}\;\In\\
            \;\;(\rho',v)\\[1ex]

            \aSem{[\hat{u},A_0],x{=}\create(u_1)}\,\Get =\\
            \;\;\Let\,\Get'\,[x] = \Get\,[x,A_0]\;\In \\
            \;\;\Let\, (\{[\hat{u_1}] \mapsto v'\},v) = \aSem{[\hat{u}],x{=}\create(u_1)}\,\Get'\;\In\\
            \;\;(\{ [\hat{u_1},A'] \mapsto v' \mid A'\in \newA\,u\,u_1\,A_0\},v)
        \end{array}
        \]
    \end{minipage}
    \end{mdframed}
    \caption{Right-hand sides for an observing action $\act$, an observable action $\act'$, a create action, and an action $\act''$ that is neither
    for the refined analyses,
    defined as wrappers around the right-hand sides of a base analysis.}\label{f:refinedRHS}

\end{figure}
The new right-hand sides are defined in terms of the right-hand side of the base analysis which
are used as black boxes (\cref{f:refinedRHS}).
They act as wrappers,
mapping any unknown consulted or side-effected to by the original analysis
to the appropriate unknown of the refined system. 
Thus, the refined analysis automatically benefits from the extra information the digests provide. It may, e.g., exploit
that $\semA{u,\act}(A_0,A_1) = \emptyset$ meaning
that, no local traces with digests $A_0,A_1$ can be combined into a valid local trace ending with action $\act$.

The complete definition of the refined constraint system instantiated to the actions
considered in this paper and unknowns for program points enriched with locksets
can be found in the additional material
(\cref{f:abstractRef} in \cref{s:refineComplete}).

\begin{figure}[b]
    \begin{mdframed}
    \begin{minipage}[t]{.52\linewidth}\[
        \begin{array}{lll}
		\initA = \{\emptyset\}\\[0.5ex]
        \newA\,u\,u_1\,S = \{\emptyset\}\\[0.5ex]
		\semA{u,a}\,S = \{S\} \quad \text{(other non-observing)}\qquad\;\;
        \end{array}%
    \]
    \end{minipage}%
    \begin{minipage}[t]{.4\linewidth}\[
        \begin{array}{lll}
            \semA{u,\lock(a)}\,(S,S') = \{S \cup \{a\}\}\\[0.5ex]
			\semA{u,\unlock(a)}\,S = \{S \setminus \{a\}\}\\[0.5ex]
            \semA{u,a}\,(S,S') = \{S\} \quad \text{(other observing)}
        \end{array}
        \]
    \end{minipage}
	\end{mdframed}
    \caption{Right-hand sides for expressing locksets as a refinement.}\label{f:locksetA}
\end{figure}
    Enriching program points with locksets can in fact be seen as a first application of this framework.
    The right-hand sides are given in \cref{f:locksetA}.

\begin{example}
    As a further instance, consider tracking which mutexes have been locked at least once in the local trace.
    At $\lock(a)$ traces in which a thread has performed a $\lock(a)$ can not be combined with traces that contain no lock action for $a$.
    The corresponding right-hand sides are given in \cref{f:lockcountA}.

    When refining the analysis from \cref{s:base} accordingly (assuming that $a$ protects $g$ and $h$), it succeeds in proving the assert in this program as the initial values of $0$ for $g$ and $h$ can be excluded.
    \begin{center}
		\begin{minipage}[t]{4cm}
\begin{minted}{c}
main:
 |\op{lock}|(a);
 h = 9; g = 10;
 |\op{unlock}|(a);
 x = |\op{create}|(t1);
\end{minted}
		\end{minipage}
		\begin{minipage}[t]{3.8cm}
\begin{minted}{c}
t1:
 x = |\op{create}|(t2);
 |\op{lock}|(a);
 h = 11; g = 12;
 |\op{unlock}|(a);
\end{minted}
\end{minipage}\begin{minipage}[t]{3.8cm}
    \begin{minted}{c}
    t2:
     |\op{lock}|(a);
     // ASSERT(h<=g);
     |\op{unlock}|(a);
    \end{minted}
    \end{minipage}
\end{center}
    This could be naturally generalized to obtain instances counting how often some action (e.g., a write to a global $g$) occurred,
    stopping exact bookkeeping at some constant ($1$ in this example).
    \qed
\end{example}
\begin{figure}[b]
	\begin{mdframed}
    \begin{minipage}[t]{.35\linewidth}\[
        \begin{array}{lll}
		\initA = \{\emptyset\}\\[0.5ex]
        \newA\,u\,u_1\,L = \{L\}\\[0.5ex]
        \semA{u,a}\,S = \{L\} \;\; \text{(other non-observing)}\qquad\;\;\;
        \end{array}
    \]
    \end{minipage}%
    \begin{minipage}[t]{.5\linewidth}\[
        \begin{array}{lll}
            \semA{u,\lock(a)}\,(L,L') = \begin{cases}
                \emptyset & \text{if } a {\in} L \land a {\not\in} L'\\
                \{ L {\cup} L' {\cup} \{a\} \} & \text{otherwise}
            \end{cases}\\
            \qquad\qquad\quad\semA{u,a}\,(L,L') = \{L {\cup} L'\} \;\; \text{(other observing)}
        \end{array}
        \]
    \end{minipage}
	\end{mdframed}
    \caption{Right-hand sides for refining according to encountered $\lock$ operations.}\label{f:lockcountA}
\end{figure}

%
To prove soundness of local-trace-based refinement of our analysis from
\cref{s:base}, we first construct a corresponding refined collecting local trace semantics.
Then we verify that the refined analysis is sound w.r.t.\ this refined semantics --
which, in turn, is proven sound w.r.t.\ the original collecting local trace semantics.
%
\begin{theorem}\label{t:refined}
    Assume that
    $\alpha_\A$, $\newA$, and $\semA{u,\act}$ fulfill requirements \eqref{def:Asound}, \eqref{def:AnewSound}, and \eqref{def:AinitSound}.
    Then any solution of the refined constraint system is sound relative to the collecting local trace semantics.
\end{theorem}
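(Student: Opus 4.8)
The plan is to follow the two-step strategy the authors announce just before the statement: first build a *refined collecting local trace semantics* that is indexed by digests, prove it sound with respect to the original collecting local trace semantics of Section \ref{s:traces}, and then prove that any solution of the refined constraint system of Figure \ref{f:refinedRHS} is sound with respect to this refined semantics. Composing the two soundness results yields the claim, since the base analysis (before refinement) was already shown sound w.r.t.\ the collecting local trace semantics by Theorem \ref{t:sound}.

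First I would set up the refined semantics. Since digests are deterministic (the cardinality-$\leq 1$ conditions in \eqref{def:Asound} and \eqref{def:AnewSound}) and $\alpha_\A$ is a total function $\T\to\A$, each local trace $t$ has a unique digest $\alpha_\A\,t$. I would therefore refine every unknown of the concrete side-effecting system \eqref{c:concrete} by splitting $\T[u]$ into $\T[u,A]=\{t\in\T[u]\mid\alpha_\A\,t=A\}$, and likewise for $\T[a]$ and $\T[i]$. The key lemma here is that the concrete right-hand sides of Figure \ref{f:concreteSide} respect this splitting: if $t_0\in\T[u,A_0]$ and (for binary edges) the incorporated trace lies in the class $A_1$, then by the second inclusion of \eqref{def:Asound} every trace in $\sem{e}(t_0,t_1)$ has a digest in $\semA{u,\act}(A_0,A_1)$, which is a singleton; so the successor traces all land in one refined class, and this class is exactly the index that appears in the refined constraint. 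The analogous statement for \create\ uses \eqref{def:AnewSound}, and the base case uses \eqref{def:AinitSound}. This establishes that the refined collecting semantics (the least solution of the digest-split concrete system) is a sound — in fact exact — refinement of the original one: projecting away the digest component recovers the original least solution.

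Next I would prove that a solution $\Get$ of the refined constraint system soundly abstracts this refined collecting semantics. The crucial observation is that the refined right-hand sides of Figure \ref{f:refinedRHS} are mere *wrappers*: $\aSem{[\hat u,A_0],\act,A_1}$ reads the base right-hand side $\aSem{[\hat u],\act}$ on the reindexed assignment $\Get'$ where $\Get'[\hat u]=\Get[\hat u,A_0]$ and $\Get'[x]=\Get[x,A_1]$ for every other unknown $x$, and analogously for observable, \create, and neither actions. So given a concrete reaching trace $t$ at $[\hat v,A']$, I trace back through the corresponding refined constraint to the predecessor unknowns. By the digest-propagation lemma above, the predecessor trace $t_0$ lies in class $A_0$ with $A'\in\semA{u,\act}(A_0,\dots)$, hence there *is* a refined constraint instance with exactly these digest indices; applying the induction hypothesis $\gamma(\Get[\hat u,A_0])\supseteq$ (reaching traces of digest $A_0$ at $u$), and the already-proven soundness of the *base* right-hand side $\aSem{[\hat u],\act}$ (Theorem \ref{t:sound}, invoked on the reindexed $\Get'$, which is a legitimate assignment for the base system), shows $t$ is covered by $\gamma(\Get[\hat v,A'])$. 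For observable actions one additionally checks that the side-effect redirection $\rho'=\{[x,A']\mapsto v'\}$ lands on the unknown that the refined collecting semantics side-effects to, which is again guaranteed by the digest being the unique value in $\semA{u,\act}(A_0)$. The whole argument is a fixpoint induction over the refined constraint system, exactly parallel to the proof of Theorem \ref{t:sound} but carried out on the reindexed assignments.

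The main obstacle I anticipate is bookkeeping rather than conceptual: making precise the claim that feeding the base right-hand side the reindexed assignment $\Get'$ is sound requires that $\Get'$, restricted to the unknowns the base right-hand side actually consults, agrees with a genuine solution of the *base* constraint system for the relevant slice of traces — in particular that the $A_0$-indexed component for the left unknown and the $A_1$-indexed components for the incorporated unknowns are mutually consistent (they describe disjoint roles: the ego-thread's past versus the incorporated observable trace). One must be careful with binary observing actions, where the two digests $A_0$ (ego) and $A_1$ (incorporated) are genuinely independent and the soundness of combining them rests on the second clause of \eqref{def:Asound} applied to the *pair*. Handling \create\ and \return\ (which side-effect to multiple or id-indexed unknowns) needs the same care as in Theorem \ref{t:sound}, now threaded through $\newA$ and the digest of the returning thread's trace. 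None of this is deep, but it is exactly the kind of detail that belongs in the appendix; I would state the digest-propagation lemma explicitly and then let the two fixpoint inductions run their course. The full development for the concrete instantiation used in the paper (locksets plus the chosen digest) is deferred to \cref{s:refineComplete}.
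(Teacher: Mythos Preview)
Your proposal is correct and follows essentially the same approach as the paper: you set up a digest-split concrete constraint system (the paper's $\C''$), relate its least solution to that of the original system via the determinism and soundness conditions \eqref{def:Asound}--\eqref{def:AinitSound} (the paper's Proposition~\ref{p:GetPGetPP}), and then run a fixpoint induction showing that the wrapper structure of the refined right-hand sides lets you invoke the base soundness result (Theorem~\ref{t:sound}) on the reindexed assignment---exactly what the paper does in \cref{s:refineSound}, exemplified for $\lock(a)$. Your anticipated bookkeeping obstacle (that $\Get'$ must satisfy the pointwise concretization relation on the unknowns the base right-hand side consults, rather than be a full solution) is precisely the care the paper takes when it establishes $\Get_c\,x \subseteq \gamma_x(\Get^{\sharp}_a\,x)$ before invoking the base result.
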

\begin{proof}
    A proof sketch instantiated with the actions considered in this work
    and unknowns at program points enriched with locksets
    is provided in \cref{s:refineSound}.
\end{proof}
%
%
%

\section{Analysis of Thread Ids and Uniqueness}\label{s:unique}
\begin{figure}[b]
	\vspace{-0.5em}
		\begin{minipage}[t]{6cm}
\begin{minted}[tabsize=2]{c}
main:
 x = g; // PP u1
 y = |\op{create}|(t1);
 for(i = 0; i < 5; i++) { // PP u2
	 z = |\op{create}|(t1); }
\end{minted}
		\end{minipage}\begin{minipage}[t]{4cm}
	\begin{minted}[tabsize=2]{c}
t1:
 g = 42; // PP u3
 y = |\op{create}|(t1);

	\end{minted}
		\end{minipage}
\vspace{-0.5em}
\caption{Program with multiple thread creations.}\label{f:tid}
\end{figure}
\newcommand{\main}{\textsf{main}}
\noindent We instantiate the scheme from the previous section to compute abstract thread \emph{id}s and their uniqueness.
That refinement of the base analysis
enhances precision of the analysis by excluding reads, e.g., from threads that have not yet been started.
%
For that, we identify threads by their thread creation history, i.e., by sequences of \emph{create} edges.
As these sequences may grow arbitrarily, we collect all creates
occurring after the first repetition into a \emph{set} to obtain finite abstractions.
%
\begin{example}\label{e:tid1}
In the program from \cref{f:tid}, the first thread created by \emph{main} receives the abstract thread \emph{id}
$(\main\cdot\angl{u_1,t_1},\emptyset)$. It creates a thread with abstract thread \emph{id}
$(\main\cdot\angl{u_1,t_1}\cdot\angl{u_3,t_1},\emptyset)$.
At program point $u_3$, the latter creates a thread starting at $t_1$ and receiving the abstract thread \emph{id}
$(\main\cdot\angl{u_1,t_1},\{\angl{u_3,t_1}\})$ -- as do all threads subsequently created at this edge.
\qed
\end{example}
\noindent
%
Create edges, however, may also be repeatedly encountered within the creating thread, in a loop.
To deal with this, we track for each thread, the set $C$ of possibly already encountered \emph{create} edges.
As soon as a \emph{create} edge is encountered again, the created thread receives a non-unique thread \emph{id}.

\begin{example}\label{e:tid2}
The first time the main thread reaches program point $u_2$ in the program from \cref{f:tid},
the created
thread is assigned the \emph{unique} abstract thread \emph{id} $(\main\cdot\angl{u_2,t_1},\emptyset)$.
In subsequent
loop iterations, the created threads are no longer kept separate, and thus receive the non-unique \emph{id}
$(\main,\{\angl{u_2,t_1}\})$.
\qed
\end{example}

\noindent
Formally, let $\N_C,\N_S$ denote the subsets of program points with outgoing edge labeled $x {=} \create(...)$,
and of
starting points of threads, respectively.
Let $\cal P \subseteq \N_C \times \N_S$ denote sets of pairs relating thread creation nodes with the starting points of the created threads.
%
The set $\TIDs^\sharp$ of abstract thread \emph{id}s then consists of all
pairs $(i,s)\in (\main \cdot {\cal P}^*)\times2^{\cal P}$ in which each pair $\angl{u,f}$ occurs at most once.
Given the set $\TIDs^\sharp$, we require that there is a concretization $\gamma:\TIDs^\sharp\to2^\I$ and a
function $\single:\TIDs^\sharp\to\V^\sharp_\I$ with
$\gamma\,i^\sharp\subseteq\gamma_\ValD\,(\single\,i^\sharp)$.
The abstract thread \emph{id} of the \emph{main} thread
is given by $(\main,\emptyset)$.
Therein, the elements in $(\main \cdot {\cal P}^*)\times\{\emptyset\}$ represent the \emph{unique} thread \emph{id}s
representing at most one concrete thread \emph{id},
while the elements $(i,s)$, $s\neq\emptyset$, are \emph{ambiguous}, i.e., may represent multiple concrete thread \emph{id}s.
Moreover, we maintain the understanding that the concretizations of distinct abstract thread \emph{id}s from $\I^\sharp$ all are disjoint.
%

As refining information $\A$
we consider not only abstract thread \emph{id}s  --
but additionally track sets of executed thread creations within the current thread.
Accordingly, we set
$\A = \TIDs^\sharp \times 2^P$
%
and define the right-hand sides as seen in \cref{f:threadA}, where $\bar i$ denotes the set of pairs occurring in the sequence $i$.
\begin{figure}[b]
	\begin{mdframed}[innertopmargin=0pt]
	\begin{minipage}[t]{.6\linewidth}\[
	\begin{array}{lll}
		\initA = \{((\main,\emptyset),\emptyset)\}\\[0.5ex]
		\semA{u,x{=}\create(u_1)}\,(i,C) = \{(i, C \cup \{ \angl{u,u_1} \})\}\\[0.5ex]

		\semA{u,a}\,(i,C) = \{(i, C)\} \qquad \text{(for other actions $a$)} \\[0.5ex]

		\newA\,u\,u_1\,((d,s),C) =\\
			\quad \Let\;(d',s') = (d,s)\circ\angl{u,u_1}\;\In\\
			\quad \Iif\;s'=\emptyset \land \angl{u,u_1}\in C\;\Then\; ((d,\{\angl{u,u_1}\}), \emptyset)\\
			\quad \Eelse\;((d',s'), \emptyset)
        \end{array}
    \]
    \end{minipage}%
    \begin{minipage}[t]{.4\linewidth}\[
        \begin{array}{lll}
			(d,s)\circ\angl{u,u_1} =\\
			\quad\Iif\;d = (d_0\cdot\angl{u,u_1})\cdot d_1\;\Then\;\\
			\qquad(d_0,s\cup\bar{d_1}\cup\{\angl{u,u_1}\})	\\
			\quad\Eelse\;\Iif\;s=\emptyset\;\Then\;(d\cdot\angl{u,u_1},\emptyset)	\\
			\quad\Eelse\;(d,s\cup\{\angl{u,u_1}\})
        \end{array}
        \]
    \end{minipage}
	\end{mdframed}
    \caption{Right-hand sides for thread \emph{id}s.}\label{f:threadA}
\end{figure}
\begin{example}
Consider again the program from \cref{f:tid} with right-hand sides from \cref{f:threadA},
and assume that the missing right-hand for join returns its first argument.
The initial thread has the abstract thread \emph{id} $i_0 = (\main,\emptyset)$.
At its start point, the digest thus is $(i_0,\emptyset)$.
At the create edge originating at $u_1$, a new thread with \emph{id} $(\main\cdot\angl{u_1,t_1},\emptyset)$ is created.
The digest for this thread then is $((\main\cdot\angl{u_1,t_1},\emptyset),\emptyset)$.
For the main thread, the encountered create edge $\angl{u_1,t_1}$ is added to the second component of the digest, making
it $(i_0,\{\angl{u_1,t_1}\})$.

When $u_2$ is reached with $(i_0,\{\angl{u_1,t_1}\})$, a unique thread with \emph{id} $(\main\cdot\angl{u_2,t_1},\emptyset)$ is created.
The new digest of the creating thread then is $(i_0,\{\angl{u_1,t_1},$ $\angl{u_2,t_1}\})$.
In subsequent iterations of the loop, for which $u_2$ is reached with $(i_0,\{\angl{u_1,t_1}, \angl{u_2,t_1}\})$,
a non-unique thread with \emph{id}
$(\main,\{\angl{u_2,t_1}\})$ is created.

When reaching $u_3$ with \emph{id} $(\main,\{\angl{u_2,t_1}\})$, a thread with \emph{id}
$(\main,\{\angl{u_2,t_1},$ $\angl{u_3,t_1}\})$ is created as the \emph{id} of the creating thread was already not unique.
When reaching it with the \emph{id} $(\main\cdot\angl{u_1,t_1},\emptyset)$, a new thread with \emph{id}
$(\main\cdot\angl{u_1,t_1}\cdot\angl{u_3,t_1},\emptyset)$ is created.
When the newly created thread reaches this program point, the threads created there have the \emph{non-unique} \emph{id}
$(\main\cdot\angl{u_1,t_1},\{\angl{u_3,t_1}\})$, as $\angl{u_3,t_1}$ already
appears in the \emph{id} of the creating thread.\qed
\end{example}
\noindent
Abstract thread \emph{id}s should provide us with functions
\begin{itemize}
	\item $\unique: \TIDs^\sharp {\to} \textbf{bool}$ tells whether a thread \emph{id} is unique.
	\item $\lcommondefinite: \TIDs^\sharp {\to} \TIDs^\sharp {\to} \TIDs^\sharp$ returns the last common \emph{unique} ancestor  of two threads.
	\item $\maycreate: \TIDs^\sharp {\to} \TIDs^\sharp {\to} \textbf{bool}$ checks whether a thread \emph{may} (transitively) create another.
\end{itemize}
For our domain $\TIDs^\sharp$, these can be defined as $\unique\,(i,s) = (s = \emptyset)$ and
\[
\begin{array}{lll}
\lcommondefinite\,(i,s)\,(i',s') &=& (\textsf{longest common prefix } i\,i', \emptyset)\\
\maycreate\,(i,s)\,(i',s') &=& (\bar i \cup s) \subseteq(\bar{i'} \cup s')
\end{array}
\]
%
%
We use this extra information to enhance the definitions of $\semA{u,\lock(a)}$
and $\semA{u,x' {=} \join(x)}$ to take into account that
the ego thread cannot acquire a mutex from another thread or join a thread
that has definitely not yet been created.
This is the case for a thread $t'$
\begin{itemize}
	\item[(1)] 
	that is directly created by the \emph{unique} ego thread, but the ego thread has 
	not yet reached the program point where $t'$ is created;
	\item[(2)] whose thread \emph{id} indicates that a thread that has not yet been created according to (1), is part
	of the creation history of $t'$.
\end{itemize}
Accordingly, we introduce the predicate $\textsf{may\_run}\,(i,C)\,(i',C')$ defined as
\[
\begin{array}{lll}
(\lcommondefinite\,i\,i' = i) \implies

\exists \angl{u,u'} \in C: (i {\circ} \angl{u,u'} = i' \lor \maycreate\,(i {\circ} \angl{u,u'})\,i')
\end{array}
\]
which is false whenever thread $i'$ is definitely not yet started. We then set
\[
\begin{array}{lll}
	\semA{u,\lock(a)}\,(i,C)\,(i',C') &=& \semA{u,x' {=} \join(x)}\,(i,C)\,(i',C')\\
	 &=& \begin{cases}
		\{(i,C)\} & \text{if } \textsf{may\_run}\,(i,C)\,(i',C')\\
		\emptyset & \text{otherwise}
	\end{cases}
\end{array}
\]
This analysis of thread \emph{id}s and uniqueness
can be considered as a \emph{May-Happen-In-Parallel} (or, more precisely, \emph{Must-Not-Happen-In-Parallel}) analysis.
MHP information is useful in a variety of scenarios:
a thread-modular analysis of \emph{data races} or \emph{deadlocks}, e.g.,
that does not consider thread \emph{id}s and joining yet,
can be refined by means of the given analysis of thread \emph{id}s
to exclude some data races or deadlocks that the original analysis can not.
Subsequently, we outline how the analysis from \cref{s:base} may benefit from MHP information.

\section{Exploiting Thread \emph{ID}s to Improve Relational Analyses}\label{s:not-yet-started}\label{s:self-excluded}\label{S:SELF-EXCLUDED}
We subsequently exploit
abstract thread \emph{id}s and their uniqueness to
limit the amount of reading performed by the analysis from \cref{s:base}.
\begin{itemize}
	\item[\textbf{I1}] from other threads that have not yet been created.
	\item[\textbf{I2}] the ego thread's past writes, if its thread \emph{id} is unique.
	\item[\textbf{I3}] past writes from threads that have already been joined.
\end{itemize}
Improvements \textbf{I1} and \textbf{I3} have, e.g., been realized in a setting where thread \emph{id}s and which thread is joined where
can be read off from control-flow graphs \cite{Kusano16}.
Here, however, this information is computed \emph{during} analysis.
%
In our framework,
\textbf{I1} is already achieved by refining the base analysis
according to \cref{s:unique}.
%
\begin{example}\label{e:shortcomings}
	Consider the program below where $\MM[g] = \{a,b,m_g\}$, $\MM[h] = \{a,b,m_h\}$, $\MM[i] = \{m_i\}$
	and assume $\Clusters_a =\{\{g,h\}\}$.
	\begin{center}
		\begin{minipage}[t]{6.5cm}
\begin{minted}{c}
main:
 x = |\op{create}|(t1); |\op{lock}|(a);
 // ASSERT(g==h);  (1)
 |\op{unlock}|(a);
 y = |\op{create}|(t2); |\op{lock}|(a);
 // ASSERT(g==h);  (2)
 g = 42; h = 42;
 |\op{unlock}|(a); z = |\op{create}|(t3);
 i = 3;  i = 2; // ASSERT(i==2); (3)
 i = 8;
\end{minted}
		\end{minipage}
		\begin{minipage}[t]{5cm}
\begin{minted}{c}
t1:
 |\op{lock}|(a);
 r = ?; g = r; h = r;
 |\op{unlock}|(a);

t2:
 |\op{lock}|(a); v = g; |\op{unlock}|(a);

t3:
 |\op{lock}|(a); g = 19; |\op{unlock}|(a);
\end{minted}
\end{minipage}
\end{center}
The analysis succeeds in proving assertion (1),
as the thread (starting at) $t_3$ that breaks
the invariant $g{=}h$ has definitely not been started yet
at this program point. Without refinement, the analysis from \cref{s:base} could not prove (1).
However, this alone does not suffice to prove (2).
At this program point, thread $t_2$ may already be started.
At the $\lock(a)$ in $t_2$, thread $t_3$ may also be started;
thus, the violation of the invariant $g{=}h$ by $t_3$ is incorporated
into the local state of $t_2$ at the lock.
At $\unlock(a)$, despite $t_2$ \emph{only reading} $g$, the imprecise abstract relation where $g{=}h$
does not hold, is side-effected to $[a,\{g,h\},t_2]$ and is
incorporated at the second $\lock(a)$ of the main thread.
The final shortcoming is that each thread reads all its own past (and future!) writes --
even when it is known to be unique.
This means that (3) cannot be proven.
\qed
\end{example}
\noindent
To achieve \textbf{I2}, some effort is required as our analysis forgets values of globals when they become unprotected.
This is in contrast, e.g., to \cite{Mine2014,Mukherjee2017}.
We thus restrict side-effecting to mutexes
to cases where the ego thread has possibly written a protected global
since acquiring it.
This is in contrast to \cref{s:base}, where a side-effect is performed at every unlock, i.e., everything a thread reads is
treated as if it was written by that thread.

Technically, we locally track a
map $L: (\M \times \Clusters) \to \R$, where $L\,(a,\Cluster)$ maintains for a mutex $a$,
an abstract relation between the globals in cluster $\Cluster \in \Clusters_a$.
More specifically, the abstract relation
on the globals from $\Cluster$ recorded in $L\,(a,\Cluster)$ is the one that held when $a$ was unlocked \emph{join-locally}
for the \emph{first} time after the \emph{last} \emph{join-local} write to a global in $\GM\,[a]$.
If there is no such $\unlock(a)$, the relation at program start is recorded.
We call an operation in a local trace \emph{join-local} to the ego thread,
if it is (a) \emph{thread-local}, i.e., performed by the ego thread, or (b) is executed by a thread that is (transitively) joined into the ego thread, or (c) is \emph{join-local} to
the parent thread at the node at which the ego thread is created.
This notion will also be crucial for realizing \textbf{I3}.
\emph{Join-locality} is illustrated in \cref{f:joinLocal}, where the \emph{join-local} part of a local trace is highlighted.

For \emph{join-local} contributions, by construction, it suffices to consult $L\,a$ instead of unknowns $[a,\Cluster,i]$.
Such contributions are called \emph{accounted} for.
%
To check whether a contribution from a specific thread \emph{id} is \emph{accounted} for, we introduce the
function
$\accounted: (\A \times \D_S) {\to} \A {\to} \textsf{bool}$
(see definition \eqref{e:accounted} below).
Besides an abstract value from $\R$, the local state $\D_S$ now
contains two additional components:
\begin{itemize}
	\item The map $L: (\M \times \Clusters) \to \R$ for which the join is given component-wise;
	\item The set $W: 2^\G$ (ordered by $\subseteq$) of globals that may have been written since one of its protecting mutexes has
	been locked, and not all protecting mutexes have been unlocked since.
\end{itemize}
Just like $r$, $L$ and $W$ are abstractions of the reaching local traces.
$\D_T$ is also enhanced with an $L$ component, 
while $\D_M$ remains unmodified.
\begin{figure}[b]
    \usetikzlibrary{calc}
    \usetikzlibrary{fit,shapes.geometric}
    \centering
    \tikzset{
        every node/.style={node distance=25pt and 40pt},
        programpoint/.style={circle,draw,font=\small},
        edgelabel/.style={midway,font=\small,above=0.5mm}}
    \scalebox{0.75}{
    \begin{tikzpicture}
        \node[programpoint,node distance=20pt and 55pt](ppm1){};
        \node[programpoint,node distance=20pt and 55pt,right=of ppm1](pphalb){};
        \node[programpoint,node distance=20pt and 55pt,right=of pphalb](pp1){};
        \node[programpoint,right=of pp1](pp2){};
        \node[programpoint,right=of pp2](pp3){};
        \node[programpoint,right=of pp3](pp4){};

        \node[programpoint,below right=of pphalb](ppa){};
        \node[programpoint,right=of ppa](ppb){};
        \node[programpoint,right=of ppb](ppc){};
        \node[programpoint,right=of ppc](ppd){};
        \node[programpoint,right=of ppd](ppe){};

        \node[programpoint,below left=of ppa](ppo1){};
        \node[programpoint,right=of ppo1](ppo2){};
        \node[programpoint,right=of ppo2](ppo4){};

        \draw[->](pphalb)--(pp1)node[edgelabel]{$y{=}\create(t_1)$};
        \draw[->](ppm1)--(pphalb)node[edgelabel]{$x{=}\create(t_2)$};
        \draw[->](pp1)--(pp2)node[edgelabel]{$\lock(m_g)$};
        \draw[->](pp2)--(pp3)node[edgelabel]{$g=1$};
        \draw[->](pp3)--(pp4)node[edgelabel]{$\unlock(m_g)$};

        \draw[->](ppo1)--(ppo2)node[edgelabel]{$z=12$};
        \draw[->](ppo2)--(ppo4)node[edgelabel]{$\return\,z$};

        \draw[->,blue](pphalb)--(ppa)node[edgelabel]{$\to_c$};
        \draw[->,blue](ppm1)--(ppo1)node[edgelabel]{$\to_c$};
        \draw[->,brown](ppo4) to node[edgelabel,left=2mm]{$\to_j$} (ppc);

        \draw[->](ppa)--(ppb)node[edgelabel]{$z=1$};
        \draw[->](ppb)--(ppc)node[edgelabel]{$z {=} \join(x)$};
        \draw[->](ppc)--(ppd)node[edgelabel]{$\lock(m_g)$};
        \draw[->](ppd)--(ppe)node[edgelabel]{$g=2$};

        \draw[->,red, out=-30,in=185](ppm1) to node[edgelabel, below=0mm]{$\to_{m_g}$} (pp2);
        \draw[->,red, out=230,in=130](pp4) to node[edgelabel, right=1mm]{$\to_{m_g}$} (ppd);

    \draw[draw=green, rounded corners, line width=8, fill=green!80!black, opacity=0.1]
    (ppm1.north west) -- (pphalb.north west) -- (ppa.north east) -- (ppe.north east) |-
    (ppo4.south west) |- (ppo1.south west) -- (ppm1.south west) |- (ppm1.north west);
    \end{tikzpicture}}
    \caption{Illustration highlighting the \emph{join-local} part of a local trace of the program from \cref{f:ex0},
    and which writes are thus accounted for by $L$.}

    \label{f:joinLocal}
\end{figure}
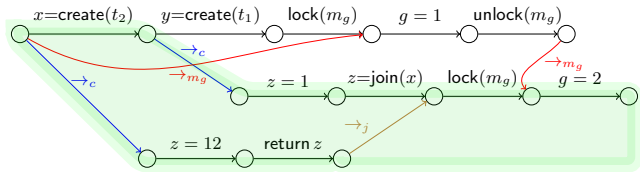
We sketch the right-hand sides here, definitions are given in \cref{f:improvedAbs}.
For \textbf{program start} $\init^\sharp$,
in contrast to the analysis from \cref{s:base}, there is no initial side-effect to the unknowns for mutexes.
The initial values of globals are \emph{join-local}, and thus accounted for in the $L$ component also passed to any subsequently created thread.

The right-hand sides for \textbf{thread creation} and \textbf{return} differ from the analysis
from \cref{s:base} enhanced with thread \emph{id}s only in the handling of additional data structures $L$ and $W$.
%
As the thread \emph{id}s are tracked precisely in the $\A$ component, this information is directly used when
determining which unknown to side-effect to and unknowns $[(i,C)]$ replace unknowns $[i',(i,C)]$.

For a \textbf{join} edge,
if the return value of the thread is not accounted for, it is assigned to the variable on the
left-hand side and the $L$ information from the ego thread and the joined thread is joined together.
If, on the other hand, it is accounted for, then the thread has already been joined and cannot be joined here again.
There is a separate constraint for each $(i',C')$, so that all threads that could be joined here are considered.

For \textbf{locking} of mutexes, upon lock, if $(i',C')$ is not accounted for, its information on the globals protected by
$a$ is joined with the \emph{join-local} information for $a$ maintained in $L\,(a,\Cluster)$, $\Cluster\in\Clusters_a$.
This information about the globals protected by $a$ is then incorporated into the local state by $\sqcap$.
For \textbf{unlocking} of mutexes, if there may have been a write to a protected global since the mutex was locked
(according to $W$), the \emph{join-local} information is updated
and the local state restricted to $\Cluster$ is
side-effected to the appropriate unknown $[a,\Cluster,(i,C)]$ for $\Cluster\in\Clusters_a$.
Just like in \cref{s:base}, $r$ is then restricted to only maintain relationships between locals and those
globals for which at least one protecting mutex is still held.
\textbf{Reading} from and \textbf{writing} to globals once more are purely local operations.
To exclude self writes, we set
\begin{equation}
\begin{array}{lll}
\accounted\,((i,C),\_)\,(i',C') = \unique\,i \land i = i'
\end{array}\label{e:accounted}
\end{equation}
The resulting analysis thus takes {\bf I1} (via $\semA{...}$ defined in \cref{s:unique}), as well as {\bf I2}
(via $\accounted$) into account.
In \cref{e:shortcomings}, it is now able to show all assertions.

\begin{figure}[t]
    \begin{mdframed}\begin{minipage}[t]{.53\linewidth}\[
        \begin{array}{lll}
            \init^\sharp_{(i,C)} =\\
            \;\;\Let\;L\,(a,\Cluster) = \aSemR{ \{g \leftarrow 0 \mid g\in\Cluster\}} \top\;\In \\
            \;\;\Let\;r = \aSemR{\self \leftarrow^\sharp i}\top\;\In\\
            \;\;(\emptyset,(\left\{ (a,\Cluster) \mapsto L\,(a,\Cluster) \mid a{\in}\M, \Cluster{\in}\Clusters_{a}\right\},\emptyset,r))\\[1ex]

            \sem{[u,S,(i,C)],x' = \join(x),(i',C')}^\sharp\Get	=\\
            \;\;\Let\;(L,W,r) = \Get\,[u,S,(i,C)]\;\In\\
            \;\;\Iif\; (\single\,i' \sqcap ((\unlift\,r)\,x) {=} \bot) \;\Then\;\\
            \;\;\;{\bf \bot}\;\Eelse \Iif\;\accounted\,((i,C),(L,W,r))\,(i',C')\\
            \;\;\Then\; {\bf \bot}\;\Eelse\;\\
            \;\;\Let\;(L',v) = \Get[(i',C')]\;\In\\
            \;\;\Let\;r' = \aSemR{x' \leftarrow^\sharp (\unlift\,v)\,\returnVar}r\;\In\\
            \;\;(\emptyset,(L \sqcup L',W,r'))\\[1ex]
%
            \sem{[u,S,(i,C)],\lock(a),(i',C')}^\sharp\Get =\\
            \;\;\Let\;(L,W,r) = \Get\,[u,S,(i,C)]\;\In\\
            \;\;\Let\;r' = \Iif\;\accounted\,((i,C),(L,W,r))\,(i',C')\;\\
            \quad\Then\;\bot\;\Eelse \bigsqcap_{\Cluster\in\Clusters_a}\Get\,[a,\Cluster,(i',C')]\;\In\\
            \;\;\left(\emptyset,\left(L,W,r \sqcap \left(\left(\bigsqcap_{\Cluster\in\Clusters_a}L\,(a,\Cluster)\right) \sqcup r'\right)\right)\right)\\[1ex]

            \sem{[u,S,(i,C)],g=x}^\sharp\Get	=\\
            \;\;\Let\;(L,W,r) = \Get\,[u,S,A]\;\In\\
			\;\; (\emptyset, (L,W \cup \{g\},\aSemR{g \leftarrow x}\,r))\\[1ex]

            \sem{[u,S,(i,C)],x=g}^\sharp\Get	=\\
            \;\; \Let\;(L,W,r)= \Get\,[u,S,A]\;\In	\\
            \;\; (\emptyset, (L,W,\aSemR{x \leftarrow g}\,r))
        \end{array}
    \]
    \end{minipage}%
    \begin{minipage}[t]{.52\linewidth}\[
        \begin{array}{lll}
            \sem{[u,S,(i,C)],x {=}\create(u_1)}^\sharp\Get	=\\
            \;\;\Let\;(L,W,r) = \Get\,[u,S,(i,C)]\;\In	\\
            \;\;\Let\;(i',C') = \newA\,u\,u_1\,(i,C)\;\In \\
            \;\;\Let\;r' = \restr{(\aSemR{\self \leftarrow^\sharp (\single\,i')}r)}{\X}\;\In\\
            \;\;\Let\;\rho {=} \{[u_1,(\emptyset,(i',C'))] {\mapsto} (L,\emptyset,r') \}\;\In\\
            \;\;(\rho,(L,W,\aSemR{x \leftarrow^\sharp \single\,i'}r))\\[1ex]

        \sem{[u,S,(i,C)],\return\,x,(i,C)}^\sharp\Get	=\\
            \;\;\Let\;(L,W,r) = \Get\,[u,S,(i,C)]\;\In\\
            \;\; \Let\;v = \restr{\left(\aSemR{\returnVar \leftarrow x}\,r\right)}{\{\returnVar\}}\;\In\\
            \;\; \Let\;\rho = \{ [(i,C)] \mapsto (L,v)\}\;\In\\
            \;\; (\rho,(L,W,r))\\[1ex]

        \sem{[u,S,(i,C)],\unlock(a),(i,C)}^\sharp\Get	=\\
            \;\;\Let\;(L,W,r) =\Get\,[u,S,(i,C)]\;\In\\
            \;\;\Let\;(L',\rho) = \Iif\;\GM[a]{\cap} {W} {=} \emptyset\;\Then\; (L,\emptyset)\\
            \;\;\quad\Eelse\;(L \oplus \{(a,\Cluster) \mapsto \restr{r}{\Cluster} \mid \Cluster\in\Clusters_a\},\\
            \;\;\quad\quad \{ [a,\Cluster,(i,C))] \mapsto \restr{r}{\Cluster} \mid \Cluster\in\Clusters_a \})\\
            \;\;\In\\
            \;\;\Let\;r' = \restr{r}{\X \cup \bigcup \{\GM[a'] \mid a' \in (S\setminus{a})\}}\;\In\\
            \;\;\Let\;W' {=} \{ {W} \mid g {\in} W, \MM[g]\cap S{\setminus}\{a\} {\neq} \emptyset\}\\
            \;\;\In\;(\rho,(L',W',r'))
        \end{array}
        \]
    \end{minipage}\end{mdframed}
    \caption{Right-hand sides for the improved (\textbf{I1}, \textbf{I2}) analysis using thread \emph{id}s.}\label{f:improvedAbs}
\end{figure}
\begin{theorem}
    This analysis is sound w.r.t.\ to the local trace semantics.
\end{theorem}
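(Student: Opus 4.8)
The plan is to derive soundness by combining the two soundness results already available, \cref{t:sound} for the base analysis and \cref{t:refined} for local-trace refinement, and to treat the genuinely new ingredients by a direct fixpoint induction. The analysis of \cref{f:improvedAbs} differs from the base analysis of \cref{s:base} only in two respects: (i) unknowns for program points and observable actions are split by the thread-\emph{id} digest of \cref{s:unique}, and (ii) the local state $\D_S$ carries the extra components $L$ and $W$ (and $\D_T$ an $L$-component), with side-effecting to mutex unknowns gated on $W$. For (i), the digest of \cref{s:unique} with the \textsf{may\_run}-strengthened transfer functions is — once its defining requirements are checked — an instance of the framework of \cref{s:refinement}, so \cref{t:refined} shows that the corresponding digest split is sound w.r.t.\ the collecting local trace semantics; for (ii) we extend the soundness invariant used in the proof of \cref{t:sound} (\cref{s:baseSound}) by clauses for $L$ and $W$ and verify these clauses are preserved along every edge of \cref{f:improvedAbs}.

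The only nontrivial requirement to re-check for (i) is \eqref{def:Asound} for the \textsf{may\_run}-guarded $\semA{u,\lock(a)}$ and $\semA{u,x' {=} \join(x)}$, i.e.\ that \textsf{may\_run} is never spuriously \emph{false}. I would prove the lemma: if $e=(u,\lock(a),v)$ (resp.\ $(u,x' {=} \join(x),v)$), the incorporated trace $t_1$ ends in $\unlock(a)$ (resp.\ $\return$), and $\sem{e}(t_0,t_1)\neq\emptyset$, then $\textsf{may\_run}\,(\alpha_\A t_0)\,(\alpha_\A t_1)$. The argument inspects the creation forest recorded in $t_0$. If the ego's abstract \emph{id} is ambiguous or is not an ancestor of the observed thread's abstract \emph{id} $i'$, then $\lcommondefinite$ does not return it and \textsf{may\_run} holds vacuously. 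Otherwise the ego is a unique ancestor of $i'$ and, since $i'$ occurs in $t_0$, it has been created; hence $t_0$ contains the \emph{create}-edge on the creation path directly below the ego, and — by the way the second digest component accumulates \emph{create}-edges encountered by the ego — that edge lies in $C$ and either directly produces $i'$ or, via $\maycreate$, transitively leads to it, which is exactly the disjunction defining \textsf{may\_run}. Requirements \eqref{def:AnewSound} and \eqref{def:AinitSound} are unchanged from \cref{s:unique}, so \cref{t:refined} applies.

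For (ii), the invariant to add is: $W$ over-approximates, for each reaching local trace, the set of globals written since one of their protecting mutexes was last acquired without all of them having been released since; and, for each $(a,\Cluster)$, $L\,(a,\Cluster)$ over-approximates the relation on $\Cluster$ holding at the first \emph{join-local} $\unlock(a)$ after the last \emph{join-local} write to a global in $\GM[a]$, and the program-start relation on $\Cluster$ if there is none (with \emph{join-local} as defined in \cref{s:self-excluded}). Preservation is immediate for reads, writes (which only enlarge $W$), guards and \emph{return}; for thread creation it holds because the child's \emph{join-local} past coincides with the parent's at the creation node, so passing the parent's $L$ on and resetting $W$ is sound. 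The substantial cases are \emph{join} — where $L\sqcup L'$ is justified by the \emph{pthreads} restriction that at most one thread joins a given \emph{id}, preventing double counting — and \emph{lock}/\emph{unlock}. At an $\unlock(a)$ with $\GM[a]\cap W=\emptyset$, omitting the side-effect and leaving $L\,(a,\cdot)$ unchanged is sound: no \emph{join-local} write to $\GM[a]$ occurred in this critical section, so the relation $L\,(a,\Cluster)$ is meant to record is unaffected, and since every $g\in\GM[a]$ is write-protected by $a$, that relation was already side-effected to $[a,\Cluster,\cdot]$ by whichever thread performed that last write. At a $\lock(a)$, one must show that $r\sqcap((\bigsqcap_{\Cluster\in\Clusters_a}L\,(a,\Cluster))\sqcup r')$, with $r'$ the join of the $[a,\Cluster,(i',C')]$ over all $(i',C')$ that are not $\accounted$, over-approximates every $\unlock(a)$-configuration (and program start) that may be incorporated here: split on whether the incorporated critical section is \emph{join-local} to the ego — then its effect on $\Cluster$ is below $\bigsqcap_{\Cluster\in\Clusters_a}L\,(a,\Cluster)$ by the $L$-clause — or not — then $(i',C')$ is not $\accounted$, since $\accounted$ here excludes only the ego's own unique \emph{id} whose contribution resides in $L$, hence it is below $r'$ — and conclude by monotonicity of $\sqcap$.

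The main obstacle is calibrating the $L$ and $W$ clauses so that the \emph{lock} and \emph{unlock} cases both close: because side-effecting is $W$-gated, a read-only critical section leaves no entry in $[a,\Cluster,(i,C)]$, so soundness at a later $\lock(a)$ has to be recovered from the fact that its relation is subsumed, along the mutex's locking order $\to_a$ within the trace, by the earlier unlocks that \emph{do} side-effect together with the relevant $L$-maps. This forces the invariant to be a coupled statement about $r$, $L$, $W$ and the mutex unknowns, and its preservation proof to include an induction along $\to_a$ inside each trace rather than a purely edge-local check; keeping this invariant both provable and strong enough for the \emph{lock} case is where the real work lies. Once it is in place, composing with the soundness of the refined collecting local trace semantics w.r.t.\ the original (established in the proof of \cref{t:refined}) and with the equivalence of the local trace semantics and the interleaving semantics (\cref{s:traces}) yields the theorem.
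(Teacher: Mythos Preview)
Your plan is sound in outline and correctly identifies the two ingredients that go beyond \cref{t:sound} and \cref{t:refined}: the \textsf{may\_run}-strengthened digest and the $W$-gated side-effecting together with the $L$-component. Your check of \eqref{def:Asound} for \textsf{may\_run} is essentially right, though the sentence ``since $i'$ occurs in $t_0$'' is imprecise: $i'$ need not occur in $t_0$, but if the ego is a unique ancestor of $i'$ then, by causality of the combined trace, the first create on the path from ego to $i'$ lies in ego's own swim lane before the lock and hence in $C$ --- that is the argument you want.

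Where your route genuinely diverges from the paper is in handling the $W$-gated side-effects. You propose to strengthen the invariant of \cref{s:baseSound} by coupled clauses for $L$, $W$ and the mutex unknowns, and you correctly note that this forces an inner induction along the locking order $\to_a$ to recover, at $\lock(a)$, the relation from the last \emph{writing} unlock when the immediately preceding unlock did not side-effect. The paper avoids this nested induction by a different device: it temporarily enriches the digest $\A$ with, for each mutex, the abstract thread \emph{id} (and a counter) of the thread performing the last $\unlock(a)$ that followed a write to $\GM[a]$. With this (infinite) splitting, the concretization of $[a,\Cluster,\cdot]$ is indexed exactly by the ``last writer'', so the gated side-effect becomes trivially complete and the $\lock$ case closes without any induction along $\to_a$; the infinite splitting is then collapsed back at the end. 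Your approach is more elementary and stays within the original set of unknowns, at the price of the harder coupled invariant; the paper's approach trades that difficulty for an auxiliary refinement that makes each step local. A minor point: your justification of $L\sqcup L'$ at \textsf{join} (``preventing double counting'') is not the crux --- the real reason is that after the join the last \emph{join-local} write-unlock for each $(a,\Cluster)$ is either ego's or the joined thread's, and since the analysis cannot decide which, the pointwise join is the sound over-approximation.
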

\begin{proof}
\noindent The proof relies on the following observations:
\begin{itemize}
    \item When $\GM[a]\cap W = \emptyset$, no side-effect is required.
	\ignore{ 
	Any thread locking $a$ following this unlock,
        will either consult unknowns that writes were side-effected to, at the unlock $a$ immediately succeeding the write,
        or will have accounted for those values via $L\,a$.
	}
    \item Exclusions based on $\accounted$ are sound, i.e., it only excludes \emph{join-local} writes.
\end{itemize}
The detailed proof is a simplification of a correctness proof for the enhanced analysis from \cref{s:joins}, which we outline in \cref{s:correctness-joins}.\qed
\end{proof}

\noindent The analysis does not make use of components $C$ at unknowns $[a,\Cluster,(i,C)]$ and $[i,C]$.
We detail in \cref{s:not-read-ancestors} how this information
can be exploited to exclude a further class of writes -- namely, those that are performed
by an ancestor of the ego thread before the ego thread was created.
Alternatively,
an implementation may
abandon control-point splitting according to $C$
at mutexes and thread \emph{id}s, replacing the unknowns $[a,\Cluster,(i,C)], [i,C]$ with $[a,\Cluster,i]$ and $[i]$, respectively.

When turning to improvement \textbf{I3}, we observe that after joining
a thread $t$ with a \emph{unique} thread \emph{id}, $t$ cannot perform further writes.
As all writes of joined threads are \emph{join-local} to the ego thread,
it is not necessary to read from the corresponding global unknowns.
We therefore enhance the analysis to also track in the local state, the set $J$ of thread \emph{id}s
for which join has definitely been called in the \emph{join-local} part of the local trace and refine
$\accounted$ to take $J$ into account:
\[
\begin{array}{lll}
\accounted\,((i,C),(J,L,W,r))\,(i',C')=\unique\,i'\land(i=i' \lor i'\in J)
\end{array}
\]
Details on this enhancement can be found in \cref{s:joins}.

\section{Exploiting Clustered Relational Domains}\label{s:clustering}
Naively, one might assume that tracking relations among a larger set of globals is necessarily more precise than 
between smaller sets.
Interestingly, this is no longer true for our analyses, e.g., in presence of thread \emph{id}s. A similar effect where relating more
globals can deteriorate precision has also been observed in the context of an analysis using a data-flow
graph to model interferences~\cite{Farzan12}.
%

\begin{example}
Consider again \cref{e:cluster} in the introduction with $\Clusters_a = \{\{g,h,i\}\}$.
%
%
For this program, the constraint system of the analysis has a unique least solution.
It verifies that
assertion $(1)$ holds.
It assures for $[a,\{g,h,i\},t_1]$ that $h=i$ holds, while for the \emph{main} thread and the program point before each assertion,
$L\,(a,\{g,h,i\}) = \{ g=h, h=i \}$ holds, while
for $[a,\{g,h,i\},\textsf{main}]$ and $[a,\{g,h,i\},t_2]$ only $\top$ is recorded, as is for any relation
associated with $m_g$, $m_h$, or $m_i$.
Assertion (2), however, will not succeed, as the side-effect from $t_1$ causes the older values
from the first write in the main thread to be propagated to the assertions as well, implying
that while $h=i$ is proven, $g=h$ is not.\qed
\end{example}

\noindent
Intuitively, the reason the analysis loses precision is that, at an unlock of mutex $a$,
the current relationships between \emph{all} clusters protected by $a$ are side-effected.
As soon as a single global is written to, the analysis behaves as if \emph{all} protected globals had been written.
A better result thus may be hoped for, if publishing is limited to those clusters for which at least one global has
been written, such that more precise information may remain at others.
%
%

Accordingly, in the improved analysis, when \textbf{unlocking} a mutex $a$, side-effects are only produced to those clusters $\Cluster\in\Clusters_a$
containing at least one global that was written to since the last operation $\lock(a)$.
The definitions for locking and unlocking are given in~\cref{f:clusteredAbs}.

For \textbf{locking} the mutex $a$, on the other hand, the abstract value to be incorporated into the local state is assembled from the contributions of
different threads to the clusters.
In order to allow that, the separate constraints for each admitted digest from \cref{s:refinement} are combined into one constraint
for the set ${\bf I} = \{ (i',C') \mid (i,C) \in \aSem{\lock(a)}_\A ((i,C),(i',C')) \}$ of \emph{all} admitted digests.
This is necessary as side-effects to unaffected clusters at $\unlock(a)$ have been abandoned and thus the meet with the values for clusters
of one thread at a time is unsound.
For each cluster $\Cluster$, the join-local information $L\,(a,\Cluster)$ is joined with all contributions to $\Cluster$
by threads that are not yet \emph{accounted} for, but admitted for $\Cluster$ by the digests.
Here, the contributions of threads that do \emph{not} write $\Cluster$ is $\bot$, and thus do not affect the value for
$\Cluster$.
Finally, the resulting value is used to improve the local state by the operation \emph{meet}.
The right-hand side for $\lock(a)$ thus exploits the fine-grained, per-cluster MHP information provided by the digests and the predicate
$\accounted$.
%
%
%
By construction, we have:
\begin{theorem}\label{t:clusters}
Given domains $\R$ and $\ValD$ fulfilling the requirements from \cref{def:sound},
any solution of the constraint system is sound w.r.t.\ the local trace semantics.
Maximum precision is obtained with $\Clusters_a = 2^{\GM[a]}$.
\qed
\end{theorem}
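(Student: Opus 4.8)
The plan is to treat the two assertions of the theorem separately.

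For \textbf{soundness}, I would reuse, essentially verbatim, the fixpoint-induction argument behind \cref{t:sound} and the soundness proof of the \textbf{I1}/\textbf{I2} analysis, since the constraint system underlying \cref{f:clusteredAbs} differs from the latter only in the right-hand sides for $\lock(a)$ and $\unlock(a)$. Concretely: fix a solution, attach to every unknown the set of local traces it is meant to describe, and show by induction along the construction of a local trace that every reaching trace is captured by the value stored at the corresponding unknown. The only genuinely new obligations are the two already flagged in the text. First, at $\unlock(a)$ we now side-effect only to clusters $\Cluster$ containing a global in $W$; I must show that a cluster all of whose globals were untouched since the matching $\lock(a)$ still has a sound value available to later readers --- which holds because that value was side-effected either at the $\unlock(a)$ immediately following the last \emph{join-local} write to the cluster, or never, in which case it is the program-start value, and in both cases it is additionally carried in the $L$-component travelling with the ego thread. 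Second, at $\lock(a)$ we now perform a single meet in which, per cluster $\Cluster$, $L(a,\Cluster)$ is joined with the contributions of \emph{all} admitted digests in ${\bf I}$ that are not yet \emph{accounted} for; I must verify that this over-approximates the relation on $\Cluster$ that actually holds when the lock is taken. Checking this second point --- in particular that fusing the per-digest constraints into one exactly compensates for having abandoned the side-effects to unwritten clusters at $\unlock(a)$, so that no required contribution goes missing --- is where I expect the bulk of the work to lie.

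For \textbf{maximal precision at $\Clusters_a = 2^{\GM[a]}$}, I would prove the stronger statement that refining the clustering can only refine the least solution. Let $\Clusters_a \subseteq \widetilde\Clusters_a$ for every mutex $a$, with least solutions $\eta$ and $\widetilde\eta$ of the respective constraint systems. Transport $\eta$ into the unknown-space of the finer system by keeping the value of every shared unknown and of every shared cluster-entry of each $L$-component, and setting every genuinely new unknown $[a,\Cluster,(i,C)]$ (and every new entry of every $L$-component) to $\top$; call the result $\hat\eta$. I claim $\hat\eta$ is a post-fixpoint of the finer right-hand-side operator: on the new unknowns this is immediate, since every right-hand side writes there only a restriction $\restr{r}{\Cluster}\sqsubseteq\top$ or a join involving $\top$; on the shared unknowns it holds because meeting in the extra $\top$-valued clusters inside the $\lock(a)$ right-hand side is a no-op, so the finer operator restricted to shared unknowns agrees with the coarser operator applied to $\eta$, which is $\sqsubseteq\eta$ by the post-fixpoint property of $\eta$. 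By Knaster--Tarski, $\widetilde\eta\sqsubseteq\hat\eta$, hence on every shared unknown --- in particular on the relational local-state component at each program point, against which assertions are discharged --- the finer solution is at least as precise as the coarser one. Since $\Clusters_a\subseteq 2^{\GM[a]}$ for every admissible clustering, instantiating $\widetilde\Clusters_a = 2^{\GM[a]}$ gives the theorem. The point that needs care here is the bookkeeping: the local state carries the cluster-indexed map $L$ and the write-set $W$, so ``shared component'' has to be spelled out precisely, and one must confirm that the $\top$-padding of new $L$-entries survives the $\unlock(a)$ right-hand side and is neutral in the $\lock(a)$, \emph{create}, \emph{return} and \emph{join} right-hand sides.
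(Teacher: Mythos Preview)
The paper offers no proof for this theorem beyond the phrase ``By construction'' preceding the statement and the \qed\ inside it, so there is nothing to compare against at the level of argument. Your plan is therefore considerably more detailed than what the paper supplies.

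Your \textbf{soundness} part is on target: the constraint system differs from the \textbf{I1}/\textbf{I2} analysis only in the $\lock$/$\unlock$ right-hand sides, and you correctly isolate the two new obligations --- that omitting side-effects to unwritten clusters at $\unlock(a)$ is safe because those values are still reachable via $L$ or an earlier side-effect, and that the per-cluster join over all admitted digests at $\lock(a)$ over-approximates every concrete post-lock state. These are exactly the points one must discharge, and the sketch you give for the first is the right intuition.

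Your \textbf{precision} argument via a $\top$-padded post-fixpoint and Knaster--Tarski is elegant, but there is a gap you should name explicitly: the constraint operator is \emph{not obviously monotone}. The culprit is the destructive update $L' = L \oplus \{(a,\Cluster)\mapsto\restr{r}{\Cluster}\mid \Cluster\in\Clusters'\}$ at $\unlock(a)$, where $\Clusters'$ depends on $W$. If $(L_1,W_1,r_1)\sqsubseteq(L_2,W_2,r_2)$ and some cluster $\Cluster$ satisfies $\Cluster\cap W_2\neq\emptyset$ but $\Cluster\cap W_1=\emptyset$, then $L'_1(a,\Cluster)=L_1(a,\Cluster)$ while $L'_2(a,\Cluster)=\restr{r_2}{\Cluster}$, and nothing forces $L_1(a,\Cluster)\sqsubseteq\restr{r_2}{\Cluster}$. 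Without monotonicity, ``$\hat\eta$ is a post-fixpoint'' does not automatically yield ``least solution $\sqsubseteq\hat\eta$''. You can likely repair this either by replacing the Knaster--Tarski step with a direct induction on the iterates from $\bot$ (showing each iterate of the finer system stays below $\hat\eta$), or by establishing an invariant on reachable abstract states that restores monotonicity along the iteration; either way, this point deserves an explicit treatment rather than being folded into ``bookkeeping''.
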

  For \cref{e:cluster}, with $\Clusters_a = 2^{\GM[a]}$, both
  assertions are verified.
Performing the analysis with all possible subclusters simultaneously can be rather expensive
when sets $\GM[a]$ are large.
The choice of a most appropriate subclustering thus generally involves a trade-off between precision and runtime.
The situation is different if the relational domain is $k$-decomposable:
\begin{theorem}\label{t:optimal}
Provided the relational domain is $k$-decom\-po\-sable (Equations \eqref{def:decomp}), the clustered analysis
using all subclusters of sizes at most $k$ only,
is equally precise as the clustered analysis
using all subclusters $\Clusters_a = 2^{\GM[a]}$ at mutexes $a$.
\end{theorem}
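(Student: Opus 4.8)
The plan is to show that the two analyses compute, at every unknown, values that are mutually approximations of one another, and hence — since both are least solutions of their respective constraint systems — literally equal after projecting to the $k$-bounded clusters. I would set up a relation between solutions of the "all subclusters" system $\mathcal{C}^{2^{\GM[a]}}$ and the "clusters of size $\le k$" system $\mathcal{C}^{\kClusters}$, and argue each direction separately. The key structural fact supplied by $k$-decomposability, Equations~\eqref{def:decomp}, is that for any $r\in\R$ and any cluster $Q$, $\restr{r}{Q} = \bigsqcap\{\restr{r}{Q'}\mid Q'\in\kClusters,\ Q'\subseteq Q\}$ (applying the first identity of~\eqref{def:decomp} inside $\R^{Q}$, which is itself a $k$-decomposable relational domain on the variable set $Q$), and dually that least upper bounds of restrictions to a large cluster $Q$ are recovered from the least upper bounds of restrictions to its size-$\le k$ subclusters. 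These two facts are exactly what is needed to commute the meet-of-clusters and join-over-threads operations that appear in the right-hand sides for $\lock$ and $\unlock$ in \cref{f:clusteredAbs}.

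First I would prove the easy direction: any solution of $\mathcal{C}^{2^{\GM[a]}}$ restricts to a solution of $\mathcal{C}^{\kClusters}$. Concretely, given a solution $\eta$ of the big system, define $\eta'$ by keeping all unknowns for program points unchanged and setting $\eta'[a,Q,\cdot] = \eta[a,Q,\cdot]$ for $Q\in\kClusters$ (the small system only has these). One checks each constraint: for $\unlock(a)$, the side-effect to $[a,Q,(i,C)]$ is $\restr{r}{Q}$ where $r$ is the local state, and this is identical in both systems since $Q$ is a legal cluster on both sides; the "written cluster" guard $W\cap Q\neq\emptyset$ is likewise cluster-local. For $\lock(a)$, the value meet-incorporated into the local state is $\bigsqcap_{Q}\big(L(a,Q)\sqcup\bigsqcup\{\text{contributions to }Q\}\big)$; restricting the big-system meet (over all $Q\subseteq\GM[a]$) to the subfamily $\kClusters$ can only make it larger (meet over a subset), so $\eta'$ satisfies the $\lock$ constraint with $\sqsupseteq$. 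Hence $\eta'$ is a (post-)solution of the small system, so the least solution of the small system is $\sqsubseteq$ the restriction of the least solution of the big one; equivalently the small analysis is at least as precise.

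The real content is the converse: the small analysis is no \emph{more} precise, i.e. from the least solution $\eta'$ of $\mathcal{C}^{\kClusters}$ one can build a solution of $\mathcal{C}^{2^{\GM[a]}}$ whose size-$\le k$ components coincide with $\eta'$. I would define, for a large cluster $Q\subseteq\GM[a]$, $\eta[a,Q,(i,C)] := \bigsqcap\{\eta'[a,Q',(i,C)]\mid Q'\in\kClusters,\ Q'\subseteq Q\}$, keep program-point unknowns as in $\eta'$, and verify the constraints. For $\unlock(a)$: the big system side-effects $\restr{r}{Q}$ to $[a,Q,\cdot]$; by the $k$-decomposability identity $\restr{r}{Q}=\bigsqcap_{Q'\subseteq Q}\restr{r}{Q'}$, and since each $\restr{r}{Q'}$ is exactly what $\eta'$ receives at $[a,Q',\cdot]$, the defining meet for $\eta[a,Q,\cdot]$ absorbs it. (One must also check the write-guard is handled: if some $g\in Q$ is written then some size-$\le k$ subcluster containing $g$ is written, so the relevant $Q'$'s do get the fresh side-effect; clusters $Q'\subseteq Q$ with no written global retain their previous — still sound — value, and their contribution to the meet is harmless.) For $\lock(a)$: here I would use the \emph{second} identity of~\eqref{def:decomp}, $\restr{(\bigsqcup R)}{Q} = \bigsqcup\{\restr{r}{Q}\mid r\in R\}$, to show that the per-cluster join-of-thread-contributions computed on the big clusters equals the meet over size-$\le k$ subclusters of the corresponding small-cluster computations, so the value meet-incorporated into the local state is identical in both systems; this uses that $\R^{Q}$ is closed under the relevant operations and that $\restr{(\cdot)}{Q'}$ is monotone and distributes over $\sqcup$ on it. The main obstacle is exactly this $\lock$ case: one has to be careful that the "contribution of a thread that does not write $Q$ is $\bot$" convention interacts correctly with $\restr{\bot}{Q'}=\bot$ and with the join-before-restrict vs restrict-before-join order, and that pruning of non-admitted digests via $\accounted$ and the combined digest set $\mathbf{I}$ is the same family on both sides (it is, since $\accounted$ and $\semA{}$ depend only on thread-id/lockset digests, not on clusters). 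Finally, the two inequalities between least solutions give equality on the size-$\le k$ unknowns and on all program-point unknowns, and since the verdicts of the analysis (the assertions) are read off from program-point unknowns, the two analyses are equally precise, which is the claim. $\qed$
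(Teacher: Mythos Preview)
Your overall strategy of comparing least solutions via mutual post-solution constructions is reasonable, but both directions as you have written them contain genuine errors, and you have in fact swapped which direction is trivial and which one carries the content.

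In your ``easy'' direction you claim that the restriction $\eta'$ of a solution of the big system to the small-cluster unknowns is a post-solution of the small system. Your argument for the $\lock$ constraint is that the meet over all clusters is $\sqsubseteq$ the meet over the small clusters only. This inequality is correct, but it points the wrong way: the small system's lock right-hand side is \emph{larger} than the big system's, so from $\eta^{\text{big}}[v]\sqsupseteq r\sqcap M_{\text{big}}$ and $M_{\text{big}}\sqsubseteq M_{\text{small}}$ you cannot conclude $\eta^{\text{big}}[v]\sqsupseteq r\sqcap M_{\text{small}}$. To make this direction go through you need precisely the paper's key step: in the big system's least solution, for small $Q\subseteq Q'$ one has $\eta[a,Q,\cdot]\sqsubseteq\restr{\eta[a,Q',\cdot]}{Q}$ (whenever the small cluster $Q$ receives a side-effect so does the larger $Q'$, with a value whose restriction to $Q$ coincides), and then by the first identity of $k$-decomposability the large-cluster term is $\sqsupseteq$ the meet of its small-subcluster values, hence $M_{\text{big}}=M_{\text{small}}$ at the big fixpoint. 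That equality is the entire content of the paper's proof, and it is exactly what your ``meet over a subset'' remark does not supply.

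In your ``hard'' direction the specific extension $\eta[a,Q,(i,C)]:=\bigsqcap\{\eta'[a,Q',(i,C)]\mid Q'\in\kClusters,\ Q'\subseteq Q\}$ is not a post-solution of the big system. The obstacle is the write-guard you yourself flag: if $Q\cap W\neq\emptyset$ the big system side-effects $\restr{r}{Q}$ to $[a,Q,(i,C)]$, but for a subcluster $Q'\subseteq Q$ with $Q'\cap W=\emptyset$ the small system does \emph{not} side-effect $\restr{r}{Q'}$ to $[a,Q',(i,C)]$; if thread $(i,C)$ never writes any variable of $Q'$ then $\eta'[a,Q',(i,C)]=\bot$, your meet collapses to $\bot$, and the unlock constraint $\bot\sqsupseteq\restr{r}{Q}$ fails. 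Your parenthetical ``their contribution to the meet is harmless'' is therefore false. The fix is elementary and shows this direction needs no $k$-decomposability at all: extend by setting the large-cluster unknowns (and the large-cluster entries of $L$ in the local state) to $\top$. Then the extra meet-terms at $\lock$ are $\top$ and contribute nothing, so the big right-hand side evaluated on the extension equals the small one, while the unlock side-effects to large clusters are trivially absorbed by $\top$.
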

\begin{proof}
Consider a solution $\Get$ of the constraint system with $\Clusters_a = 2^{\GM[a]}$.
Then for unknowns $[a,Q,(i,C)]$ and $[a,Q',(i,C)]$ with $Q\subseteq Q'$ and $\abs{Q} \leq k$, and
values $r{=}\Get\,[a,Q,(i,C)]$, $r'{=}\Get\,[a,Q',(i,C)]$, we have that $r \sqsubseteq \restr{r'}{Q}$
(whenever the smaller cluster receives a side-effect, so does the larger one).
Thus, by $k$-decomposability, the additional larger clusters $Q'$, do not improve the
\emph{meet} over the clusters of size at most $k$ for individual thread \emph{id}s as well as
the \emph{meet} of their \emph{join}s over all thread \emph{id}s.
The same also applies to the clustered information stored in $L$.
%
\qed
\end{proof}
\begin{example}
  Consider again \cref{e:cluster}. If the analysis is performed with clusters
  $\Clusters_a = \{\{h,i\},\{g,h\},\{g,i\},\{g\},\{i\},\{h\}\}$ both assertions can be proven. \qed
\end{example}

\begin{figure}[t]
  \begin{mdframed}
  \begin{minipage}[t]{.57\linewidth}\[
      \begin{array}{lll}
        \sem{[u,S,(i,C)],\unlock(a),(i,C)}^\sharp\Get	=\\
        \;\;\Let\;(L,W,r) =\Get\,[u,S,(i,C)]\;\In\\
        \;\;\Let\;\Clusters' = \{ \Cluster \mid \Cluster\in \Clusters_a, \Cluster \cap W \neq \emptyset \} \;\In\\
        \;\;\Let\;L' = L \oplus \{(a,\Cluster) \mapsto \restr{r}{\Cluster} \mid \Cluster\in \Clusters'\}\;\In\\
        \;\;\Let\;\rho = \{[a,\Cluster,(i,C)] \mapsto \restr{r}{\Cluster} \mid \Cluster\in\Clusters'\}\;\In\\
        \;\;\Let\;r' = \restr{r}{\X \cup \bigcup \{\GM[a'] \mid a' \in (S\setminus{a})\}}\;\In\\
        \;\;\Let\;W' = \{ W \mid g \in W, \MM[g] \cap S\setminus\{a\} \neq \emptyset\}\;\In\\
        \;\;(\rho,(L',W',r''))
      \end{array}
  \]
  \end{minipage}%
  \begin{minipage}[t]{.48\linewidth}\[
      \begin{array}{lll}
        \sem{[u,S,(i,C)],\lock(a),{\bf I}}^\sharp\Get =\\
        \;\; \Let\;(L,W,r) = \Get\,[u,S,(i,C)]\;\In\\
        \;\;\Let\: l = ((i,C),(L,W,r)) \;\In\\
        \;\;\Let\; J(\Cluster) = \bigsqcup  \left\{ \Get\,[a,\Cluster,(i',C')] \mid \right. \\
        \;\;\qquad\left. (i',C') \in {\bf I}, \neg \accounted\;l\,(i',C')\right\}\;\In\\
        \;\;\Let\: r' = \bigsqcap_{\Cluster\in \Clusters_a} \left(J(Q) \sqcup L\,(a,\Cluster) \right)\;\\
        \;\;\In\\
        \;\;(\emptyset,(L,W,r \sqcap r'))
      \end{array}
      \]
  \end{minipage}
  \end{mdframed}
  \caption{Right-hand sides for unlocking and locking when limiting side-effecting to potentially written clusters.}\label{f:clusteredAbs}

\end{figure}
\noindent
The one element clusters, on the other hand, cannot be abandoned -- as indicated by the example from \cref{e:one}
in the additional material.

\section{Experimental Evaluation}\label{s:experiments}
We implemented the analyses extending the context-sensitive static analyzer \textsc{Goblint}.
It supports the dynamic creation of mutexes and provides the set of protecting mutexes for each global.
The implementation tracks information about integral variables using either the \emph{Interval} or the \emph{Octagon} domains from \textsc{Apron}~\cite{JeannetM2009}.
A comparison with other tools is difficult, for details see \cref{a:more-experiments}:
\begin{itemize}
    \item \textsc{Duet}~\cite{Farzan12} ---
	The benchmark suite from the paper is only available as binary goto-programs which neither the current version of \textsc{Duet} nor any other tool considered here can consume.
%
     Since \textsc{Duet} does not support function calls and its inlining support is limited, it could only be run on \emph{some} benchmarks.
    \item \textsc{AstréeA}~\cite{Mine2014} --- A public version is available but not licensed for evaluation.
    \item \textsc{Watts}~\cite{Kusano16} --- Since we were unable to run the tool on any program, we compared with the numbers reported by the authors.
    \item \textsc{NR-Goblint}~\cite{traces1} --- \textsc{Goblint} with the non-relational analyses from \cite{traces1}.
\end{itemize}
We considered four different settings for the analyses, namely, \textit{Interval:} the analysis from \cref{s:base} with Intervals only;
\textit{Octagon:} the analysis from \cref{s:base} with Octagons;
\textit{TIDs:} the analysis from \cref{s:self-excluded} with the enhancement \cref{s:joins} with Octagons;
\textit{Clusters:} \textit{TIDs} using clusters of size at most $2$ only.
All benchmarks were run in a virtual machine on an \textsc{AMD EPYC 7742} 64-Core processor\footnote{The analyzer is single-threaded, so it only used one (virtual) core per analysis job.} running Ubuntu 20.04.
%
The results of our evaluation are summarized in \cref{tab:summary}.

\ignore{
	In particular, we considered the tools \textsc{Duet}~\cite{Farzan12}, \textsc{Sparrow}~\cite{Oh14}, 
	\textsc{Watts}~\cite{Kusano16} and \textsc{AstréeA}~\cite{Mine2014}.
	To our knowledge, there is no public version of \textsc{AstréeA} available for experimentation.
	The \textsc{Sparrow} tool is, at the time of writing, not capable of performing an analysis of multi-threaded code.
	The \textsc{Watts} repository has not received updates in the last five years.
	We tried running \textsc{Watts}, but executing it according to the description provided in the repository failed with an exception.
	We tried to compare to \textsc{Duet} on the benchmark set proposed in \cite{Farzan12}.
	Running the current version of the tool on the benchmarks resulted in difficulties.
	\textsc{Duet} relies on inlining to deal with function calls, but it could not handle calls using function pointers,
	which is necessary to analyze the drivers instrumented with \textsc{DDVerify}.
	However, we executed the tool on our own set of benchmarks, and compared it to different configurations of our analysis.
	The results can be seen in \cref{tab:toy}.
}
\begin{table}[t]
    \centering
    \caption{Summary of evaluation results, with individual programs grouped together.
    For each group the number of programs and the total number of assertions are given.
    \ding{51} indicates all assertions proven, \ding{55} none proven, otherwise the number of proven assertions is given.
    (---) indicates invalid results produced.}
    \label{tab:summary}
    \resizebox{\textwidth}{!}{\newcommand{\s}{\cellcolor{green!30}\ding{51}}%
\newcommand{\e}{\cellcolor{red!30}\ding{55}}%
\newcommand{\w}[1]{\cellcolor{yellow!30}{#1}}%
\newcommand{\invalid}{---}
\begin{tabular}{llcccccc@{\hspace{1.5em}}cc}%
    \toprule
    & & & & \multicolumn{4}{@{}c@{\hspace{1.5em}}}{Our analyzer} & \\
    \cmidrule(lr{1.6em}){5-8}
    Set & Group & \# & Asserts & \parbox[c]{1.1cm}{\centering Interval \\ (Sec.~\labelcref{S:BASE})} & \parbox[c]{1.2cm}{\centering Octagon \\ (Sec.~\labelcref{S:BASE})} & \parbox[c]{1cm}{\centering TIDs \\ (Sec.~\labelcref{S:SELF-EXCLUDED})} & \parbox[c]{1.2cm}{\centering Clusters \\ (Sec.~\labelcref{s:clustering})} & \parbox[c]{2cm}{\centering \textsc{NR-Goblint} \\ w/ interval} & \textsc{Duet} \\
    \midrule
    Our & Basic & 3 & 4 & \s & \s & \s & \s & \s & \w{3} \\
    & Relational & 10 & 35 & \e & \s & \s & \s & \w{4} & \w{2} \\
    & TID & 12 & 19 & \e & \e & \s & \s & \e & \w{2} \\
    & Cluster & 2 & 3 & \e & \e & \w{1} & \s & \e & \w{1} \\
    \addlinespace
    \textsc{Goblint} & POSIX & 5 & 1679 & \w{1146} & \w{1490} & \s & \s & \w{1582} & \invalid \\
    & SV-COMP & 7 & 360 & \s & \s & \s & \s & \s & \invalid \\
    \addlinespace
    \textsc{Watts} & Created & 3 & 3 & \w{2} & \w{2} & \w{2} & \w{2} & \w{2} & \e \\
    & SV-COMP & 5 & 5 & \w{1} & \w{1} & \w{1} & \w{1} & \w{1} & \e \\
    & LKMPG & 1 & 2 & \e & \e & \e & \e & \e & \e \\
    & \textsc{DDVerify} & 28 & 1071 & \w{1043} & \w{1043} & \s & \s & \w{1043} & \invalid \\
    & Scalability & 5 & 740 & \w{735} & \w{735} & \s & \s & \w{735} & \invalid \\
    \addlinespace
    \textsc{Ratcop} & & 19 & 34 & \w{4} & \w{14} & \w{18} & \w{18} & \w{6} & \w{4} \\ 
    \bottomrule
\end{tabular}

}
\end{table}
\paragraph{Our benchmarks.}
To capture particular challenges for multi-threaded relational analysis, we collected a set of small benchmarks (including the examples from this paper) and
added assertions.
On these, we evaluated our analyzer, \textsc{NR-Goblint}, and \textsc{Duet}.
Our analysis in the \textit{Clusters} configuration is capable of verifying all the programs. The other tools could only prove a handful of relational assertions.

\paragraph{\textsc{Goblint} benchmarks~\cite{traces1}.}
These benchmarks do not contain assertions. To still relate the precision of our analyzer to the non-relational analyzer \textsc{NR-Goblint}
and to \textsc{Duet}, we used our tool in the \textit{Clusters} setting to automatically derive invariants at each locking operation.
Perhaps surprisingly, \textsc{NR-Goblint} could verify 95\% of the invariants despite being non-relational and not using thread \emph{id}s.

\paragraph{\textsc{Watts} benchmarks~\cite{Kusano16}.}
These benchmarks from various sources
were instrumented with asserts and significantly changed by the authors.
Our analyses are able to verify all but 7 out of over 1000 assertions.
Due to necessary fixes to benchmark programs, our inability to run their tool, and inconsistent assertion counts in the paper,
the numbers are not directly comparable.
Nevertheless, for their scalability tests, the reported runtimes for \textsc{Watts} are up to two orders of magnitude worse
than ours.
For a more detailed discussion, see \cref{a:more-experiments}.

\paragraph{\textsc{Ratcop} benchmarks~\cite{Mukherjee2017}.}
These were \textsc{Java} programs. After manual translation to C,
our analyzer succeeded in proving all assertions any configuration of \textsc{Ratcop} could with \emph{Octagons}, while
\textsc{Ratcop} required polyhedra in one case.
%
\begin{figure}[t]
	\begin{subfigure}{0.4\textwidth}
		\scalebox{0.87}{
			\begin{tabular}{lrrr}
                \toprule
				Name & LLoC & \parbox[c]{1cm}{\centering \#TIDs (unique)} & \parbox{1.5cm}{
                    ~~~TIDs $\sqsubset$ \hfill \\ \strut\hfill Octagon
                } \\
                \midrule
				pfscan & 550 & 3 (2) & 19.0\%\\
				aget & 581 & 6 (4) & 0.0\% \\
				ctrace & 651 & 3 (3) & 0.0\% \\
				knot & 973 & 9 (5) & 0.0\% \\
				smtprc & 3013 & 2 (2) & 0.8\%\\
				\addlinespace
				iowarrior & 1358 & 4 (4) & 17.1\% \\
				adutux & 1509 & 4 (4) & 0.0\% \\
				w83977af & 1515 & 6 (4) & 12.1\% \\
				tegra20 & 1560 & 7 (5)  & 0.0\% \\
				nsc & 2394 & 11 (7)  & 32.2\%\\
				marvell1 & 2476 & 6 (5)  & 59.5\% \\
				marvell2 & 2476 & 6 (5) & 58.4\%\\
                \bottomrule
			\end{tabular}}
		\subcaption{Number of discovered thread \emph{id}s and
        proportion of program points where analysis with thread \emph{id}s is more precise.}\label{f:evalTIDs}
	\end{subfigure}\begin{subfigure}{0.6\textwidth}
        \centering
		\hspace{2em}\pgfplotslegendfromname{benchmark-time-legend} 
		\\[2ex]
		\begin{tikzpicture}
			\pgfplotstableset{
				create on use/benchmark-loc/.style={
					create col/assign/.code={
						\getthisrow{benchmark}\benchmark
						\getthisrow{loc}\loc
						\edef\entry{\benchmark\space(\loc)}
						\pgfkeyslet{/pgfplots/table/create col/next content}\entry
					}
				}
			}
			\pgfplotstableread{
				benchmark	loc	box	oct	tid	cluster12
				ctrace	651	1.16	2.73	4.55	4.80
				pfscan	550	1.61	2.55	2.79	2.50
				aget	581	1.89	2.21	2.27	2.22
				iowarrior	1358	3.06	3.39	3.48	3.45
				w83977af	1515	5.04	5.55	5.45	5.62
				adutux	1509	3.41	4.70	4.87	4.98
                knot	973	2.58	5.55	8.26	8.30
                smtprc	3013	1.75	3.30	3.22	3.13
			}\timetablesmall
			\pgfplotstableread{
				benchmark	loc	box	oct	tid	cluster12
				nsc	2394	36.85	39.48	47.82	47.32
                marvell1	2476	22.14	24.44	23.01	22.57
                marvell2	2476	25.57	27.49	26.17	26.31
                tegra20	1560	14.06	16.98	17.06	16.69
			}\timetablebig

			\begin{groupplot}[
				group style={
					group size=2 by 1,
					horizontal sep=0.9cm,
				},
				height=5.5cm,
				ybar=0pt,
				enlarge y limits=upper,
				ymin=0,
				xtick=data,
				x tick label style={
					rotate=45,
					anchor=east,
				},
				xtick pos=bottom,
				ymajorgrids=true,
			]
				\nextgroupplot[
					width=0.7\textwidth,
					enlarge x limits=0.08,
					ybar legend,
					bar width=3pt,
					legend to name=benchmark-time-legend,
					legend columns=-1,
					symbolic x coords={pfscan,aget,ctrace,knot,iowarrior,adutux,w83977af,smtprc},
					ytick distance=2,
					ylabel={Analysis time [s]},
					ylabel near ticks,
				]
					\addplot table [x=benchmark,y=box] {\timetablesmall};
					\addlegendentry{Interval};
					\addplot table [x=benchmark,y=oct] {\timetablesmall};
					\addlegendentry{Octagon};
					\addplot table [x=benchmark,y=tid] {\timetablesmall};
					\addlegendentry{TIDs};
					\addplot table [x=benchmark,y=cluster12] {\timetablesmall};
					\addlegendentry{Clusters};
				\nextgroupplot[
					width=0.45\textwidth,
					enlarge x limits=0.20,
					bar width=3pt,
					symbolic x coords={tegra20,nsc,marvell1,marvell2},
					ytick distance=10,
				]
					\addplot table [x=benchmark,y=box] {\timetablebig};
					\addplot table [x=benchmark,y=oct] {\timetablebig};
					\addplot table [x=benchmark,y=tid] {\timetablebig};
					\addplot table [x=benchmark,y=cluster12] {\timetablebig};
			\end{groupplot}
		\end{tikzpicture}
		\subcaption{Analysis times.}
		\label{fig:benchmark-time}
	\end{subfigure}
    \caption{Precision and performance evaluation on the \textsc{Goblint} benchmark set.}
\end{figure}
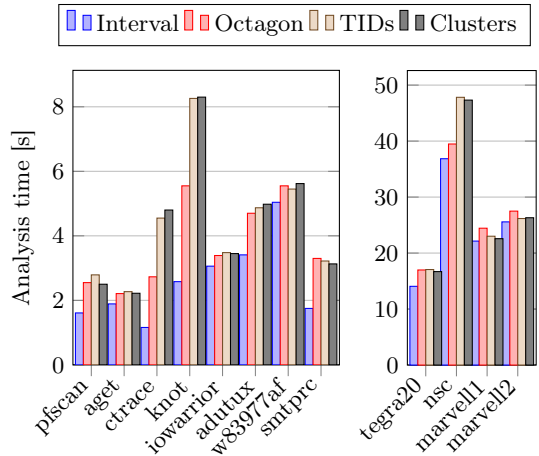
%
%
%
\paragraph{Internal comparison} We evaluated our analyses in more detail on the \textsc{Goblint} benchmark set~\cite{traces1}.
\cref{f:evalTIDs} shows sizes of the benchmark programs (in Logical LoC)
and the number of thread \emph{id}s as found by the analysis from \cref{s:unique}. The high number of threads identified as \emph{unique} is encouraging.
%
%
%
%
For a detailed precision evaluation, we compared the abstract values at each program point (joined over all contexts).
\cref{f:evalTIDs} shows for what proportion of program points precision is increased by tracking thread \emph{id}s.
There were no program points where precision decreased or abstract values became incomparable, while for some programs gains of more than
50\% were observed.
%

\ignore{
	When it comes to the impact of clusters, we performed the analysis with all subclusters of size at most $k=2$,
	as well as without any additional clustering.
	We could not observe any difference in precision, but also no penalty in runtime (see \cref{fig:benchmark-time}).
	As the analysis using clusters of size at most $2$ is no more expensive, but more precise is some cases
	(as indicated in \cref{s:clustering}),
	this suggests using clustering with cluster sizes of (for octagons) at most $2$ in practice.
}
\cref{fig:benchmark-time} illustrates the runtimes of these analyses.
In 9 out of 12 cases, performance differences between our relational analyses are negligible.
In all the cases, using clusters incurs no additional cost.
Therefore, the more precise analysis using clusters of size at most $2$ seems to be the method of choice
for thread-modular relational abstract interpretation.

\section{Related Work}\label{s:related}
%
Since its introduction by Min\'e \citep{Mine01,Mine06Oct}, the weakly relational numerical domain of \emph{Octagons}
has found wide-spread application for the analysis and verification of programs \cite{Blanchet2003,Cousot09}.
%
%
%
Since tracking relations between \emph{all} variables may be expensive,
pre-analyses have been suggested to identify \emph{clusters} of numerical variables
whose relationships may be of interest \cite{Blanchet2003,Cousot09,Oh2015,Heo2016}.
%
%
%
A \emph{dynamic} approach to decompose relational domains into non-overlapping clusters based
on learning is proposed by \citet{Singh2018}.
%
While these approaches trade (unnecessary) precision for efficiency,
others try to \emph{partition} the variables into clusters without compromising precision
\cite{Singh2017,Cousot2019, Halbwachs2003, Halbwachs2006,Oh2014, Singh2018POPL}.
These types of clustering are orthogonal to our approach and could, perhaps, be combined with it.

The integration of relational domains into thread-modular abstract interpretation was pioneered by \citet{Mine2014}.
His analysis is based on \emph{lock invariants} which determine for each mutex a relation which
is meant to hold whenever the mutex is \emph{not} held.
\emph{Weak interferences} then are used to account for asynchronous accesses to variables.
For practical analyses, a relational abstraction only for the lock invariants is proposed,
while using a coarse, non-relational abstraction for the weak interferences.
This framework closely follows the corresponding framework for non-relational analysis \cite{Mine2012},
while abandoning \emph{background locksets}.
Our relational analysis, on the other hand, maintains at each mutex $a$ only relations between variables write-protected by $a$.
For these relations more precise results can be obtained, since they are incorporated into the local
state at locks by \emph{meet} (while \cite{Mine2014} uses \emph{join}).

\citet{Mine2017} present an analysis framework which we consider orthogonal to our approach.
It is tailored to the verification of algorithms that do not rely on explicit synchronization via mutexes
such as the \emph{Bakery} algorithm.
\citet{Mine2018} extend the previous work to handle weak memory effects (PSO, TSO) by incorporating memory buffers into the thread-local semantics.
The notion of interferences is also used by \citet{Sharma21} for the analysis of programs under the Release/Acquire Memory Model of C11 by additionally tracking
abstractions 
of \emph{modification sequences} for global variables. They consider fixed finite sets of threads only, and do not deal with thread creation or joining.

Earlier works on thread-modular relational analysis rely on \textsc{Datalog} rules to model interferences in the sense of Min\'e
in combination with abstract interpretation applied to
the Data-Flow Graph \cite{Farzan12} 
or the Control-Flow Graph 
\cite{Kusano16} (later extended to weak memory \cite{Kusano17}), respectively.
\ignore{
Both settings
are more specific than ours: \cite{Farzan12} considers parameterized systems (where multiple instances of a given
thread may exist), whereas \cite{Kusano16} assumes a fixed finite set of threads. In our setting, thread \emph{id}s are analyzed to
deal with both cases appropriately.
\cite{Kusano16} treats interferences flow-sensitively, by identifying pairs of interferences that may not coexist in a
single execution.
}
\citet{Botbol17} give a non-thread-modular analysis of multi-threaded programs with message-passing concurrency by encoding the program semantics as a symbolic transducer.

All these approaches have in common that clusters of variables, if there are any, are predefined and not treated specially by the analysis.
This is different in the
thread-modular relational analysis proposed by \citet{Mukherjee2017}.
It propagates information from unlock to lock operations.
It is relational for the locals of each thread, and within \emph{disjoint} subsets of globals, called \emph{regions}.
These regions must be determined beforehand with the specific requirement to satisfy \emph{region-race freedom}.
In contrast, the only extra a priori information required by our analysis, are the sets of (write) protecting mutexes of globals --
which can be computed during the analysis itself.
The closest concept within our approach to a \emph{region} is the set of globals jointly protected by mutexes.
These sets may overlap -- which explicitly is exploited by the analysis.
Like ours, their proof of correctness refers to a thread-local semantics. Unlike ours, it is based on interleavings
and thus overly detailed.
\ignore{
- which is one intricate abstraction of local traces.
Instead, our analysis concentrates on tracking thread \emph{id}s,
thread creation/joining as well as exclusive access to multiple variables.
}
%
The concrete semantics on which our analyses are based,
is a collecting local trace semantics extending the semantics of \citet{traces1}
by additionally taking thread termination and joins into account.
The analyses in \cite{traces1}, however, are non-relational. No refinement via further finite abstractions of local traces, such as
thread \emph{id}s is provided.

The thread \emph{id} analysis perhaps most closely related to ours, is by \citet{Feret05} who computes
\emph{id}s for agents in the $\pi$-calculus as abstractions of the sequences of encountered create edges.
Another line of analysis of concurrent programs deals with determining which critical events may possibly happen in parallel
(MHP)~\cite{Naumovich1999,Barik2006,Agarwal07,Di2015,Zhou2018,Albert12,Albert15,Albert17} to detect possible programming errors like, e.g., data races, or
identifying opportunities for optimization.
%
%
Mostly, MHP analyses are obtained as abstractions of a \emph{global} trace semantics of concurrent programs~\cite{Dwyer1994}.
Here, we apply related techniques for improving thread-modular analyses --
but based on a \emph{local} trace semantics.
Like MHP analyses, we take thread creation and joining histories into account
as well as sets of currently held mutexes.
Additionally, we also consider crucial aspects of the modification history of globals
and provide a general framework to obtain further refinements.
%

In a sequential setting, splitting control locations according to some abstraction of reaching traces
is a common technique for improving the precision of dataflow analyses \cite{Holley80,Bodik97}
or abstract interpretation \cite{Handjieva98,Mauborgne05,Rival07,Montagu21}.
Control point splitting can be
understood as an instance of the reduced cardinal power domain~\cite{Cortesi13,Giacobazzi99,Cousot79}.
%
%
For the analysis of multi-threaded programs, \citet{Mine2014}
applies the techniques of \citet{Mauborgne05}
to \emph{single} threads, i.e., independently of the actions of all other threads.
Our approach, on the other hand, may take arbitrary properties of \emph{local} traces into account, and thus is more general.

\section{Conclusion and Future Work}\label{s:conclusion}

We have presented thread-modular relational analyses of global variables tailored to decomposable domains.
In some cases, more precise results can be obtained by considering smaller clusters.
For
$k$-decomposable domains, however,
we proved that the \emph{optimal} result can already be obtained by considering clusters of size at most $k$.
We have provided a framework to incorporate finite abstractions of local traces into the analysis. 
Here, we have applied this framework to take creation as well as joining of threads into account,
but believe that it paves the way to seamlessly enhance the precision of thread-modular abstract 
interpretation.
%
The evaluation of our analyses on benchmarks proposed in the literature indicates 
that our implementation is competitive both w.r.t.\ precision and efficiency.
In future work, we would like to experiment with further abstractions of
local traces, perhaps tailored to particular programming idioms,
and also explore the potential of non-numerical $2$-decomposable domains.
%
%
%
%

\FloatBarrier
{

  \renewcommand{\doi}[1]{\textsc{doi}: \href{http://dx.doi.org/#1}{\nolinkurl{#1}}}
  \bibliographystyle{splncs04nat}
  \bibliography{lit}
}
\clearpage
\appendix
\section{Details on the Local Trace Semantics}
We first detail the handling of local variables and guards for the local trace semantics that are omitted in the main text.
Then, we give an example formalism for local traces that extends the one proposed in \cite{traces1} for handling thread joins and returns.
\subsection{Local Variables and Guards}\label{ss:globals}
In our setting, all expressions $r$ occurring as guards or as non-variable right-hand sides of assignments
may contain only \emph{local} variables, not globals.
A corresponding right-hand side function for guards and assignments to local variables can then be given by
\[
\sem{u,A}\,\eta = (\emptyset,\sem{e}(\eta\,[u]))
\]
That is, the computation is entirely local, and no side-effects are triggered.

\subsection{Formalism for Local Traces}\label{s:details}
The concrete concurrency semantics imposes \emph{restrictions} onto when binary actions are defined.
In particular, a binary operation $\sem{e}$ may only be defined for a pair of local traces $t_0$ and $t_1$
if certain parts of $t_0$ and $t_1$ represent the same computation.
To make such restrictions explicit, we recall the concrete representation of local traces presented
in \cite{traces1} and enhance it for returning from and joining of threads.

A \emph{raw} (finite) trace of single thread $i\in\TIDs$ is a sequence
$\lambda =\bar u_0\act_1\ldots$ $\bar u_{n-1}\act_n\bar u_n$ for states
$\bar u_j = (u_j,\sigma_j)$ with
$\sigma_j\,\self = i$, and actions $\act_j\in\Actions$ corresponding to
the local state transitions of the thread $i$ starting in configuration
$\bar u_0$ and executing actions $\act_j$.
In that sequence, every action $\lock(m)$ is assumed to succeed,
and when accessing a global $g$, any value may be read. The same applies to the return value of
an action $x {=} \join(x')$.

One can view $\lambda$ as an acyclic graph whose nodes are the 3-tuples
$(j,u_j,\sigma_j), j= 0,\ldots,n,$ and
whose edges are $((j-1,u_{j-1},\sigma_{j-1}),\act_j,(j,u_j,\sigma_j)), j= 1,\ldots,n$.
Let $V(\lambda)$ and $E(\lambda)$ denote the
set of nodes and edges of this graph, respectively.

Let $\Lambda(i)$ denote the set of all individual raw traces for thread $i$, and
$\Lambda$ the union of all these sets.
\newcommand{\Vs}{{\mathbf V}}

A \emph{raw global trace} is an acyclic graph $\tau=(\Vs,\E)$ where
$\Vs = \bigcup\{V(\lambda_i)\mid i\in I\}$ and $\E = \bigcup\{E(\lambda_i)\mid i\in I\}$
for a set $I$ of thread \emph{id}s and raw local traces $\lambda_i\in\Lambda(i)$.
On the set $\Vs$, we define the \emph{program order}
as the set of all pairs $\bar u\to_{p}\bar u'$ for which there is an edge $(\bar u,\act,\bar u')$ in $\E$.
To formalize our notion of local traces, we extend the program order to a
\emph{causality order} which additionally takes the order in which threads are
created and joined, as well as the order in which mutexes are acquired and released, into account.

For $a\in\M$, let $a^+ \subseteq \Vs$ denote the set of nodes $\bar u$
where an incoming edge labeled $\lock(a)$ exists, i.e., $ \exists x\,(x,\lock(a),\bar u) \in \E$,
and $a^-$ analogously for $\unlock(a)$.
Analogously, let $J$ and $R$ denote the sets of nodes in $\Vs$ having an
incoming edge labeled $x{=}\join(x')$, and $\return\,x$, respectively, for any local variables $x,x'$.
On the other hand, let $C$ denote the set of nodes with an \emph{outgoing} edge labeled $x{=}\create(u_1)$ (for any local variable $x$ and program point $u_1$).
Let $S$ denote the set of minimal nodes w.r.t.\ to $\to_p$, i.e., the points at which threads start and let $\bf 0$ the node $(0,u_0,\sigma_0)$ where $\sigma_0\,\self = 0$.

A \emph{global trace} $t$ then is represented by a tuple $(\tau,\to_c,\to_j,(\to_a)_{a\in\M},(\to_s)_{s\in\SS})$
where $\tau$ is a raw global trace and the relations $\to_c$, $\to_j$, $\to_a$ ($a\in\M$) are
the create, join, and locking orders as introduced in \cref{s:traces}.
The \emph{causality order} $\leq$ of $t$ then is defined as the reflexive and transitive closure
of the union
\[
    \to_p\cup\to_c\cup\to_j\cup\bigcup_{a\in\M}\to_a
\]
These orders need to satisfy the following properties:
\begin{itemize}
\item	\textbf{Causality order} $\leq$ is a partial order with unique least element
	$(0,u_0,\sigma_0)$ where $\sigma_0\,\self=0$;
\item \textbf{Create order}: $\to_c \subseteq C \times (S\setminus \{ \textbf{0}\})$: $\forall s \in (S \setminus \{ \textbf{0}\}): \abs{\{ z \mid z \to_c s \}} = 1$, that is,
    every thread except the initial thread is created by exactly one
    $\create(...)$ action and $\forall x: \abs{\{ z \mid x \to_c z \}} \leq 1$, i.e.,
    each $\create(...)$ action creates at most one thread.
    Additionally, for $((j-1,u_{j-1},\sigma_{j-1}), x {=} \create(v),(j,u_{j},\sigma_{j})) \in \E$ and
    $(j-1,u_{j-1},\sigma_{j-1}) \to_{c} (0,v,\sigma'_0)$
    $\sigma'_0 = \sigma_{j-1} \oplus \{ \self \mapsto i' \}$ for some thread \emph{id} $i'$ where
    $\sigma_j\,x = i'$, i.e., the creating and the created thread agree on the thread \emph{id}
    of the created thread;
\item \textbf{Join order}: $\to_j \subseteq R \times J$: $\forall j' \in J: \abs{\{ z \mid z \to_j j' \}} = 1$
    and $\forall r' \in R: \abs{\{ z | r' \to_j z \}} \leq 1$
    , that is, each join action in the traces joins exactly one thread also appearing in the trace and each thread is joined at most once.
        Additionally, for
    $((j-1,u_{j-1},\sigma_{j-1}), x {=} \join(x'),(j,u_{j},\sigma_{j})) \in \E$ and
    $((j'-1,u_{j'-1},\sigma'_{j'-1}),\\ \return\,y,(j',u_{j'},\sigma'_{j'})) \in \E$ and
    $(j',u_{j'},\sigma'_{j'}) \to_{j} (j,u_{j},\sigma_{j})$:
    $\sigma_{j-1}\,x' = \sigma'_{j'-1}\,\self$, and  $\sigma_j\,x = \sigma'_{j'-1}\,y$, i.e., the thread that is being joined has the thread \emph{id} stored in $x'$ and after the join, the return value is assigned to $x$;
\item \textbf{Locking order}:
    $\forall a \in \M: \to_a \subseteq (a^- \cup \textbf{0}) \times a^+$:
    $\forall x\in (a^- \cup \textbf{0}): \abs{\{ z \mid x \to_a z \}} \leq 1$ and
    $\forall y\in a^+: \abs{\{ z \mid z \to_a y \}} = 1$, that is, for a mutex $a$ every lock is preceded
    by exactly one unlock (or it is the first lock) of $a$, and each unlock is directly followed
    by at most one lock
\end{itemize}
Additionally, we require a consistency condition on values read from globals.
For $((j-1,u_{j-1},\sigma_{j-1}),x = g,(j,u_{j},\sigma_{j})) \in \E$,
there is a \emph{unique} maximal node $(j',u_{j'},\sigma_{j'})$ with respect to $\leq$
such that $((j'-1,u_{j'-1},\sigma_{j'-1}),g = y,(j',u_{j'},\sigma_{j'})) \in \E$
and $(j',u_{j'},\sigma_{j'}) \leq (j-1,u_{j-1},\sigma_{j-1})$.
The uniqueness of such a maximal node is ensured by $\to_{m_g}$, as every write to a global $g$ is immediately
succeeded by an $\unlock(m_g)$ operation of and every read of global $g$ is immediately proceeded by a
$\lock(m_g)$ operation.
Then $\sigma_j\,x = \sigma_{j'-1}\,y$, i.e., the value read for a global is the last value written to it.
%

\noindent
A global trace $t$ is \emph{local} if it has a unique maximal element $\bar u = (j,u,\sigma)$
(w.r.t.\ $\leq$).
For a local trace $t$, we define a function $\sink\,t$ to extract the tuple $(u,\sigma)$, and a function $\loc\,t$ to obtain the
program point $u$ of the sink node.
The function $\last$, on the other hand, extracts the last action $\act$ of the ego thread (if there is any) and returns $\bot$
otherwise. The partial functions $\new\,u\,u_1$ for program points $u$ and $u_1$ and
$\sem{e}$ for control-flow edges $e$ then are defined by extending
a given local trace appropriately.

\section{Soundness Proof for the Analysis from Section \ref{s:base}}\label{s:baseSound}
Let the constraint system for the analysis from \cref{s:base} be called $\C^\sharp$, and let the constraint system \cref{c:concrete}
for the concrete semantics from \cref{s:traces} be called $\C$.
We first modify $\C$ to obtain a new constraint system $\C'$ that has the same unknowns as $\C^\sharp$, except for thread \emph{id}s.
This means that each unknown $[u]$ for program point $u$ is replaced with the set of unknowns $[u,S]$, $S\subseteq\M$, and each
unknown $[a]$ for mutex $a$ is replaced with the set of unknowns $[u,\Cluster]$, $\Cluster\in\Clusters_a$. We remark that there is at
least one such unknown for each mutex $a$, as we require $\Clusters_a \neq \emptyset$.
The unknowns for thread \emph{id}s and mutexes remain unmodified.
Accordingly, the constraint system $\C'$ consists of these constraints:
\begin{equation}
	\begin{array}{llll}
        (\eta,\eta\,[u_0,\emptyset])	&\sqsupseteq	&(\{ [a,\Cluster] \mapsto \init \mid a\in\M,\Cluster\in\Clusters_a \}, \init) \span \\
        (\eta,\eta\,[u',S{\cup} \{a\}])	&\sqsupseteq	& \sem{[u,S],\lock(a)}\,\eta \qquad \qquad & (\text{for } (u,\lock(a),u')\in\E, S\subseteq\M) \\
        (\eta,\eta\,[u',S{\setminus} \{a\}])	&\sqsupseteq	&\sem{[u,S],\unlock(a)}\,\eta & (\text{for } (u,\unlock(a),u')\in\E, S\subseteq\M)\\
        (\eta,\eta\,[u',S])	&\sqsupseteq	&\sem{[u,S],\act}\,\eta & \hspace{-3em} (\text{for } (u,\act,u')\in\E, S\subseteq\M, \text{other actions } $\act$)
	\end{array}
	\label{c:concrete'}
\end{equation}
where the right-hand sides are modified to consult the appropriate unknown $[u,S]$ instead of $[u]$.
Additionally, the right-hand sides for locking and unlocking mutexes are modified to consult and
side-effect to the appropriate $[a,\Cluster]$, $\Cluster\in\Clusters_a$ as follows:
\[
    \begin{array}{lll}
        \sem{[u,S],\lock(a)}\,\Get &=&
          \left(\emptyset,\sem{e}\left(\Get\,[u],\bigcap_{\Cluster\in\Clusters_a} (\Get\,[a,\Cluster])\right)\right)\\
        \sem{u,\unlock(a)}\,\Get
            &=&	\Let\;T = \sem{e}(\Get\,[u])\;\In\;\\
        & &(\{[a,\Cluster]\mapsto T \mid \Cluster\in\Clusters_a\}, T)
    \end{array}
\]

\noindent First, we relate the unique least solutions of $\C$ and $\C'$ to each other.
\newcommand{\GetP}{\Get'}
Let $\Get$ be the unique least solution of $\C$. We construct from it a mapping $\Get'$ from the unknowns of $\C'$
to $2^\T$.
\[
    \begin{array}{llll}
        \GetP\,[u,S] &=& \Get\,[u] \cap \T_S \qquad & \quad  (\text{for } u\in\N, S\subseteq\M)\\
        \GetP\,[i] &=& \Get\,[i] & \quad  (\text{for } i\in\I)\\
        \GetP\,[a,\Cluster] &=& \Get\,[a] & \quad  (\text{for } a\in\M, \Cluster\in\Clusters_a)\\
    \end{array}
\] where $\T_S$ denotes the set of all traces where the ego thread holds the lockset $S$ at the sink node.
\begin{proposition}\label{p:GetGetP}
    $\GetP$ is the least solution of $\C'$ if and only if $\Get$ is the least solution of $\C$.
\end{proposition}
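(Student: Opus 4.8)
The plan is to prove both directions of the equivalence simultaneously by exhibiting a precise correspondence between solutions of $\C$ and solutions of $\C'$, namely that the map $\Get \mapsto \GetP$ defined above is a bijection (with an explicit inverse) between the solution set of $\C$ and the solution set of $\C'$, and that this bijection is monotone in both directions; since least elements are preserved by order-isomorphisms of posets, this yields the claim. First I would pin down the inverse transformation: given a solution $\GetP$ of $\C'$, set $\Get\,[u] = \bigcup_{S\subseteq\M}\GetP\,[u,S]$, $\Get\,[a]=\GetP\,[a,\emptyset]$ (or, equivalently by the argument below, $\GetP\,[a,\Cluster]$ for any $\Cluster\in\Clusters_a$), and $\Get\,[i]=\GetP\,[i]$. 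The key structural fact making this work is that the locksets partition the local traces at a program point: $\T[u] = \dunion_{S\subseteq\M}(\T[u]\cap\T_S)$, and every right-hand side of $\C$ either preserves or deterministically updates the sink lockset (lock adds $a$, unlock removes $a$, create produces a start node with empty lockset, the rest leave $S$ unchanged), so splitting a constraint of $\C$ by $S$ yields exactly the family of constraints of $\C'$.

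The core of the argument is then a per-constraint verification. For the program-point constraints with non-mutex actions $\act$, I would check that $\GetP$ satisfies the $\C'$-constraint for $[u',S]$ iff $\Get$ satisfies the $\C$-constraint for $[u']$, using that $\sem{e}$ commutes with restricting the argument trace-set to a fixed lockset and maps $\T_S$ into $\T_{S}$ (resp. $\T_{S\cup\{a\}}$, $\T_{S\setminus\{a\}}$ for lock/unlock). For the mutex unknowns $[a,\Cluster]$ the side-effect in $\C'$ is $\{[a,\Cluster]\mapsto T\mid\Cluster\in\Clusters_a\}$ with the *same* $T$ for every cluster, and the only incoming contributions to any $[a,\Cluster]$ are these unlock side-effects plus the $\init$ side-effect from the start constraint — which is again identical across clusters; hence in any solution $\GetP\,[a,\Cluster]$ is independent of $\Cluster$ and equals $\Get\,[a]$. (This is where the requirement $\Clusters_a\neq\emptyset$ is used: it guarantees at least one unknown $[a,\Cluster]$ exists to carry the information, so no constraint of $\C$ is lost.) The $\lock(a)$ constraint of $\C'$ consults $\bigcap_{\Cluster\in\Clusters_a}\GetP\,[a,\Cluster]$, which collapses to $\Get\,[a]$ by the same observation, matching the $\lock(a)$ constraint of $\C$. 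Thread-id unknowns and the $\create$, $\return$, $\join$ constraints go through verbatim since those unknowns and right-hand sides are untouched.

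Having established that the two transformations are mutually inverse and both monotone (monotonicity is immediate: unions and intersections of the relevant families are monotone, and $\Get \sqsubseteq \tilde\Get$ gives $\GetP\sqsubseteq\tilde\GetP$ componentwise), I would conclude: the solution sets of $\C$ and $\C'$ are order-isomorphic, so one has a least element iff the other does, and the least elements correspond under the transformation — in particular $\GetP$ as defined (built from the least $\Get$) is the least solution of $\C'$ exactly when $\Get$ is the least solution of $\C$. I expect the main obstacle to be the bookkeeping in the per-constraint check that $\sem{e}$ interacts correctly with the lockset-refinement — specifically, verifying that for every edge transformer the sink lockset of every trace in $\sem{e}(\ldots)$ is exactly the expected $S$, $S\cup\{a\}$, or $S\setminus\{a\}$, and that no trace is double-counted or dropped when reassembling $\bigcup_S\GetP\,[u,S]$; once that invariant is nailed down the rest is routine.
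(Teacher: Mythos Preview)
Your per-constraint analysis is the right technical content, but the overarching framing has a gap. You claim that $\Get\mapsto\GetP$ is a \emph{bijection between the solution sets} of $\C$ and $\C'$, and in particular that ``in any solution $\GetP\,[a,\Cluster]$ is independent of $\Cluster$''. This is false for arbitrary solutions: a solution of a $\sqsupseteq$-constraint system is merely a post-fixpoint, so you may freely inflate individual components. Starting from the least solution of $\C'$ and enlarging $\GetP\,[a,\Cluster_1]$ to all of $\T$ (while leaving $\GetP\,[a,\Cluster_2]$ alone) still yields a solution, yet now the values at the two clusters differ. For the same reason an arbitrary solution need not satisfy $\GetP\,[u,S]\subseteq\T_S$, so your proposed inverse $\Get\,[u]=\bigcup_S\GetP\,[u,S]$ followed by re-splitting does not recover $\GetP$; backward-then-forward is not the identity, and the order-isomorphism argument collapses at exactly the step where you need $\GetP\subseteq\tilde\GetP$ for an arbitrary solution $\tilde\GetP$.

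The paper's one-line proof says ``by fixpoint induction'', and that is the clean way out. Instead of comparing full solution sets, compare Kleene iterates: show that if $\Get^{j}$ and $\GetP^{j}$ correspond via the transformation, then so do $\Get^{j+1}$ and $\GetP^{j+1}$. The invariants you rely on --- all $[a,\Cluster]$ carrying the same set, and $[u,S]$ containing only traces with lockset $S$ --- \emph{do} hold at every Kleene iterate (trivially at $\bot$, and your per-edge checks show they are preserved by one step), so the induction goes through and the limits coincide. In other words, keep your per-constraint verification verbatim; just recast it as showing that the transformation commutes with one application of the respective constraint-system operators, rather than as an isomorphism of post-fixpoints.
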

\begin{proof}
    The proof of \cref{p:GetGetP} is by fixpoint induction.
\end{proof}
\noindent Before moving on to the soundness proof, we require the following insight into the analysis from \cref{s:base}.
\begin{proposition}\label{p:restrict}
  Let $\Get^\sharp$ be a solution of $\C^\sharp$ computed by the analysis from \cref{s:base}. Then,
  \[
    \begin{array}{llll}
      \Get^\sharp\,[u,S] &=& \restr{\Get^\sharp\,[u,S]}{\X\cup\{ g \mid g\in\G, \MM\,[g] \cap S \neq \emptyset \}} \qquad &(\text{for } u\in\N, S\subseteq\M) \\
      \Get^\sharp\,[i] &=& \restr{\Get^\sharp\,[i]}{\{\returnVar\}} &(\text{for }i\in \V^\sharp_\I) \\
      \Get^\sharp\,[a,\Cluster] &=& \restr{\Get^\sharp\,[a,\Cluster]}{\Cluster} &(\text{for }a\in \M, \Cluster\in\Clusters_a)
    \end{array}
  \]
\end{proposition}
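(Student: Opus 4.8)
The plan is to proceed by fixpoint induction on the constraint system $\C^\sharp$, showing that the claimed restriction invariants are preserved by every right-hand side. Since $\Get^\sharp$ is computed as a (pre-)solution, it suffices to show that if the invariants hold for all unknowns consulted on the right, then they hold for the contribution produced on the left, for each constraint in $\C^\sharp$; the base case is the initial bottom assignment, for which each restriction invariant holds vacuously since $\bot$ is restriction-invariant by strictness of $\restr{\cdot}{Y}$. We should state the three invariants as a single predicate $P(\Get^\sharp)$ and check its preservation constraint-by-constraint, using the algebraic laws for $\restr{\cdot}{Y}$ from \cref{s:relational}: idempotence, $\restr{(\restr{r}{Y_1})}{Y_2} = \restr{r}{Y_1\cap Y_2}$, monotonicity in $Y$, and the behaviour under $\sqcap$ (restriction distributes soundly, in particular $\restr{(r\sqcap s)}{Y} \sqsubseteq (\restr{r}{Y})\sqcap(\restr{s}{Y})$, and when both operands are already $Y$-restricted the meet stays $Y$-restricted).

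The key steps, one per group of right-hand sides in \cref{f:basicAbs}:
\begin{itemize}
\item \textbf{$\init^\sharp$:} the state side-effected to $[a,\Cluster]$ is $\aSemR{\{g\leftarrow 0\mid g\in\Cluster\}}\top$, which only mentions variables in $\Cluster$, hence is $\Cluster$-restricted; the contribution to $[u_0,\emptyset]$ is $\aSemR{\self\leftarrow^\sharp i_0}\top$, which is $\X$-restricted (and $\X\cup\{g\mid\MM[g]\cap\emptyset\neq\emptyset\}=\X$).
\item \textbf{$\lock(a)$ at $[u,S]$ contributing to $[u,S\cup\{a\}]$:} the value is $\Get^\sharp[u,S]\sqcap\bigsqcap_{\Cluster\in\Clusters_a}\Get^\sharp[a,\Cluster]$; by induction the first operand is restricted to $\X\cup\{g\mid\MM[g]\cap S\neq\emptyset\}$ and each $\Get^\sharp[a,\Cluster]$ to $\Cluster\subseteq\GM[a]$, and since every $g\in\GM[a]$ has $a\in\MM[g]$, the meet is restricted to $\X\cup\{g\mid\MM[g]\cap(S\cup\{a\})\neq\emptyset\}$ by the meet law.
\item \textbf{$\unlock(a)$ at $[u,S]$:} the side-effects to $[a,\Cluster]$ are literally $\restr{r}{\Cluster}$, trivially $\Cluster$-restricted; the contribution to $[u,S\setminus\{a\}]$ is $\restr{r}{\X\cup\bigcup\{\GM[a']\mid a'\in S\setminus\{a\}\}}$, and one checks $\bigcup\{\GM[a']\mid a'\in S\setminus\{a\}\}=\{g\mid\MM[g]\cap(S\setminus\{a\})\neq\emptyset\}$, so this is exactly the required restriction.
\item \textbf{$\create$:} the side-effect to $[u_1,\emptyset]$ is $\restr{\{\cdots\}}{\X}$, hence $\X$-restricted as needed; the contribution to $[u,S]$ is $\aSemR{x\leftarrow^\sharp i}r$, an assignment to the local $x$, which preserves $\X\cup\{g\mid\MM[g]\cap S\neq\emptyset\}$-restriction since that set already contains $x\in\X$.
\item \textbf{$\return\,x$:} the side-effect to $[i^\sharp]$ is $\restr{(\aSemR{\returnVar\leftarrow x}r)}{\{\returnVar\}}$, explicitly $\{\returnVar\}$-restricted; the contribution to the program-point unknown is $r$ itself, unchanged.
\item \textbf{$\join$, reads, writes, guards, assignments:} these only modify local variables of $r$ and never touch $[a,\Cluster]$ or $[i]$, so the program-point invariant is preserved because $\X$ is contained in every set $\X\cup\{g\mid\MM[g]\cap S\neq\emptyset\}$; in particular writes $g=x$ assign to the \emph{local} copy of $g$, so no global leaves the allowed set.
\end{itemize}

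The main obstacle is purely bookkeeping: we must repeatedly translate between the mutex-indexed description $\{g\mid\MM[g]\cap S\neq\emptyset\}$ and the union $\bigcup\{\GM[a']\mid a'\in S\}$ (these coincide by definition of $\GM[a']$), and we must be careful that the meet in the $\lock$ case does not enlarge the variable support beyond what the union of the operands' supports allows — this uses that each $\Get^\sharp[a,\Cluster]$ is supported inside $\GM[a]$ and that $\GM[a]\subseteq\{g\mid a\in\MM[g]\}$, so adding $a$ to the lockset is precisely what admits those globals. Everything else follows mechanically from the restriction axioms, and no fixpoint-theoretic subtlety beyond the standard "invariant preserved by each monotone right-hand side, hence by the least solution" argument is involved.
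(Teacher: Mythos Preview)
Your proposal is correct and follows exactly the approach the paper itself takes: the paper's entire proof reads ``by fixpoint induction using idempotence of the operation \emph{restrict}'', and your plan is a faithful, detailed expansion of that sketch. Your case analysis over the right-hand sides in \cref{f:basicAbs} is accurate, including the observation that writes $g=x$ are only reachable with $m_g\in S$ (hence $g$ lies in the allowed variable set), and the bookkeeping identity $\bigcup\{\GM[a']\mid a'\in S\}=\{g\mid\MM[g]\cap S\neq\emptyset\}$ you single out is indeed the only non-mechanical step.
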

\begin{proof}
  The proof of \cref{p:restrict} is once more by fixpoint induction using idempotence of the operation \emph{restrict}.
\end{proof}
To relate solutions of $\C'$ and $\C^\sharp$ to each other, we extend the function $x\mapsto t(x)$ that extracts from a
local trace $t$ the value of the local variable $x$ at the sink node, to also extract the \emph{last} value of a global appearing in
$t$.
Let $\Sigma$ the set of all mappings $\sigma: \X \to \V$ assigning values to local variables.
Recall that in a local trace $t$, there is for each global $g$ ever written in $t$,
a unique \emph{last} (w.r.t.\ the causality ordering) write operation $((j-1,u_{j-1},\sigma_{j-1}),g=x,(j,u_j,\sigma_j))$.
Let $\lW_g: \T \to ((\mathbb{N}_0\times\N\times\Sigma) \times \Actions \times (\mathbb{N}_0\times\N\times\Sigma)) \cup \{\bot \}$
be a function to extract such a last write to $g$ from a local trace,
where $\bot$ indicates that $g$ has not been written to.
Let $\lV_g: \T \to \V$ be a function to extract the \emph{last} value of a global appearing in a local trace.
If a write to the global appears in the local trace, this is the value written at \emph{last} write.
Otherwise, the last value is the initial value of the global, i.e., $0$.
\[
\begin{array}{lll}
    \lV_g\,t =
    \begin{cases}
    \sigma_{j-1}\,x &  \text{if } \lW_g\,t = ((j-1,u_{j-1},\sigma_{j-1}),g=x, \_) \\
    0    & \text{if } \lW_g\,t = \bot\\
    \end{cases}
\end{array}
\]
\noindent Then, we define, for a local trace $t\in\T$ and $x\in (\G\cup\X)$
\[
  \begin{array}{lll}
    t(x) = \begin{cases}
      \sigma\,x & \text{if } x\in\X, \sink\,t=(\_,\sigma)\\
      \lV_g\,t & \text{if } x\in\G
    \end{cases}
  \end{array}
\]

\noindent Let us now relate solutions of $\C'$ and $\C^\sharp$ to each other.
To this end, we define for a local trace $t$ the function
$\beta: \T \to (\Vars \to \V)$ by
\[
  \begin{array}{lll}
    \beta\,t = \left\{ x \mapsto t(x) \mid x\in(\X \cup \G) \right\}
  \end{array}
\]
Based on this, we provide dedicated concretization functions for the unknowns of $\C^\sharp$.
For a local state $r\in\R$, we define
\[
  \begin{array}{lll}
  \gamma_{u,S}(r) &=& \{ t \mid t \in \T_S,\loc\,t = u,
    \exists\,v: (\beta\,t\cup \{\returnVar \mapsto v\}) \in \gamma_\R\,r \}\\
  \gamma_{i}(r) &=& \{ t \mid t \in \T, \last\,t = (\return\,x),
    \sigma = \beta\,t, (\sigma\cup\{\returnVar \mapsto \sigma\,x\}) \in \gamma_\R\,r \}\\
  \gamma_{a}(r) &=& \{ t \mid t \in \T, (\last\,t = \unlock(a) \lor t\in\init),
    \exists\,v: (\beta\,t\cup \{\returnVar \mapsto v\}) \in \gamma_\R\,r \}\\
  \end{array}
\]
and remark that these concretization functions are monotonic.
Additionally, we require for the operation $\nu^\sharp: \N \to \N \to \R \to \V_\I^\sharp$
that it is sound w.r.t.\ to the thread \emph{id}s of created threads in the concrete.
\[
  \begin{array}{lll}
    \gamma_{\V_\I^\sharp}(\nu^\sharp\,u\,u_1\,r) \supseteq \{ t'(\self) \mid t' = \new\,u\,u_1\,t, t\in \gamma_{u,S}(r)\}
  \end{array}
\]
\noindent For a solution $\Get^\sharp$ of $\C^\sharp$, we then construct a mapping $\GetP$ by:
\[
  \begin{array}{llll}
    \GetP [u,S] &=& \gamma_{u,S}(\Get^\sharp\,[u,S]) \qquad &(\text{for } u\in\N, S\subseteq\M)\\
    \GetP [i] &=& \bigcup\{\gamma_{i}(\Get^\sharp\,[i^\sharp]) \mid i \in \gamma_\ValD\,i^\sharp\} \qquad &(\text{for } i\in\TIDs)\\
    \GetP [a,\Cluster] &=& \gamma_{a}(\Get^\sharp\,[a,\Cluster]) \qquad &(\text{for } a\in\M, \Cluster\in\Clusters_a)
  \end{array}
\]

\noindent Altogether, the soundness of the constraint system $\C^\sharp$ follows from the
following theorem.

\begin{theorem}\label{t:baseSoundAppendix}
Every solution of $\C^\sharp$ is sound w.r.t.\ the local trace semantics.
\end{theorem}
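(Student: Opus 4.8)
The plan is to feed the abstract solution $\Get^\sharp$ through the concretization functions defined above to obtain the mapping $\GetP$ on the unknowns of $\C'$, and then to show that $\GetP$ is a \emph{solution} of $\C'$. Once this is established, soundness follows formally: the least solution of $\C'$ is contained component-wise in every solution (the constraint ordering $\sqsupseteq$ being superset inclusion on sets of traces), and by \cref{p:GetGetP} the components of that least solution are precisely the reachable local traces at the program points---split by lockset---at the mutexes, and at the thread \emph{id}s. Hence every reachable local trace at $[u,S]$ lies in $\gamma_{u,S}(\Get^\sharp[u,S])$, every one at $[i]$ lies in $\bigcup\{\gamma_i(\Get^\sharp[i^\sharp])\mid i\in\gamma_\ValD\,i^\sharp\}$, and every one at $[a,\Cluster]$ lies in $\gamma_a(\Get^\sharp[a,\Cluster])$, which is the statement of \cref{t:baseSoundAppendix}.

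It therefore remains to check, constraint by constraint of \cref{c:concrete'}, that applying the concrete right-hand side to $\GetP$ yields only traces contained in the concretization attached to the left-hand unknown; equivalently, this is the inductive step of a fixpoint induction over the Kleene iteration of $\C'$, whose base case (the empty set at every unknown) is vacuous. For the $\init$ constraint, every initial trace assigns $0$ to all globals and the \emph{main} \emph{id} to $\self$, which is exactly what the concretizations of the local contribution of $\init^\sharp$ and of each $r(\Cluster)$ admit. Reads and writes of globals, guards, and assignments to locals are purely local, so the claim follows from soundness of $\aSemR{\cdot}$ in \cref{def:sound}, using that $t(g)=\lV_g\,t$ is kept consistent by $\beta$. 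For a $\create$ edge, the soundness requirement on $\nu^\sharp$ captures the fresh thread \emph{id}, and restriction to $\X$ reflects that the created thread starts with empty lockset and no knowledge of globals. For a $\return\,x$ edge, the ego thread's concrete \emph{id} $i$ lies in $\gamma_\ValD(\unlift\,r\,\self)$ by the $\unlift$-property, so the side-effect is redirected to an unknown $[i^\sharp]$ with $i\in\gamma_\ValD\,i^\sharp$, matching the definition of $\GetP[i]$. For a $\join$ edge, if the incorporated return trace belongs to a thread with \emph{id} $i_j$, then $i_j\in\gamma_\ValD(\unlift\,r\,x')$; as also $i_j\in\gamma_\ValD\,i^\sharp$ for the abstract \emph{id} $i^\sharp$ under which that return trace is recorded, soundness of $\sqcap$ on $\ValD$ yields $\unlift\,r\,x'\sqcap i^\sharp\neq\bot$, so $i^\sharp$ contributes to the least upper bound $v$ and $\aSemR{x'\leftarrow^\sharp v}$ soundly assigns the returned value.

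The heart of the argument---and the main obstacle---is the $\unlock(a)$/$\lock(a)$ pair, where relational information must be propagated cluster-wise through an operation (meet) that mixes it with information recorded by other threads, and one must argue that the ``forgetting'' performed when a mutex is released never discards information that is still valid. At $\unlock(a)$, using idempotence and soundness of $\restr{\cdot}{Y}$ together with \cref{p:restrict}, one shows that the $\beta$-image of the resulting trace restricted to a cluster $\Cluster\in\Clusters_a$ lies in $\gamma_\R(\restr{r}{\Cluster})$, so the side-effect to $[a,\Cluster]$ is sound; the surviving local state is cut down to $\X\cup\bigcup\{\GM[a']\mid a'\in S\setminus\{a\}\}$, and restriction's over-approximation exactly mirrors that after the unlock the values of globals protected only by $a$ may be overwritten by other threads. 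At $\lock(a)$, the incorporated trace $t_1$ ends with $\unlock(a)$ and, by the consistency conditions of the local trace semantics, records the current values of all globals in $\GM[a]$, so that in the combined trace $t$ one has $\lV_g\,t=\lV_g\,t_1$ for $g\in\GM[a]$; since $t_1$ lies in $\bigcap_{\Cluster\in\Clusters_a}\gamma_a(\Get^\sharp[a,\Cluster])$, soundness of $\sqcap$ on $\R$ shows its $\beta$-image is consistent with $\bigsqcap_{\Cluster\in\Clusters_a}\Get^\sharp[a,\Cluster]$, and a second application combines this with the constraint inherited from $t_0$ via $\Get^\sharp[u,S]$, giving $t\in\gamma_{u,S\cup\{a\}}(\Get^\sharp[u,S\cup\{a\}])$. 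Keeping the lockset component accurate---so that a trace reaching $[u',S']$ indeed has sink lockset $S'$, as membership in $\T_{S'}$ demands---runs alongside all of this and is immediate from the shapes of the edges.
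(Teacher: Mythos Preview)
Your proposal is correct and follows the same route as the paper: concretize $\Get^\sharp$ to obtain $\GetP$, then verify by fixpoint induction that $\GetP$ solves $\C'$, handling the constraints for $\init$, $\lock$, $\unlock$, $\return$, $\join$, and $\create$ in the same way.

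One place where your sketch is thinner than the paper and worth tightening: in the $\lock(a)$ case you argue that $\beta\,t_1$ lies in $\gamma_\R\bigl(\bigsqcap_{\Cluster}\Get^\sharp[a,\Cluster]\bigr)$ and that $\beta\,t_0$ lies in $\gamma_\R(r^\sharp)$, and then invoke ``a second application'' of soundness of $\sqcap$ to conclude for the combined trace $t'$. But soundness of $\sqcap$ needs the \emph{same} assignment in both concretizations, and $\beta\,t'$ coincides with neither $\beta\,t_0$ nor $\beta\,t_1$ in general. The paper closes this by invoking \cref{p:restrict} also for the $\lock$ case: $r^\sharp$ constrains only locals and globals with a protecting mutex in $S$, on which $\beta\,t'$ agrees with $\beta\,t_0$; the cluster values constrain only variables in $\Cluster\subseteq\GM[a]$, on which $\beta\,t'$ agrees with $\beta\,t_1$ (your observation $\lV_g\,t'=\lV_g\,t_1$ for $g\in\GM[a]$). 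Hence $\beta\,t'$ itself lies in both concretizations, and only then does the meet apply. You cite \cref{p:restrict} for $\unlock$ but should also cite it here.
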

\begin{proof}
    Recall from \cref{p:GetGetP}, that the least solution of $\C'$ is sound
    w.r.t.\ the local trace semantics as specified by the constraint system $\C$.
    It thus suffices to prove that the mapping $\GetP$ constructed from a solution $\Get^\sharp$ of $\C^\sharp$
    as above, is a solution of the constraint system $\C'$.
    We verify by fixpoint induction that for the $j$-th approximation $\Get^j$
    to the least solution of $\C'$, $\Get^j \subseteq \Get'$ holds.
    To this end, we verify for the start point $u_0$ and the empty lockset, that
    \[
        (\{ [a,\Cluster] \mapsto \init \mid a \in\M, \Cluster\in\Clusters_a \}, \init) \subseteq (\GetP,\GetP\,[u_0,\emptyset])
    \] holds and for each edge $(u,\act,v)$ of the control-flow graph
    and each possible lockset $S$, that
    \[
      \sem{[u,S],\act}'\, \Get^{j-1} \subseteq(\GetP,\GetP\,[v,S'])
    \] holds.

    We exemplify this for the constraints corresponding to program start, locking of a mutex, and returning from a thread.
    The proof proceeds analogously for all other constraints.

    \noindent First, for the start point $u_0$ and the empty lockset:
    \[
        (\{ [a,\Cluster] \mapsto \init \mid a \in\M,\Cluster\in\Clusters_a \}, \init) \subseteq (\GetP,\GetP\,[u_0,\emptyset])
    \]
    First, for the value of the unknown $[u_0,\emptyset]$: $\init\subseteq\GetP\,[u_0,\emptyset]$:
    \[
        \begin{array}{lll}
            \init^\sharp\,\Get &=&
            \Let\;\rho = \left\{ [a,\Cluster] \mapsto \aSemR{\{ g \leftarrow 0 \mid g\in\Cluster\}}\,\top \mid a \in \M, \Cluster\in\Clusters_a \right\}\;\In\\
            && (\rho,\aSemR{\self \leftarrow^\sharp i_{0}}\top)
        \end{array}
    \]
    Let $\Get^\sharp\,[u_0,\emptyset] = r^\sharp$ the value provided by $\C^\sharp$ for the start point and the empty lockset.
    Since $\Get^\sharp$ is a solution of $\C^\sharp$, $\left(\aSemR{\self \leftarrow i_{0}}\top\right)  \sqsubseteq  r^\sharp$.
    By definition,
    \[
    \begin{array}{lll}
    \Get'[u_0,\emptyset] = \gamma_{u_0,\emptyset}(r^\sharp) &=& \{ t \mid t \in \T_\emptyset, \loc\,t = u_0,
    \exists v: (\beta\,t \cup \{ ret \mapsto v\}) \in \gamma_\R\,r^\sharp \}
    \end{array}
    \]
    For every trace $t\in\init$, let $r = \beta\,t \cup \{ \returnVar \mapsto 0 \}$. Then,
    \[ r \in \gamma_\R \left(\aSemR{\self \leftarrow^\sharp i_{0}}\top\right) \subseteq \gamma_\R (r^\sharp) \]
    and thus $t\in\gamma_{u_0,\emptyset}(r^\sharp)=\GetP\,[u_0,\emptyset]$ for all $t\in\init$.

    \noindent Then, for the side-effects we need to verify that for all $a\in\M$, $\Cluster\in\Clusters_a$:
    $\init \subseteq \gamma_a\left(\aSemR{\{ g \leftarrow 0 \mid g\in\Cluster\}}\,\top\right)$, i.e., the side-effect
    in the abstract covers all the side-effects in the concrete.
    By definition,
    \[
        \begin{array}{lll}
        \gamma_{a}\left(\aSemR{\{ g \leftarrow 0 \mid g\in\Cluster\}}\,\top\right) = \{  t \mid t \in \T, \\
            \quad (\last\,t = \unlock(a) \lor t\in\init),\\
            \quad \exists\,v: (\beta\,t\cup \{\returnVar \mapsto v\}) \in
            \gamma_\R\,\left(\aSemR{\{ g \leftarrow 0 \mid g\in\Cluster\}}\,\top\right)\\
        \}
        \end{array}
    \]
    Consider $t\in\init$, and let $\sigma = \beta\,t\cup \{ \returnVar \mapsto 0 \}$. Then $\forall g\in\G: \sigma\,g = 0$, as all globals initially
    have the value $0$. Then $\sigma\in \gamma_\R\left(\aSemR{\{ g \leftarrow 0 \mid g\in\Cluster\}}\,\top\right)$ and thus for all $t\in\init$
    $t\in\gamma_{a}\left(\aSemR{\{ g \leftarrow 0 \mid g\in\Cluster\}}\,\top\right)$ , i.e., all side-effects are accounted
    for.

    \vspace{2em}
    \noindent Next, for locking of a mutex $a$:
    \[
        \begin{array}{lll}
            \sem{[u,S],\lock(a)}^\sharp\Get^\sharp	&=& \Let\,r^\sharp = \Get^\sharp\,[u, S]\;\In\\
            &&    \Let\,r^{\sharp''} = r^\sharp\sqcap\left(\bigsqcap_{\Cluster\in\Clusters_a}\Get^\sharp\,[a,\Cluster]\right)\;\In\\
            && (\emptyset,r^{\sharp''})
        \end{array}
    \]
    Let $r^\sharp = \Get^\sharp[u,S]$ and $r^{\sharp'} = \Get^\sharp[v,S \cup \{a\}]$ be the value provided by $\Get^\sharp$ for the start and
    endpoint of the edge, respectively. Since $\Get^\sharp$ is a solution of $\C^\sharp$, $r^{\sharp''} \sqsubseteq r^{\sharp'}$.

    \noindent Then, by definition:
    \[
      \begin{array}{lll}
      \GetP\,[v,S \cup \{a\}] = \{ t \mid t \in \T_{S \cup \{a\}},\loc\,t = v, \exists\,v': (\beta\,t\cup \{\returnVar \mapsto v'\}) \in \gamma_\R\,r \}
      \end{array}
    \]
    For any trace $t_0\in\Get^{i-1}\,[u,S]$, let $\beta\,t_0=\sigma_0$. By induction hypothesis,
    $\exists v': \sigma_0 \cup \{\returnVar \mapsto v'\} \in \gamma_\R(r^\sharp)$.
    For any trace $t_1\in\bigcap_{\Cluster\in\Clusters_a}\left(\Get^{i-1}\,[a,\Cluster]\right)$, let $\beta\,t_1=\sigma_1$.
    By induction hypothesis,
    $\exists v'': \sigma_1 \cup \{\returnVar \mapsto v''\} \in \bigcap_{\Cluster\in\Clusters_a}
    \left(\gamma_\R\left(\Get^\sharp\,[a,\Cluster]\right)\right)$.

    For any $t' \in \sem{e}(\{t_0\},\{t_1\})$, $\loc(t')= v$ and $t' \in \T_{S \cup \{a\}}$.
    Let $\sigma' = \beta\,t' = \left\{ x \mapsto t'(x) \mid x\in(\X \cup \G) \right\}$.
    Then $\forall x\in\X: \sigma'\,x = \sigma_0\,x$. 
    For all globals $g\in\GM\,[a]$ on the other hand, one of the following holds:
    \begin{itemize}
      \item[(a)] $S\cap\MM[g]\neq\emptyset \land t'(g) = t_0(g) = t_1(g)$.
      \item[(b)] $S\cap\MM[g]=\emptyset \land t'(g) = t_1(g)$.
    \end{itemize}
    Since $\C^\sharp$ maintains the invariant that $r$ at a program point does not contain information about unprotected globals,
    and $r$ for a mutex and cluster does not maintain information for locals or those globals that are not protected by it
    or not part of the cluster (\cref{p:restrict}),
    and the $\sqcap$ operation in $\R$ is sound w.r.t.\ to the intersection of concretizations
    we have:
\[
  \begin{array}{lll}
    \sigma' \cup \{ \returnVar \mapsto v' \} \in
      \gamma_\R\left(r^\sharp\right) \cap
      \bigcap_{\Cluster\in\Clusters_a}\left(\gamma_\R\left(\Get^\sharp\,[a,\Cluster]\right)\right)\\
    \qquad\qquad\qquad\qquad\qquad\subseteq \gamma_\R \left(r^{\sharp}\sqcap \left(\bigsqcap_{\Cluster\in\Clusters_a}\Get^\sharp\,[a,\Cluster]\right)\right)
    = \gamma_\R \left(r^{\sharp'}\right)
  \end{array}
\] Hence, $t' \in \gamma_{v,S\cup\{a\}}r^{\sharp'}$ and thus $t' \in \GetP\,[v,S\cup \{a\}]$.
    We conclude that the return value of $\sem{[u,S],\lock(a)}\,\Get^{i-1}$ is subsumed by $\GetP\,[v,S\cup \{a\}]$.
    Since the constraint causes no side-effects, the claim holds.

  \vspace{2em}
  \noindent Next, for an edge corresponding to a return from a thread.
  \[
    \begin{array}{lll}
    \sem{[u,S],\return\,x}^\sharp\Get^\sharp	&=&
      	\Let\;r^\sharp = \Get^\sharp\,[u,S]\;\In\\
      &&	\Let\;r^{\sharp''} = \restr{\left(\aSemR{\returnVar \leftarrow x}\,r^\sharp\right)}{\{\returnVar\}}\;\In\\
      &&	\Let\;\rho = \left\{ [\unlift\,r^\sharp\,\self] \mapsto r^{\sharp''}\right\}\;\In\\
      &&		(\rho,r^\sharp)
    \end{array}
  \]
  Let $r^\sharp = \Get^\sharp[u,S]$ and $r^{\sharp'} = \Get^\sharp[v,S]$ be the value provided by $\Get^\sharp$ for the start and
  endpoint of the edge, respectively. Since $\Get^\sharp$ is a solution of $\C^\sharp$, $r^{\sharp} \sqsubseteq r^{\sharp'}$.
  For any trace $t\in\Get^{i-1}\,[u,S]$, let $\beta\,t_0=\sigma_0$. By induction hypothesis,
  $\exists v': \sigma_0 \cup \{\returnVar \mapsto v'\} \in \gamma_\R(r^\sharp)$.
  Let $t'=\sem{e}\,\{t\}$. Then $\beta\,t'=\sigma_1=\sigma_0$. Thus,
  \[
    \sigma_1 \cup \{\returnVar \mapsto v'\} \in \gamma_\R(r^\sharp) \subseteq \gamma_\R(r^{\sharp'})
  \] and $t' \in \gamma_{v,S}r^{\sharp'}$ and thus $t' \in \GetP\,[v,S]$.

  Next for the side-effects caused for an individual trace $t$:
  \[
    \begin{array}{l}
      \{[t(\self)]\mapsto\sem{e}(t)\}\qquad\qquad\qquad
      \left\{ [\unlift\,r^\sharp\,\self] \mapsto \restr{\left(\aSemR{\returnVar \leftarrow x}\,r^\sharp\right)}{\{\returnVar\}}\right\}
    \end{array}
  \]
  As before, consider $t'=\sem{e}(t)$ and $\beta\,t'=\sigma_1=\beta\,t=\sigma_0$.
  We have $\exists v': \sigma_0 \cup \{\returnVar \mapsto v'\} \in \gamma_\R(r^\sharp)$.
  Now set $\sigma' = \sigma_0 \cup \{\returnVar \mapsto \sigma_0\,x\}$.
  Then
  \[
    \sigma' \in \gamma_\R\left(\aSemR{\returnVar \leftarrow x} r^\sharp\right) \subseteq
    \gamma_\R\left(\restr{\left(\aSemR{\returnVar \leftarrow x}\,r^\sharp\right)}{\{\returnVar\}}\right) = r^{\sharp''}
  \]
  Let $i= t(\self)$.
  Then $t' \in \gamma_{i}\left(r^{\sharp''}\right)$, i.e., the value that is side-effected in the concrete semantics
  is abstracted by the abstract value side-effected in the abstract semantics.
  What remains to be shown is that all side-effects in the concrete
  are accounted for in the abstract.
  By induction hypothesis, and the requirements on $\R$ and $\ValD$ (\cref{def:sound}),
  we have $i \in \gamma_\ValD(\unlift\,r^\sharp\,\self)$.
  By construction of $\GetP$, and with $\Get^\sharp$ being a solution of $\C^\sharp$ then
  \[
    t' \in \gamma_{i}\left(r^{\sharp''}\right) \subseteq \gamma_{i}\left(\Get^\sharp[\unlift\,r^\sharp\,\self]\right) \subseteq \GetP\,[i]
  \]
  Thus, the side-effect is accounted for and the claim holds for this constraint.

\end{proof}


\section{Complete Constraint System for the Refined Analysis}\label{s:refineComplete}

Consider the abstract constraint system from \cref{s:base}.
The complete definition of the refined constraint system instantiated to the actions
considered in this paper and unknowns for program points enriched with locksets
then is shown in \cref{f:abstractRef}
\begin{figure}[h]
    \[
      \begin{array}{lll}
        (\Get, \Get\,[u_0,\emptyset,A']) &\sqsupseteq& \init^\sharp_{A'}\,\Get\\
        \quad \text{for start point } u_0, A'\in\initA \span \span \\[1.2ex]

        (\Get, \Get\,\left[v, S \cup \{a\}, A'\right]) &\sqsupseteq& \aSem{[u,S,A_0],\lock(a),A_1}\,\Get\\
        \quad \text{for } (u,\lock(a),v)\in\E, S\subseteq\M, A_0\in\A, A_1\in\A, A'\in\semA{u,\lock(a)}\,(A_0,A_1) \span \span\\[1.2ex]

        (\Get, \Get\,\left[v, S \setminus \{a\}, A'\right]) &\sqsupseteq& \aSem{[u,S,A_0],\unlock(a),A'}\,\Get\\
        \quad \text{for } (u,\unlock(a),v)\in\E, S\subseteq\M, A_0\in\A, A'\in\semA{u,\unlock(a)}\,(A_0)  \span \span\\[1.2ex]

        (\Get, \Get\,\left[v, S, A'\right]) &\sqsupseteq& \aSem{[u,S,A_0],x'{=}\join(x),A_1}\,\Get\\
        \quad \text{for } (u,x'{=}\join(x),v)\in\E, S\subseteq\M, A_0\in\A, A_1\in\A, A'\in\semA{u,x'{=}\join(x)}\,(A_0,A_1)  \span \span \\[1.2ex]

        (\Get, \Get\,\left[v, S, A'\right]) &\sqsupseteq& \aSem{[u,S,A_0],\return\,x,A'}\,\Get\\
        \quad \text{for } (u,\return\,x,v)\in\E, S\subseteq\M, A_0\in\A, A'\in\semA{u,\return\,x}\,(A_0) \span \span\\[1.2ex]

        (\Get, \Get\,\left[v,S, A'\right]) &\sqsupseteq& \aSem{[u,S,A_0],\act}\,\Get\\
        \quad \text{for } (u,\act,v)\in\E,S\subseteq\M, \text{other action }\act, A_0\in\A, A'\in\semA{u,\act}\,(A_0), \span \span
    \end{array}
    \]
    \caption{Constraint system with refinement.}
    \label{f:abstractRef}
\end{figure}
where the
right-hand sides of the system refined with control-point splitting, are determined by
the original right-hand sides as follows
\[
\begin{array}{lll}
    \aSem{[u,S,A_0],\lock(a),A_1}\,\Get &=&
    \Let\,\Get'\,[x] = \Iif\;[x] = [u,S]\;\Then\;\Get\,[u,S,A_0]\\
    & &\quad\Eelse\,\Get\,[x,A_1]\;\In \\
    & &\Let\, (\emptyset,v) = \aSem{[u,S],\lock(a)}\,\Get'\;\In\\
    & &(\emptyset,v)\\[1ex]

    \aSem{[u,S,A_0],\unlock(a),A'}\,\Get &=&
    \Let\,\Get'\,[x] = \Get\,[x,A_0]\;\In\\
    & &\Let\, (\rho,v) = \aSem{[u,S],\unlock(a)}\,\Get'\;\In\\
    & &\Let\, \rho' = \left\{ [a,\Cluster,A'] \mapsto v' \mid ([a,\Cluster] \mapsto v') \in\rho \right\}\\
    & &(\rho',v)\\[1ex]

    \aSem{[u,S,A_0],x' {=} \join(x''),A_1}\,\Get &=&
    \Let\,\Get'\,[x''] = \Iif\;[x] = [u,S]\;\Then\;\Get\,[u,S,A_0]\\
    & &\quad\Eelse\,\Get\,[x,A_1]\;\In \\
    &&\Let\, (\emptyset,v) = \aSem{[u,S],x'' {=} \join(x')}\,\Get'\;\In\\
    && (\emptyset,v)\\[1ex]

    \aSem{[u,S,A_0],\return\,x',A'}\,\Get &=&
    \Let\,\Get'\,[x] = \Get\,[x,A_0]\;\In\\
    &&\Let\, (\{ [i] \mapsto v' \},v) = \aSem{[u,S],\return\,x}\,\Get'\;\In\\
    && (\{ [i,A'] \mapsto v' \},v)\\[1ex]

    \aSem{[u,S,A_0],x{=}\create(u_1)}\,\Get &=&
    \Let\,\Get'\,[x] = \Get\,[x,A_0]\;\In\\
    &&\Let\, (\{[u_1,\emptyset] \mapsto v'\},v) = \aSem{[u,S],x{=}\create(u_1)}\,\Get'\;\In\\
    &&(\{ [u_1,\emptyset,A'] \mapsto v' \mid A'\in \newA\,u\,u_1\,A_0\},v)    \end{array}
    \]
    \[
    \begin{array}{lll}

    \init^\sharp_{A'}\,\Get &=&
    \Let\,\Get'\,\_ = \bot\;\In\\
    &&\Let\,(\rho,v) = \init^\sharp\,\Get'\;\In\\
    &&\Let\, \rho' = \left\{ [a,\Cluster,A'] \mapsto v' \mid ([a,\Cluster] \mapsto v') \in\rho \right\}\\
    &&(\rho',v)\\[1ex]
    \end{array}
    \] and, for all other actions $\act$, by
    \[
    \begin{array}{lll}
    \aSem{[u,S,A_0],\act}\,\Get &=&
    \Let\,\Get'\,[x] = \Get\,[x,A_0]\;\In\\
    &&\aSem{[u,S],\act}\,\Get'
\end{array}
\]
\newcommand{\CRefSharp}{\C^{\sharp'}}
Let us call this constraint system instantiated with the analysis from \cref{s:base} $\CRefSharp$.

\section{Soundness Proof for the Refinement from Section \ref{s:refinement}}\label{s:refineSound}
Consider a modified concrete constraint system, where we associate with each unknown also information from $\A$:
\[
        \begin{array}{lll}
        (\eta,\eta\,[u_0,\emptyset,A])	&\sqsupseteq	&
        (\{ [a,\Cluster,A] \mapsto \{ t \mid  t\in\init, A = \alpha_\A(t)\} \mid a\in\M, \Cluster\in\Clusters_a  \},\\
        && \qquad \{ t \mid  t\in\init, A = \alpha_\A(t)\}) \\[1.2ex]

        (\eta,\eta\,[u',S\cup \{a\},A'])	&\sqsupseteq	&  \left(\sem{[u,S,A_0],\lock(a),A_1}\,\eta\right)\\
        \qquad \text{for } (u,\lock(a),u')\in\E, S\subseteq\M, A' \in  \semA{u,\lock(a)}(A_0,A_1)  \span \span \\[1.2ex]

        (\eta,\eta\,[u',S\setminus \{a\},A'])	&\sqsupseteq	& \left(\sem{[u,S,A_0],\unlock(a),A'}\,\eta\right)\\
        \qquad \text{for } (u,\unlock(a),u')\in\E, S\subseteq\M, A' \in \semA{u,\unlock(a)}(A_0) \span \span\\[1.2ex]

        (\eta,\eta\,[u',S\cup \{a\},A'])	&\sqsupseteq	&  \left(\sem{[u,S,A_0],x{=}\join(x'),A_1}\,\eta\right)\\
        \qquad \text{for } (u,x{=}\join(x'),u')\in\E, S\subseteq\M, A' \in  \semA{u,x{=}\join(x')}(A_0,A_1)  \span \span \\[1.2ex]

        (\eta,\eta\,[u',S,A'])	&\sqsupseteq	& \left(\sem{[u,S,A_0],\act}\,\eta\right)\\
        \qquad \text{for } (u,\act,u')\in\E, S\subseteq\M,  A' \in \semA{u,\act}(A_0), \text{other actions } $\act$ \span \span
        \end{array}
        \label{c:concrete''}
\]
The right-hand sides are modified to access the corresponding unknowns with appropriate digests
and also re-direct side-effects accordingly in a similar manner to what was done in the abstract in \cref{s:refineComplete}.
We exemplify this for the locking of a mutex $a$ here:
\newcommand{\GetPP}{\Get''}
\[
\begin{array}{lll}
    \sem{[u,S,A_0],\lock(a),A_1}\,\GetPP &=&
     \Let\,\Get'\,[x] = \Iif\;[x] = [u,S]\;\Then\;\GetPP\,[u,S,A_0]\\
    &&\quad\Eelse\;\;\GetPP\,[x,A_1]\;\In \\
    &&\Let\, (\emptyset,v) = \sem{[u,S],\lock(a)}\,\Get'\;\In\\
    && (\emptyset,v)
\end{array}
\]
Let this constraint system be called $\C''$.

Let $\GetP$ be the unique least solution of constraint system $\C'$ from \eqref{c:concrete'}. We construct from
it a mapping $\GetPP$ from the unknowns of $\C''$ to $2^\T$.

\[
    \begin{array}{llll}
        \GetPP\,[u,S,A] &=& \GetP\,[u,S] \cap \T_A  \qquad \qquad & (u\in\N, S\subseteq\M, A\in\A)\\
        \GetPP\,[i,A] &=& \GetP\,[i] \cap \T_A & (i\in\I, A\in\A)\\
        \GetPP\,[a,\Cluster,A] &=& \GetP\,[a,\Cluster] \cap \T_A & (a\in\M, \Cluster\in\Clusters_a A\in\A)\\
    \end{array}
\] where $\T_A$ denotes the subset of local traces where $\{ t \mid t\in\T, \alpha_\A(t)=A \}$.
\begin{proposition}\label{p:GetPGetPP}
    Provided that $\alpha_\A$, $\newA$, and $\semA{u,\act}$
    fulfill the requirements \eqref{def:Asound}, \eqref{def:AnewSound}, and \eqref{def:AinitSound}
    from \cref{s:refinement},
    $\GetPP$ is the least solution of $\C''$ if and only if $\GetP$ is the least solution of $\C'$.
\end{proposition}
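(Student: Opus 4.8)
The plan is to argue, exactly as for \cref{p:GetGetP}, by fixpoint induction, comparing the Kleene iterates of $\C'$ and $\C''$. Since the right-hand sides of $\C''$ arise from those of $\C'$ by the purely syntactic wrapper construction of \cref{s:refineComplete} — which merely reads and side-effects digest-tagged copies $[x,A]$ of the original unknowns $[x]$ — they are monotone; hence $\C''$ has a unique least solution, obtained as $\bigsqcup_{j\ge 0}\GetPP^{j}$ where $\GetPP^{0}$ sends every unknown to $\emptyset$ and $\GetPP^{j+1}$ applies all constraints of $\C''$ to $\GetPP^{j}$. Write $\GetP^{j}$ for the analogous iterates of $\C'$. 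The heart of the proof is the invariant
\[
  \GetPP^{j}[x,A] \;=\; \GetP^{j}[x]\cap\T_A
\]
for every unknown $[x]$ of $\C'$ and every $A\in\A$, shown by induction on $j$; the base case $j=0$ is trivial as both sides are $\emptyset$.

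For the inductive step I would match the two systems constraint by constraint. A single constraint of $\C'$ for an edge $(u,\act,v)$ — take $\act=\lock(a)$ with source lockset $S$ — gives rise in $\C''$ to one constraint per choice of source digests $A_0$ and $A_1$ (the latter for the incorporated trace, since $\lock$ is observing) and target digest $A'\in\semA{u,\lock(a)}(A_0,A_1)$; by the wrapper definition and the induction hypothesis this constraint evaluates the base right-hand side on $\GetP^{j}[u,S]\cap\T_{A_0}$ together with $\bigl(\bigsqcap_{\Cluster\in\Clusters_a}\GetP^{j}[a,\Cluster]\bigr)\cap\T_{A_1}$. Taking the union over all admissible $(A_0,A_1)$ for a fixed $A'$, I claim the result equals $\bigl(\sem{e}(\GetP^{j}[u,S],\bigsqcap_{\Cluster}\GetP^{j}[a,\Cluster])\bigr)\cap\T_{A'}$, i.e.\ the contribution of this edge to $\GetP^{j+1}[v,S\cup\{a\}]$, intersected with $\T_{A'}$. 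The $\subseteq$ inclusion uses the soundness half of \eqref{def:Asound}: for $t_0\in\T_{A_0}$, $t_1\in\T_{A_1}$ and $t'\in\sem{e}(t_0,t_1)$ one has $\alpha_\A t'\in\semA{u,\lock(a)}(A_0,A_1)$, and since this set has at most one element and contains $A'$ it equals $\{A'\}$, so $t'\in\T_{A'}$. The $\supseteq$ inclusion is the delicate point: given $t'\in\sem{e}(t_0,t_1)\cap\T_{A'}$ with $t_0\in\GetP^{j}[u,S]$ and $t_1\in\bigsqcap_{\Cluster}\GetP^{j}[a,\Cluster]$, set $A_0:=\alpha_\A t_0$ and $A_1:=\alpha_\A t_1$ (well defined, as $\alpha_\A$ is a total function), so $t_0\in\T_{A_0}$, $t_1\in\T_{A_1}$, and \eqref{def:Asound} again yields $A'=\alpha_\A t'\in\semA{u,\lock(a)}(A_0,A_1)$; hence the $\C''$-constraint with source digests $A_0,A_1$ and target $A'$ genuinely exists and genuinely produces $t'$. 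The remaining edge kinds follow the same recipe: non-observing actions are identical with the $A_1$-branch absent; $\unlock$ and $\return$ additionally redirect side-effects to unknowns tagged by the digest $A'$ of the side-effected traces, again pinned down by \eqref{def:Asound}; thread creation uses $\newA$ and \eqref{def:AnewSound}; and the program-start constraint uses \eqref{def:AinitSound}.

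Finally, intersection with the fixed set $\T_A$ commutes with the directed unions $\bigsqcup_{j}$, so the invariant lifts to the limits: the least solutions satisfy $\GetPP_{\min}[x,A]=\GetP_{\min}[x]\cap\T_A$, and, since every trace lies in $\T_{\alpha_\A t}$, summing over $A\in\A$ recovers $\GetP_{\min}[x]=\bigsqcup_{A\in\A}\GetPP_{\min}[x,A]$. As $\GetPP$ is defined from $\GetP$ by precisely the relation $\GetPP[x,A]=\GetP[x]\cap\T_A$, it follows that $\GetPP$ is the least solution of $\C''$ if and only if $\GetP$ coincides with $\GetP_{\min}$, i.e.\ if and only if $\GetP$ is the least solution of $\C'$, as claimed. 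I expect the only real obstacle to be the $\supseteq$ step above — the assertion that control-point splitting according to $\A$ discards no local trace — which is exactly where the determinism requirement $\abs{\semA{u,\act}(\ldots)}\le 1$ together with totality of $\alpha_\A$ is indispensable; everything else is a mechanical replay of the $\lock$ case.
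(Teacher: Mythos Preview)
Your proposal is correct and takes essentially the same approach as the paper, which simply states that the proof is ``by fixpoint induction''; you have carefully spelled out the Kleene-iterate invariant $\GetPP^{j}[x,A]=\GetP^{j}[x]\cap\T_A$ and the two inclusions that make the inductive step go through, correctly isolating where each of \eqref{def:Asound}, \eqref{def:AnewSound}, and \eqref{def:AinitSound} is used. One minor point: in the concrete system $\C'$ the combination over clusters at $\lock$ is written with $\bigcap$, not the abstract $\bigsqcap$, but this does not affect your argument.
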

\begin{proof}
    The proof of \cref{p:GetPGetPP} is by fixpoint induction.
\end{proof}
\noindent Now let us relate solutions of the refined abstract constraint system $\CRefSharp$ and the refined concrete constraint system $\C''$.
Let $\Get^{\sharp'}$ a solution of $\CRefSharp$. We construct from it a mapping $\GetPP$ by reusing the definition of concretizations $\gamma$ from
\cref{s:baseSound}:
\[
  \begin{array}{llll}
    \GetPP [u,S,A] &=& \gamma_{u,S}(\Get^{\sharp'}\,[u,S,A]) \cap \T_A \qquad \qquad & (u\in\N, S\subseteq\M, A\in\A) \\
    \GetPP [i,A] &=& \bigcup\left\{\gamma_{i}((\Get^{\sharp'}\,[i^\sharp,A])) \mid i \in \gamma_\ValD\,i^\sharp\right\} \cap \T_A \qquad & (i\in\TIDs, A\in\A) \\
    \GetPP [a,\Cluster,A] &=& \gamma_{a}((\Get^{\sharp'}\,[a,\Cluster])) \cap \T_A \qquad & (a\in\M,\Cluster\in\Clusters_a, A\in\A)
  \end{array}
\]

\begin{theorem}\label{t:refine}
    Assume that $\alpha_\A$, $\newA$, and $\semA{u,\act}$ fulfill the requirements \eqref{def:Asound}, \eqref{def:AnewSound}, and \eqref{def:AinitSound}.
    Then any solution of the refined constraint system is sound relative to the collecting local trace semantics.
\end{theorem}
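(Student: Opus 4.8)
The plan is to slot the refined abstract system $\CRefSharp$ (\cref{f:abstractRef}) into the chain of constraint systems already developed, namely $\C \to \C' \to \C''$ on the concrete side and $\C^\sharp \to \CRefSharp$ on the abstract side, and to reuse the per-edge arguments from the proof of \cref{t:baseSoundAppendix} almost verbatim. First I would note that $\C''$ is itself sound with respect to the collecting local trace semantics: by \cref{p:GetGetP} the least solution of $\C'$ is sound w.r.t.\ $\C$ and hence w.r.t.\ the collecting local trace semantics, and by \cref{p:GetPGetPP} the mapping obtained from that least solution by intersecting each unknown $[\,\cdot\,]$ with $\T_A$ to form $[\,\cdot\,,A]$ is exactly the least solution of $\C''$. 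It therefore suffices to show that the mapping $\GetPP$ built from an \emph{arbitrary} solution $\Get^{\sharp'}$ of $\CRefSharp$ — via $\GetPP\,[u,S,A] = \gamma_{u,S}(\Get^{\sharp'}\,[u,S,A]) \cap \T_A$ and the analogous definitions for the $[i,A]$ and $[a,\Cluster,A]$ using the concretizations $\gamma_{u,S},\gamma_i,\gamma_a$ from \cref{s:baseSound} — is a solution of $\C''$; then $\GetPP$ pointwise contains the least solution of $\C''$, which is sound, and the theorem follows.

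For this last step I would proceed by fixpoint induction on the approximants of the least solution of $\C''$, checking each constraint class — program start, $\lock(a)$, $\unlock(a)$, $x'{=}\join(x)$, $\return\,x$, $x{=}\create(u_1)$, and the remaining local actions — exactly as in \cref{t:baseSoundAppendix}. The structural simplification is that both the refined abstract right-hand sides (\cref{f:refinedRHS}) and the refined concrete right-hand sides are wrappers around their base counterparts: every read of, and every side-effect to, an unknown $[x]$ of the base system is routed either to $[x,A_0]$ for the current digest $A_0$ or to $[x,A_1]$ for the digest $A_1$ of an incorporated trace, and side-effects are then relabelled to $[x,A']$. Hence on the value component of each constraint I can replay the corresponding argument of \cref{t:baseSoundAppendix} for the reconstructed $\Get'$ inside the wrapper, and the only extra obligation is to intersect the result with $\T_{A'}$ and check that this is sound. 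That is precisely what requirement~\eqref{def:Asound} delivers — $\alpha_\A(\sem{e}(t_0,\dots,t_{k-1})) \subseteq \semA{u,\act}(\alpha_\A t_0,\dots,\alpha_\A t_{k-1})$ guarantees that every local trace produced at this edge carries a digest $A'$ in the set indexing the constraint, and $\abs{\semA{u,\act}(\cdots)}\le 1$ keeps the split deterministic — with~\eqref{def:AnewSound} playing the same role for $x{=}\create(u_1)$ (through $\newA$) and~\eqref{def:AinitSound} at program start. For the side-effecting constraints one additionally checks, as in \cref{t:baseSoundAppendix}, that the relabelled side-effect $\rho' = \{[x,A']\mapsto v' \mid ([x]\mapsto v')\in\rho\}$ still covers every concrete side-effect, which again holds because the extension of a contributing concrete trace $t$ lies in $\T_{A'}$ for the matching $A'$.

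I expect the main obstacle to lie in the \emph{observing} actions $\lock(a)$ and $x'{=}\join(x)$, where two local traces with digests $A_0$ and $A_1$ are combined and the constraint fires only when $A' \in \semA{u,\act}(A_0,A_1)$. Here one must verify both directions: every concrete combination $t'\in\sem{e}(t_0,t_1)$ with $t_0\in\GetPP\,[u,S,A_0]$ and $t_1$ an incorporated trace of digest $A_1$ is accounted for by some firing instance (using $\alpha_\A t' \in \semA{u,\act}(A_0,A_1)$), and, dually, when $\semA{u,\act}(A_0,A_1)=\emptyset$ there is genuinely no valid combined local trace with those two digests, so omitting the constraint loses nothing — this is exactly where digests prune spurious interference, and the contrapositive of~\eqref{def:Asound} is the crux. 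Once these cases are discharged, composing the three soundness links — $\GetPP$ solves $\C''$, the least solution of $\C''$ is sound w.r.t.\ the collecting local trace semantics, and monotonicity of the $\gamma$'s — yields the claim.
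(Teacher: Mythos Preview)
Your proposal is correct and follows essentially the same approach as the paper's own proof: reduce to showing that $\GetPP$ (built from a solution of $\CRefSharp$ via the $\gamma$'s of \cref{s:baseSound} intersected with $\T_A$) is a solution of $\C''$, do this by fixpoint induction on the approximants of the least solution of $\C''$, and discharge each edge by unwrapping the refined right-hand side to the base one, invoking \cref{t:baseSoundAppendix}, and then using $\abs{\semA{u,\act}(\cdots)}\le 1$ together with the inclusion in \eqref{def:Asound} to obtain $v = v\cap\T_{A'}$. The paper carries this out explicitly only for $\lock(a)$; your outline additionally spells out the role of \eqref{def:AnewSound} and \eqref{def:AinitSound} and the contrapositive case $\semA{u,\act}(A_0,A_1)=\emptyset$, which is consistent with and slightly more detailed than the paper's exemplification.
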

\begin{proof}
    By \cref{p:GetPGetPP} any solution of $\C''$ is sound w.r.t. to $\C'$, and from \cref{p:GetGetP} that any solution of $\C'$ is sound w.r.t.\
    to the collecting local trace semantics as specified by $\C$.
    It thus suffices to prove that the mapping $\GetPP$ as constructed from a solution of $\CRefSharp$ is a solution of $\C''$.

    We verify by fixpoint induction that for the $j$-th approximation $\Get^j$
    to the least solution of $\C''$, $\Get^j \subseteq \GetPP$ holds.
    To this end, we verify for the start point $u_0$ and the empty lockset, and all $A \in \A$
    \[
        \begin{array}{lll}
        (\{ [a,\Cluster,A] \mapsto \{ t \mid  t\in\init, \Cluster\in\Clusters_a, A = \alpha_\A(t)\}  \},\{ t \mid  t\in\init, A = \alpha_\A(t)\}) \\
        \qquad \subseteq (\GetPP,\GetPP\,[u_0,\emptyset,A])
        \end{array}
    \] and for each edge $(u,\act,v)$ of the CFG
    and each possible lockset $S$ and digest $A$, that
    \[
      \sem{[u,S,A_0],\act}\, \Get^{j-1} \subseteq(\GetPP,\GetPP\,[v,S',A'])
    \] holds (where additional arguments $A_1$ and $A'$ are passed to the right-hand sides as required by the constraint system).
    We exemplify this for an edge corresponding to a \emph{lock} operation for some mutex $a$.
    We have
    \[
        \begin{array}{lll}
            \sem{[u,S,A_0],\lock(a),A_1}\,\GetPP &=&
             \Let\,\Get_c\,[x] = \Iif\;[x] = [u,S]\;\Then\;\GetPP\,[u,S,A_0]\\
            && \quad\Eelse\;\GetPP\,[x,A_1]\;\In \\
            &&\Let\, (\emptyset,v) = \sem{[u,S],\lock(a)}\,\Get_c\;\In\\
            && (\emptyset,v)\\[1ex]

            \aSem{[u,S,A_0],\lock(a),A_1}\,\Get^{\sharp} &=&
            \Let\,\Get^\sharp_a\,[x] = \Iif\;[x] = [u,S]\;\Then\;\Get^{\sharp}\,[u,S,A_0]\\
            &&\quad\Eelse\;\Get^\sharp\,[x,A_1]\;\In \\
            &&\Let\, (\emptyset,v^\sharp) = \aSem{[u,S],\lock(a)}\,\Get^\sharp_a\;\In\\
            && (\emptyset,v^\sharp)\\[1ex]
        \end{array}
    \]
    \noindent By induction hypothesis, $\Get^{j-1} \subseteq \GetPP$.
    Then, by construction for $\Get_c$ 
    and $\Get^{\sharp}_a$
    and every unknown $x$ of the non-refined constraint system,
    \[
      \Get_c\,x \subseteq \gamma_x(\Get^{\sharp}_a\,x)
    \]
    and thus, by \cref{t:baseSoundAppendix}
    \[
     v \subseteq \gamma_{v,S\cup\{a\}}(v^\sharp)
    \]
    Since $\Get^\sharp$ is a solution of $C^{\sharp'}$, $v^\sharp \sqsubseteq \Get^\sharp\,[v,S\cup\{a\},A']$.
    Also, by \cref{def:Asound} $\left|\semA{u,\lock(a)}(A_0,A_1)\right| \leq 1$ holds. Altogether, then
    \[
        \begin{array}{lll}
        v = v \cap T_{A'} \subseteq \gamma_{v,S\cup\{a\}}(v^\sharp) \cap T_{A'} \subseteq \gamma_{v,S\cup\{a\}}(\Get^\sharp\,[v,S\cup\{a\},A']) \cap T_{A'} = \GetPP\,[v,S\cup\{a\},A']
        \end{array}
    \]
\noindent Since the constraint causes no side-effects, the claim holds.
The proof proceeds analogously for all other constraints.
\end{proof}

\section{Further Increasing Precision of the Analysis from Section~\ref{s:self-excluded}}\label{s:not-read-ancestors}
The analysis as presented in \cref{s:self-excluded} does not make use of components $C$ forming part of
unknowns $[a,\Cluster,(i,C)], [i,C]$ ($a$ a mutex, $\Cluster\in\Clusters_a$, $i$ an abstract thread \emph{id}).
This information can be exploited to exclude a further class of writes -- namely, those that are performed
by a creating thread before the ego thread or its ancestor was created.
Any writes that the created thread may read from the creating thread before the created thread
is created are already accounted for in the start state of the created thread, so only those that happened after the
creation of the current thread need to be considered. To this end, one sets
\[
\begin{array}{lll}
\accounted\,((i,C),(L,W,r))\,(i',C') = (\unique\,i \land i = i') \lor\\
\quad (\lcommondefinite\,i'\,i = i' \land \not\exists \angl{u,u'} \in C': ((i' \circ \angl{u,u'}) = i) \lor \maycreate\,(i' \circ \angl{u,u'})\,i)
\end{array}
\]
\begin{example}\label{e:opts}
    Consider the following program where $\MM[g] = \{a,m_g\}$ and $\MM[h] = \{a,m_h\}$ and assume
    $\Clusters_a = \{\{g,h\}\}$. 
    \begin{center}
        \begin{minipage}[t]{4.5cm}
        \begin{minted}{c}
        main:
         |\op{lock}|(a);
         g = 5; h = 8;
         |\op{unlock}|(a);
         |\op{lock}|(a);
         g = 10; h = 10;
         |\op{unlock}|(a);
         x = |\op{create}|(t1);
         |\op{lock}|(a);
         g = 20; h = 20;
         |\op{unlock}|(a);
         y = |\op{join}|(x);
         |\op{lock}|(a);
         // ASSERT(g==20); (2)
         |\op{unlock}|(a);
        \end{minted}
        \end{minipage}
        \begin{minipage}[t]{3.8cm}
        \begin{minted}{c}
        t1:
         |\op{lock}|(a);
         // ASSERT(g==h); (1)
         |\op{unlock}|(a);
         return 0;
        \end{minted}
        \end{minipage}
    \end{center}
    Here, if we make use of the additional information on which create edges have been encountered, the analysis can determine that assertion (1)
    in thread $t1$ holds.\qed
\end{example}
\noindent The analysis still may incur an unnecessary loss of precision when past writes of a thread are propagated to
a thread it creates, and then back to the creating thread upon \emph{join}.
This may happen when the created thread has not
written to some globals, but the original thread has overwritten them locally in the meantime.
\begin{example}
Consider again \cref{e:opts}.
Here, the assertion (2) cannot be proven, as the stale information in $L\,(a,\{g,h\})$ in $t_1$ (which includes the fact that $g$ may be $10$) is incorporated
into the main thread upon join.
This can be prevented by tracking, for each thread, the set $\bar W$ of global variables that
\emph{may} have been written by it or any thread it has joined,
and then only joining in the $L$ information of the joined thread if at least one protected global has been written.
\qed
\end{example}
\noindent We do not detail this improvement here, but use it in our implementation.

Further useful abstractions to maintain in the ego thread may, e.g., track for each created thread $t'$, the set of globals that has been
potentially written to by the \emph{join-local} part of the ego thread since the creation $W_C\,t'$.
Then, upon joining $t'$, for mutexes $a$ and clusters $\Cluster \in \Clusters_a$ where $\Cluster \cap W_C\,T' = \emptyset$,
the $L\,(a,\Cluster)$ information of the joined thread definitely contains the most up-to-date information,
and $L\,(a,\Cluster)$ of the ego thread can be discarded.

\section{Analysis and Exploiting of Thread Joins}\label{s:joins}
We detail here the enhanced analysis that also tracks in the local state, the set $J$ of thread \emph{id}s
for which join has definitely been called in the \emph{join-local} part of the local trace.
Recall the refined definition of $\accounted$
that takes $J$ into account:
\[
\begin{array}{lll}
\accounted\,((i,C),(J,L,W,r))\,(i',C')=\unique\,i'\land(i=i' \lor i'\in J)
\end{array}
\]
We proceed to give the \emph{interesting} right-hand sides, all others simply propagate this value locally,
and pass it as an additional argument to $\accounted$.
\[
\begin{array}{lll}
\sem{[u,S,(i,C)],\return\,x,(i,C)}^\sharp\Get	&=&
     \Let\;(J,L,W,r) = \Get\,[u,S,(i,C)]\;\In\\
    && \Let\;v = \restr{\left(\aSemR{\returnVar \leftarrow x}\,r\right)}{\{\returnVar\}}\;\In\\
    && \Let\;\rho = \{ [(i,C)] \mapsto (J,L,v)\}\;\In\\
    && (\rho,(L,W,r))
\end{array}
\]

\[
\begin{array}{lll}
    \sem{[u,S,(i,C)],x' {=} \join(x),(i',C')}^\sharp\Get	&=&
        \Let\;(J, L,W,r) = \Get\,[u,S,(i,C)]\;\In\\
        &&\Iif\;(i' \sqcap ((\unlift\,r)\,x) = \bot) \lor\\
        &&\quad \accounted\,((i,C),(J,L,W,r))\,(i',C')\\
        &&\Then\; {\bf \bot}\\
        &&\Eelse\; \Let\;(J',L',v) = \Get[(i',C')]\;\In\\
        &&\quad\Let\;r' = \aSemR{x' \leftarrow^\sharp (\unlift\,v)\,\returnVar}r\;\In\\
        &&\quad(\emptyset,(J \cup J' \cup \{i'\},L \sqcup L',W,r'))\\

    %
    \end{array}
\]
We remark that, in the analysis, when performing a thread join, if there are different thread \emph{id}s for which join might be called, there is
one constraint for each, and the resulting values will be joined to obtain the new value after the join edge. Here, the lattice \emph{join}
for $J$ is intersection.
This handles the case where multiple threads are joined by different threads for which the thread \emph{id}s may be stored in $x$.

Recall that the function $\accounted$ now is given by:
\[
\begin{array}{lll}
\accounted\,((i,C),(J,L,W,r))\,(i',C') =\quad \unique\,i'\land(i=i' \lor i'\in J)
\end{array}
\]

\begin{example}\label{e:joins}
    Consider the following program where $\MM[g] = \{a,m_g\}$ and $\MM[h] = \{a,m_h\}$ and assume
    $\Clusters_a= \{\{g,h\}\}$.
    \begin{center}
        \begin{minipage}[t]{4.5cm}
        \begin{minted}{c}
        main:
         x = |\op{create}|(t1);
         |\op{lock}|(a);
         g = 20; h = 20;
         |\op{unlock}|(a);
         y = |\op{join}|(x);
         |\op{lock}|(a);
         // ASSERT(g==h); (1)
         g = 5; h = 5;
         |\op{unlock}|(a);
         |\op{lock}|(a);
         // ASSERT(g==5); (1)
         |\op{unlock}|(a);
        \end{minted}
        \end{minipage}
        \begin{minipage}[t]{3.8cm}
        \begin{minted}{c}
        t1:
         |\op{lock}|(a);
         g = 4; h = 8;
         |\op{unlock}|(a);
         x = ?;
         |\op{lock}|(a);
         g = x; h = x;
         |\op{unlock}|(a);
         return 0;
        \end{minted}
        \end{minipage}
    \end{center}
    Here, both assertions can be proven. At (1) the thread $t_1$, is must-joined. Its last write is accounted for in $L\,(a,\{g,h\})$,
    thus the unknown $[a,\{g,h\},t_1]$ where the abstract relationship $g=h$ does not hold is not consulted.
    As the updates in $L$ are destructive, after the \emph{main} thread writes $5$ to $g$, this is also the only value it reads
    for $g$, meaning (2) is proven as well.
    \qed
\end{example}

To gain additional precision, the set $J$ of must-joined threads could be published together with the protected globals at an unlock.
In this way, a thread need not read from another thread, that does all its writes after the first thread has already been
\emph{must} joined.
A further increase in precision may be obtained by tracking $J$ as an additional component in $\A$.


\section{Soundness Proof of Analysis with Joins}\label{s:correctness-joins}
The proof proceeds in the following manner.
We provide a modified instance of $\A$ that tracks, in addition to the information from \cref{s:unique}, for each mutex
the information of which abstract thread id (computed in the same manner as in \cref{s:unique}) did the last unlock
immediately succeeding a thread-local write to a global protected by that mutex, and how many such unlocks have happened since the start of the program.
For that, we choose $\A = (I^\sharp\times2^P) \times (\M \to (\mathbb{N}_0\times(I^\sharp\times2^P))$.
For convenience, we here (intermediately) allow control-point splitting even for an \emph{infinite} set.
The abstract function $\semA{u,\lock(a)}$ for a mutex $a$, e.g., is given by
\[
  \begin{array}{l}
    \semA{u,\lock(a)}\,((i,C),H)\,((i',C'),H') = \\[1ex]
    \;\; \begin{cases}
      \emptyset \qquad\qquad\qquad\qquad\qquad\qquad\qquad\qquad\qquad\qquad \neg \textsf{can\_be\_started}\,(i,C)\,(M'\,a)_2 \span \\
      \left\{\left(\begin{array}{lll}
        (i,C),\\
        \{ a \mapsto H\,a \mid a\in\M, (M\,a)_1 \geq (H'\,a)_1 \}\\
        \cup \{ a \mapsto H'\,a \mid a\in\M, (M\,a)_1 < (H'\,a)_1 \}
      \end{array}\right)\right\}
        & \text{else}

    \end{cases}
  \end{array}
\]
where $(\cdot)_k$ is shorthand for accessing the $k$-th component of a tuple.
The concretization for abstract values at unknowns at a mutex $[a,\Cluster,((i,C),H)]$ is defined such that it contains
any local trace ending in an $\unlock(a)$
in which the last thread-local write to a global in $\GM\,[a]$, was by the thread with the thread \emph{id} $(H\,a)_2$.

For local traces, we again introduce a function
$\beta: \T \to (2^\I \times (\M \to (\Vars\to\V) \times 2^\G \times (\Vars \to \V))$ which we will then use
to define appropriate concretization functions.
Let $\lTlW_g: \T \to ((\mathbb{N}_0\times\N\times\Sigma) \times \A \times (\mathbb{N}_0\times\N\times\Sigma)) \cup \{\bot \}$ be a function
to extract the last thread-local write to $g$ if it exists, and return $\bot$ otherwise.

If there is a thread-local lock for a mutex $a$, there also is a \emph{last} thread-local lock of $a$.
Let $\lTlL_a: \T \to ((\mathbb{N}_0\times\N\times\Sigma) \times \A \times (\mathbb{N}_0\times\N\times\Sigma)) \cup \{\bot \}$ be a function
to extract this last thread-local lock of $a$ if it exists, and return $\bot$ otherwise.
Analogously for $\lTlU_a$.

Let  $\jLj: \T \to 2^{((\mathbb{N}_0\times\N\times\Sigma) \times \A \times (\mathbb{N}_0\times\N\times\Sigma))}$ be a function
to extract from a local trace the set of all calls to join that are \emph{join-local} to it.

Let $\lJluW_a: \T \to ((\mathbb{N}_0\times\N\times\Sigma) \times \A \times (\mathbb{N}_0\times\N\times\Sigma)) \cup \{\bot\}$ be a function
to extract, for a mutex $a$, the first \emph{join-local} $\unlock(a)$ action that immediately succeeds the last
\emph{join-local} write to a global in $\GM\,[a]$. If mutex $a$ is currently held, $\lJluW_a$ only considers the subtrace that ends
with the ego-thread acquiring $a$. 

Let $\luW_a: \T \to ((\mathbb{N}_0\times\N\times\Sigma) \times \A \times (\mathbb{N}_0\times\N\times\Sigma)) \cup \{\bot\}$ be a function
to extract, for a mutex $a$, the first $\unlock(a)$ action that immediately succeeds the last
write to a global in $\GM\,[a]$.

Also, for a node $\bar u'$ appearing in a local trace $t$, let $(\bar u')\downarrow_t$ the local sub-trace ending in that node.

\[
  \begin{array}{lll}
    \beta\,t &=& (J,L,W,r) \text{ where }\\[1ex]
    J &=& \{ \sigma_i\,x' \mid  ((i,u_i,\sigma_i),x{=}\join(x'),\_) \in \jLj\,t \}\\
    L &=& \{ (a,\Cluster) \mapsto \left\{ x \mapsto t'(x) \mid g\in (\G\cup\X) \right\} \mid  a\in\M, \Cluster\in\Clusters_a,\\
    && \qquad \lJluW_a\,t=\bot, t' = (u_0,0,\sigma_0)\downarrow_t  \}\\
    && \cup \{ a \mapsto \left\{ x \mapsto t'(x) \mid g\in (\G\cup\X) \right\} \mid  a\in\M,\\
    && \lJluW_a\,t=(\bar u_i,\_,\_),  t' = (\bar u_i)\downarrow_t \}  \\
    W &=& \bigcup_{a\in\M} \{ g \mid g\in\MM[a], \lTlW_g\,t = (\bar {u_i},\_,\_),\\
    && \qquad ((\lTlU_a\,t = (\bar {u_j},\_,\_) \land \bar {u_j} \leq  \bar {u_i} ) \lor \lTlU_a\,t = \bot)  \}\\
    r &=& \left\{ x \mapsto t(x) \mid x\in(\X \cup \G) \right\}
  \end{array}
\]
Based on this, we provide dedicated concretization functions for the unknowns of the constraint system:
\[
  \begin{array}{lll}
  \gamma_{u,S,((i,C),H)}(J^\sharp,L^\sharp,W^\sharp,r^\sharp) = \{ t \mid t \in \T_S \cap \T_{(i,C)},\loc\,t = u, (J,L,W,r) = \beta\,t,\\
  \qquad \bigcup \{\gamma_{\I^\sharp} i \mid i\in J^\sharp\} \subseteq J,\\
  \qquad \forall a\in\M, \Cluster\in\Clusters_a: \exists\,v: (L\,(a,\Cluster) \cup \{\returnVar \mapsto v\}) \in \gamma_\R\,(L^\sharp\,(a,\Cluster)), \\
  \qquad W^\sharp \supseteq W, \exists\,v: (r' \cup \{\returnVar \mapsto v\}) \in \gamma_\R\,r^\sharp\\
  \quad \}
  \end{array}
\]
\[
  \begin{array}{lll}
  \gamma_{((i,C),H)}(J^\sharp,L^\sharp,r^\sharp) = \{ t \mid t \in \T_{(i,C)}, \last\,t = (\return\,x), (J,L,W,r) = \beta\,t,\\
  \qquad \bigcup \{\gamma_{\I^\sharp} i \mid i\in J^\sharp\} \subseteq J,\\
  \qquad \forall a\in\M: \exists\,v: (L\,a \cup \{\returnVar \mapsto v\}) \in \gamma_\R\,(L^\sharp\,a),\\
  \qquad \exists\,v: (r' \cup \{\returnVar \mapsto v\}) \in \gamma_\R\,r^\sharp\\
  \quad \}
\end{array}
\]
\[
  \begin{array}{lll}
  \gamma_{a,\Cluster,((i,C),H)}(r) =\{ t \mid t \in \T, (\last\,t = \unlock(a),\\
  \qquad \luW_a\,t=(\bar u_i,\_,\_),\\
  \qquad t' = (\bar u_i)\downarrow_t, t' \in \T_{(H\,a)_2}, (J,L,W,r) = \beta\,t',\\
  \qquad \exists\,v: (\beta\,t\cup \{\returnVar \mapsto v\}) \in \gamma_\R\,r\\
  \quad \} \cup \{t\in\T_{(H\,a)_2}\cap\init, (J,L,W,r) = \beta\,t, \exists\,v: (\beta\,t\cup \{\returnVar \mapsto v\}) \in \gamma_\R\,r\} \\
  \end{array}
\]
and remark that these concretization functions are monotonic.

Side-effecting in the abstract constraint system then can be abandoned whenever the ego thread definitely did not
write to any global from $\GM\,[a]$ since acquiring $a$.
The latter holds whenever $W^\sharp \cap \GM\,[a] = \emptyset$ at the given $\unlock(a)$.
By the same argument, whenever the ego thread has actually written a global from $\GM\,[a]$
since acquiring $a$, it's thread \emph{id} coincides with the thread \emph{id} of the thread
executing the last $\unlock(a)$ after a write to any global from $\GM[a]$.
The correctness proof of this construction follows along the same lines as the proofs of \cref{t:sound} and \cref{t:refined}.

In the last step,
control-point splitting at mutexes is then reduced to only consider the thread \emph{id} of the ego-thread
at the last $\unlock(a)$ immediately succeeding the last write to a global protected by $a$, i.e., $H\,a$.
Likewise, control-point splitting at program points and thread \emph{id}s is reduced back to the original $\A$ information proposed in \cref{s:unique}.
We remark that in this way, the infinite splitting of control-points disappears once again.
\qed

\section{One-element Clusters}\label{e:one}
  The following program illustrates that one-element clusters cannot be abandoned. Assume that $a$ protects both $g$ and $h$. 
   \begin{center}
      \begin{minipage}[t]{4cm}
    \begin{minted}{c}
    main:
     x = |\op{create}|(t1);
     y = |\op{create}|(t2);
     |\op{lock}|(a);
     h = 31;
     |\op{unlock}|(a);
     |\op{lock}|(a);
     h = 12;
     |\op{unlock}|(a);
     |\op{lock}|(a);
     // ASSERT(g<=h);  (1)
     // ASSERT(h==12); (2)
     |\op{unlock}|(a);
    \end{minted}
      \end{minipage}
      \begin{minipage}[t]{4.3cm}
    \begin{minted}{c}
    t1:
     |\op{lock}|(a);
     g =- 1;
     //ASSERT(g<=h);  (3)
     |\op{unlock}|(a);
     return 0;

    t2:
     |\op{lock}|(a);
     h = ?;
     h = 12;
     |\op{unlock}|(a);
     return 0;
    \end{minted}
      \end{minipage}
    \end{center}
  When running the clustered analysis with the cluster $\Clusters_a = \{\{g,h\}\}$ alone,
  the side-effect at the $\unlock(a)$ in $t_1$ preserves the relationship $g \leq h$,
  implying that the assertions (1) and (3) succeed.
  No precise information on the value of $h$ is preserved at the unknown $[a,\{g,h\},t_1]$.
  Consequently, when the \emph{main} thread performs a $\lock(a)$ for the third time,
  the assertion (2) cannot be verified.
  %
  A clustered analysis, though, that additionally tracks the cluster $\{h\}$,
  will record $h=12$ at $[a,\{h\},t_2]$ and nothing at $[a,\{h\},t_1]$.
  Therefore, assertion (2) can be verified.

\section{Extended Description of Benchmarks}
\label{a:more-experiments}
Here, we provide additional details on some of the benchmarks
presented in \cref{s:experiments}.

\subsection{\textsc{Goblint} Benchmark Suite}
We considered the benchmarks from~\cite{traces1}.
The benchmarks were augmented with asserts which were generated by our tool, using the \emph{Clusters} configuration.
We excluded \texttt{ypbind} from the experiments, as it spawns a thread from an unknown function pointer that our analysis of thread \emph{id}s cannot handle.
We executed our tool, as well as \textsc{NR-Goblint} and \textsc{Duet} on the resulting benchmarks.
Results are shown in \cref{tab:sas}.
For nine benchmarks, \textsc{NR-Goblint} and all configurations of our analyzer could prove all asserts, while our runs of \textsc{Duet}
did not produce valid results.
In the table, the results for these benchmarks are summarized in one row.
For the three remaining benchmark programs, using the \emph{Octagon} configuration instead of \emph{Interval} resulted in more proven asserts.
Interestingly, \textsc{NR-Goblint} could prove more asserts than our \emph{Interval} configuration.
Adding the analysis for thread \emph{ids} to our tool yielded more proven asserts on \texttt{pfscan}.
Indeed, for this set of benchmarks, all asserts can be proven with the \emph{TIDs} configuration.

\begin{table}[t]
    \centering
    \caption{Results of executing our analyzer (with presented analyses), \textsc{NR-Goblint}~\cite{traces1} (with intervals) and \textsc{Duet}~\cite{Farzan12} on the benchmark set by~\citet{traces1}, for which we automatically generated invariants using the analysis from \cref{s:clustering}.
        Checkmark~(\ding{51}) indicates all invariants proven and otherwise the number of proven invariants is given.
        For brevity, we summarize benchmarks that were fully proven by all analyzers except \textsc{Duet}, which reported on fewer invariants than present or timed out, so we consider those results invalid~(---).}
    \label{tab:sas}
    \newcommand{\s}{\cellcolor{green!30}\ding{51}}
\newcommand{\e}{\cellcolor{red!30}\ding{55}}
\newcommand{\w}[1]{\cellcolor{yellow!30}{#1}}
\newcommand{\invalid}{---}
\begin{tabular}{lccccc@{\hspace{1.5em}}cc}
    \toprule
    & & \multicolumn{4}{@{}c@{\hspace{1.5em}}}{Our analyzer} & \\
    \cmidrule(lr{1.6em}){3-6}
    Benchmark & Invariants & \parbox[c]{1.1cm}{\centering Interval \\ (Sec.~\labelcref{S:BASE})} & \parbox[c]{1.2cm}{\centering Octagon \\ (Sec.~\labelcref{S:BASE})} & \parbox[c]{1cm}{\centering TIDs \\ (Sec.~\labelcref{S:SELF-EXCLUDED})} & \parbox[c]{1.2cm}{\centering Clusters \\ (Sec.~\labelcref{s:clustering})} & \parbox[c]{2cm}{\centering \textsc{NR-Goblint} \\ w/ interval} & \textsc{Duet} \\
    \midrule
    pfscan & 221 & \w{14} & \w{32} & \s & \s & \w{165} & \invalid \\
    aget & 10 & \w{2} & \s & \s & \s & \s & \invalid \\
    ctrace & 1448 & \w{1130} & \s & \s & \s & \w{1407} & \invalid \\
    \addlinespace
    (Other) & 0--200 & \s & \s & \s & \s & \s & \invalid \\
    \bottomrule
\end{tabular}

\end{table}

\subsection{\textsc{Watts} Benchmark Suite}
While we were not able to run \textsc{Watts}, we executed our analysis on the benchmarks reported on in~\cite{Kusano16}.
We took the benchmarks as available from the GitHub repository\footnote{https://github.com/markus-kusano/watts}.
The benchmarks proposed in this paper consist of two set of C-source files containing multi-threaded code with asserts.
Unlike the other sets of benchmarks considered, there was a significant number of \texttt{assert(0)} statements in the benchmark files,
where code should be proven to be unreachable.
Consequently, for these benchmarks, we include the number of asserts that could be proven unreachable in the number of verified asserts we report.

The first set of benchmarks consists of 37 C-files, originating from different sources, that were adapted for~\cite{Kusano16}.
We took the benchmark set as-is, except for removing an obviously misplaced semicolon in the \texttt{wdt977\_02} benchmark
that rendered one \texttt{assert(0)} to be reachable in the concrete.
The analysis setting \emph{TIDs} succeeded to verify that none of the asserts in 31 of these benchmarks fail.
The six benchmark files for which in total seven asserts could not be proven, contain data-dependent thread-synchronization that our tool cannot handle.

The second set of benchmarks consists of five versions of the benchmark \texttt{i8xx\_tco\_03}, contained in the first benchmarks set, instrumented to create different numbers of threads.
In~\cite{Kusano16}, the number of threads created in the benchmarks varies from 30 to 70.
In the repository, there were two files creating 40 threads, and no file that created 30, rendering one test case redundant.
Thus, we removed ten thread creates from \texttt{i8xx\_tco\_03\_thr01}.
We also fixed the number of function stubs in that file to be the same as in the other benchmark files.

The runtimes of our tool can be seen in \cref{tab:watts-scale}.
Our most expensive analysis
takes around two seconds to complete on this benchmark.
While exact runtimes are not reported in~\cite{Kusano16}, the graph (found in Fig. 11 of that paper) indicates that the runtime of their
most expensive analysis was close to 400 seconds, while the least expensive configuration still took more than 200 seconds on the
benchmark creating 70 threads.
We remark that while runtimes reported for \textsc{Watts} in~\cite{Kusano16} were obtained on a different machine and the numbers are thus not directly comparable,
the comparison is still meaningful as the magnitudes differ greatly.

\begin{table}[t]
    \centering
    \caption{Runtimes, in seconds, of our analyzer on the five scalability benchmarks from~\cite{Kusano16}.
    The second column indicates the number of concrete threads in the benchmark program (including the \texttt{main} thread).
    Runtimes are considerably lower than those reported for \textsc{Watts}, but were obtained on different hardware.}
    \label{tab:watts-scale}
    \newcommand{\s}{\cellcolor{green!30}\ding{51}}
\newcommand{\e}{\cellcolor{red!30}\ding{55}}
\newcommand{\w}[1]{\cellcolor{yellow!30}{#1}}
\newcommand{\invalid}{---}
\begin{tabular}{lccccc@{\hspace{1.5em}}cc}
    \toprule
    & & \multicolumn{4}{@{}c@{\hspace{1.5em}}}{Our analyzer} & \\
    \cmidrule(lr{1.6em}){3-6}
    Benchmark  & \#Threads &\parbox[c]{1.1cm}{\centering Interval \\ (Sec.~\labelcref{S:BASE})} & \parbox[c]{1.2cm}{\centering Octagon \\ (Sec.~\labelcref{S:BASE})} & \parbox[c]{1cm}{\centering TIDs \\ (Sec.~\labelcref{S:SELF-EXCLUDED})} & \parbox[c]{1.2cm}{\centering Clusters \\ (Sec.~\labelcref{s:clustering})}  \\
    \midrule
    i8xx\_tco\_03\_thr01 & 31  & 0.15 & 0.16 & 0.43 & 0.51 &  \\ 
    i8xx\_tco\_03\_thr02 & 41 & 0.17 & 0.17 & 0.66 & 0.77 &  \\
    i8xx\_tco\_03\_thr03 & 51 & 0.19 & 0.19 & 1.01 & 1.11 &  \\
    i8xx\_tco\_03\_thr04 & 61 & 0.20 & 0.20 & 1.35 & 1.45 &  \\
    i8xx\_tco\_03\_thr05 & 71 & 0.23 & 0.22 & 1.83 & 1.99 &  \\
    \bottomrule
\end{tabular}

\end{table}

\subsection{Comparison with \textsc{Duet}}
We considered comparing to \textsc{Duet} on the benchmark set proposed in~\cite{Farzan12},
but encountered problems.
The archive of benchmarks was obtained from the \textsc{Duet} website\footnote{http://duet.cs.toronto.edu/}.
This archive contains (1)~the C source files of a set of Linux device drivers, without a harness function;
(2)~a version of these drivers as binary \texttt{goto}-files, that were instrumented with \textsc{DDVerify} and compiled with an unknown version of \textsc{Cbmc}.
However, neither the current version of \textsc{Duet} accepts \texttt{goto}-files as input,
nor any of the other tools considered.
We were also not able to decompile the \texttt{goto}-files using the current version of \textsc{Cbmc}.
Thus, these benchmarks were not used for experiments in the present paper.

While we managed to run \textsc{Duet} successfully on some benchmarks, our configuration of the tool did not produce valid results for others:
For dealing with code containing function calls, \textsc{Duet} relies on inlining.
As the inlining implemented in the most recent version of \textsc{Duet} available at the time of writing was not working on some examples,
we contacted the author, who sent us a fixed version of the module responsible for inlining.
In the experiments, we executed the tool with this implementation of inlining (and Octagons enabled);
however, there were still cases in which our configuration of the tool reported a too low number of reachable asserts, indicating
that some reachable code was not considered by the tool.
Thus, for these benchmarks, no results are reported for \textsc{Duet}.
Further, we do not report results for \textsc{Duet} when execution did not complete within 15 minutes.

\ignore{
Thus, we had to instrument the driver source files with \textsc{DDVerify} ourselves.

For dealing with function calls, \textsc{Duet} relies on inlining.
As the inlining implemented in the most recent version of \textsc{Duet} available was not working,
we contacted the author, who sent us a fixed version of the module responsible for inlining.
Still, this implementation of inlining was not capable of dealing with calls to (statically known) function pointers,
which occur when creating harnesses for the Linux device drivers with \textsc{DDVerify}.
We tried the current version of \textsc{Cbmc}, but this did not produce valid C-code on the instrumented device drivers.
Thus, we were not able to (meaningfully) redo the experiments presented in~\cite{Farzan12} with the current version of the tool.
}

\ignore{
\subsection{Our Benchmarks}
We give some further details on the benchmark results reported in \cref{s:experiments}.
\cref{tab:toy} shows the results of running our analyzer and \textsc{Duet} on a set of
our regression tests.
The results show that analysis precision profits from relational information, when comparing
the analysis using Octagons with Intervals.
Further improvements are obtained by TIDs and using clusters of size at most 2.
Multiple of the test cases cannot be handled by \text{Duet}

\todo{Describe toy benchmarks}
\begin{table}
    \centering
    \caption{Results of executing our analyzer (with presented analyses) and \textsc{Duet}~\cite{Farzan12} on our benchmark set.
        Checkmark~(\ding{51}) indicates all assertions proven, crossmark~(\ding{55}) indicates no assertions proven and anything in between is indicated by the number of proven assertions.
        In some cases \textsc{Duet} reported on fewer assertions than present, so we consider those results invalid~(---).}
    \label{tab:toy}
    \newcommand{\s}{\cellcolor{green!30}\ding{51}}
\newcommand{\e}{\cellcolor{red!30}\ding{55}}
\newcommand{\w}[1]{\cellcolor{yellow!30}{#1}}
\newcommand{\invalid}{---}
\begin{tabular}{lccccc@{\hspace{1.5em}}c}
    \toprule
     & & \multicolumn{4}{@{}c@{\hspace{1.5em}}}{Our analyzer} & \\
    \cmidrule(lr{1.6em}){3-6}
    Benchmark & Assertions & \parbox[c]{1.1cm}{\centering Interval \\ (Sec.~\labelcref{S:BASE})} & \parbox[c]{1.2cm}{\centering Octagon \\ (Sec.~\labelcref{S:BASE})} & \parbox[c]{1cm}{\centering TIDs \\ (Sec.~\labelcref{S:SELF-EXCLUDED})} & \parbox[c]{1.2cm}{\centering Clusters \\ (Sec.~\labelcref{s:clustering})} & \textsc{Duet} \\
    \midrule
    branched-not-too-brutal & 1 & \s & \s & \s & \s & \s \\
    tid-curious & 1 & \s & \s & \s & \s & \s \\
    escape-local-in-pthread-simple & 2 & \s & \s & \s & \s & \w{1} \\
    \addlinespace
    traces-max-simple & 1 & \e & \s & \s & \s & \e \\
    queuesize-const & 20 & \e & \s & \s & \s & \invalid \\
    airline & 1 & \e & \s & \s & \s & \s \\
    traces-min-rpb1 & 2 & \e & \s & \s & \s & \w{1} \\
    traces-min-rpb2 & 2 & \e & \s & \s & \s & \e \\
    traces-cluster-based & 4 & \e & \s & \s & \s & \e \\
    traces-write-centered-problem & 1 & \e & \s & \s & \s & \e \\
    mine14 & 1 & \e & \s & \s & \s & \e \\
    mine14-5b & 1 & \e & \s & \s & \s & \e \\
    traces-write-centered-vs-... & 2 & \e & \s & \s & \s & \e \\
    \addlinespace
    tid-toy3 & 2 & \e & \e & \s & \s & \w{1} \\
    pfscan-workers-strengthening & 1 & \e & \e & \s & \s & \s \\
    tid-toy5 & 2 & \e & \e & \s & \s & \w{1} \\
    tid-toy1 & 1 & \e & \e & \s & \s & \e \\
    tid-toy8 & 3 & \e & \e & \s & \s & \e \\
    tid-toy9 & 1 & \e & \e & \s & \s & \e \\
    sync & 2 & \e & \e & \s & \s & \e \\
    tid-toy10 & 1 & \e & \e & \s & \s & \e \\
    tid-toy11 & 1 & \e & \e & \s & \s & \e \\
    tid-toy6 & 2 & \e & \e & \s & \s & \invalid \\
    tid-toy7 & 2 & \e & \e & \s & \s & \invalid \\
    no-loc & 1 & \e & \e & \s & \s & \e \\
    \addlinespace
    traces-mutex-meet-cluster12 & 2 & \e & \e & \w{1} & \s & \e \\
    traces-mutex-meet-cluster2 & 1 & \e & \e & \e & \s & \s \\
    \bottomrule
\end{tabular}

\end{table}
}

\end{document}